\newcommand{\ud}{\;\mathrm{d}} 
\newcommand{\uud}{\mathrm{d}}
\providecommand{\K}{\mathcal{K}}
\providecommand{\Pe}{\mathcal{P}}
\providecommand{\ol}{\overline}
\providecommand{\eps}{\varepsilon}
\providecommand{\vphi}{\varphi}
\providecommand{\supp}{\operatorname{supp}}
\providecommand{\exp}{\operatorname{exp}}
\providecommand{\La}{\mathcal{L}}
\providecommand{\F}{\mathcal{F}}
\newtheorem{theorem}{Theorem}[section]
\newtheorem{corollary}[theorem]{Corollary}
\newtheorem{proposition}[theorem]{Proposition}
\newtheorem{definition}[theorem]{Definition}
\newtheorem{lemma}[theorem]{Lemma}
\newtheorem{notation}[theorem]{Notation}
\newtheorem{remark}[theorem]{Remark}
\numberwithin{equation}{section}
\numberwithin{theorem}{section}
\newcommand{\qed}{\hfill$\Box$}
\newenvironment{proof}{\begin{trivlist}\item[]{\em Proof:}\/}{\qed\end{trivlist}}
\newenvironment{proofof}[1][Proof]{\noindent \textit{#1.} }{\ \qed}
\newcommand{\E}{{\mathbb E}}
\newcommand{\Z}{{\mathbb Z}}
\newcommand{\Reals}{{\mathbb R}}
\newcommand{\Complex}{{\mathbb C\hspace{0.05 ex}}}
\newcommand{\Naturals}{{\mathbb N}}
\newcommand{\cf}{{\mathbbm 1}}
\newcommand{\D}{\mathcal{D}}
\title{Convergence to the Landau equation from the truncated BBGKY hierarchy in the weak-coupling limit}
\author{ 
 Raphael Winter \thanks{\emailalessia}  \\[1em]  
\UBaddress}
\date{\today}
\newcommand{\email}[1]{E-mail: \tt #1}
\newcommand{\emailalessia}{\email{raphaelwinter@iam.uni-bonn.de}}
\newcommand{\UBaddress}{\em University of Bonn, Institute for Applied Mathematics\\
\em Endenicher Allee 60, D-53115 Bonn, Germany}
\date{\today}
\begin{document} 
  
\maketitle 
\begin{abstract} 
	We consider the evolution of the one-particle function in the weak-coupling limit in three space dimensions, obtained by truncating the BBGKY hierarchy under a propagation of chaos approximation. For this dynamics, we rigorously show the convergence to a solution
	of the Landau equation, keeping the full singularity of the Landau kernel. This resolves the issue arising from \cite{velazquez_non-markovian_2018}, where the singular region has been removed artificially. Since the singularity appears in the Landau equation due to the geometry of particle interactions, it is an intrinsic physical property of the weak-coupling limit which is crucial to the understanding of the transition from particle systems to the Landau equation.
\end{abstract}

\tableofcontents

\newpage


\section{Introduction} \label{Sec:Introduction}

The derivation of the Landau equation in the so-called weak-coupling limit is an open problem in kinetic theory. In the following we prove that the derivation is valid for short times, if we truncate the BBGKY hierarchy under the assumption of propagation of chaos. This requires the stability of the leading order nonlinear evolution up to macroscopic times of order one. A similar result has been proved for microscopic times in \cite{bobylev_particle_2013}, showing consistency with the desired limit. Furthermore, in \cite{velazquez_non-markovian_2018} the result is proved for short macroscopic times, when the interaction of particles with small relative velocity is removed. The objective of this paper is to prove the result without this cutoff.

To fix ideas, we first recall the weak-coupling limit. Consider a countable collection $(X_j,V_j)_{j\in J}$ of particles in the three dimensional phase space $\Reals^3\times \Reals^3$, randomly distributed according to a Poisson point process with intensity measure $\lambda(\uud{x}\uud{v})=\mu_\eps u_0(v) \ud{x}\uud{v}$. Here $u_0(v)$ is a probability density and $\eps>0$ is the scaling parameter. In particular, the random distribution of particles is homogeneous in space and there are on average $\mu_\eps$ many particles in a spatial unit volume. We fix a rotationally symmetric potential $\phi(x)$, and define the rescaled potential $\phi_\eps(x)=\phi(x/\eps)$. Now consider the following Newtonian dynamics:
\begin{align*}
	\dot{X}_j(t) = V_j(t), \quad \dot{V}_j(t) = -\eps^\frac12 \sum_{k\in J} \nabla \phi_\eps(X_j(t)-X_k(t)). 
\end{align*}
If we rescale the density of particles as $\mu_{\eps}= \mu \eps^{-3}$, the trajectories of particles
are governed by a large number of small deflections. It is expected that a central limit theorem for the collisions holds, so that in the limit we observe diffusion in the velocity variable. More precisely, let $u_\eps(t,x,v)$ be the density of particles defined by the expected number $n(t,A)$ of particles in a set $A\subset\Reals^3 \times \Reals^3$ at time $t\geq0$: 
\begin{align}\label{eq:oneparticle} 
	\int_{A} u_\eps(t,x,v)\ud{x}\ud{v}= \E[n(t,A)].
\end{align}  
Due to the homogeneity of the intensity measure $\lambda$ in space, we have $u_\eps(t,x,v)=u_\eps(t,v)$.
It is believed that in the scaling limit $\eps\rightarrow 0$ as introduced above, we have $u_\eps \rightarrow u$, where $u$ is a solution to the spatially homogeneous Landau equation:
\begin{equation} \label{eq:Landau}
\begin{aligned}
\partial_t u(t,v) 	&= \sum_{i,j=1}^3 \partial_{v_i}  \left(\int_{\Reals^3} a_{i,j}(v-v')
(\partial_{v_j} - \partial_{v'_j})\big( u(t,v) u(t,v') \big)\ud{v'}  \right) \\
u(0,v)	&= u_0(v).
\end{aligned}
\end{equation} 
For physical and mathematical justifications of this equation we refer to \cite{bobylev_particle_2013,landau_kinetische_1936,lifshitz_course_1981,spohn_kinetic_1980}.
The matrix $a_{ij}$ can be explicitly expressed by the interaction potential $\phi$:
\begin{align} \label{eq:adef} 
a_{i,j}(w) &= \frac{\pi^2}{4}\int_{\Reals^3} k_ik_j \delta(k\cdot w) |\hat{\phi}(k)|^2 \ud{k} 	
= \frac{\Lambda}{|w|} \left(\delta_{i,j}-\frac{w_i w_j}{|w|^2}\right) \quad \text{for some $\Lambda>0$}.
\end{align} 
Contrary to the case of the Boltzmann equation, a rigorous proof of the convergence $u_\eps\rightarrow u$ to the Landau equation has not been obtained so far.  
The heuristic justification of the weak-coupling limit to the Landau equation is based on 
the propagation of chaos principle. The principle, which is also crucial in the theory of the Boltzmann equation, asserts that with a high probability particles are uncorrelated prior to collision, and thereby experience a sequence of independent random deflections. 
More precisely, one can generalize the function $u_\eps$ defined in \eqref{eq:oneparticle} to functions $u_{\eps,n}$ describing the distribution of $n$-tuples of particles. Then the principle can be stated as:
\begin{align} \label{eq:propchaos}
	u_{\eps,n}(t,x_1,v_1,\dots,x_n,v_n) \approx \prod_{i=1}^n u_\eps (t,x_i,v_i).
\end{align}
Of course, \eqref{eq:propchaos} only holds for $\eps\rightarrow 0$, since particles will develop correlations through interaction. 
Controlling the propagation of chaos is a crucial step to obtain
a full derivation of the Landau equation, but we will not attempt to prove this rigorously here. 

If we truncate the system on the level of correlations of $n+1$-tuples of particles, we obtain
an approximation of the one-particle density in the weak-coupling limit that
is expected to be accurate to the $n$-th order in $\eps\rightarrow 0$,  for details see \cite{velazquez_non-markovian_2018}. Hence the leading order dynamics 
in $\eps\rightarrow 0$ can be obtained by only keeping pair correlations. This dynamics is
given by the equation:
\begin{equation} \label{eq:Main}
\begin{aligned} 
\partial_t u_\eps 	&= 	\frac1\eps \nabla_v \cdot \left(\int_0^t K[u_\eps(s)]\big(\frac{t-s}\eps,v\big) \nabla u_\eps(s,v)  
-  P[u_\eps(s)]\big(\frac{t-s}\eps,v\big) u_\eps(s,v)  \ud{s}\right)  \\ 
u_\eps(0,v)	&= u_0(v) , \\
K[u](t,v)			&:=	 \int \int \nabla \phi(x) \otimes \nabla \phi(x-t(v-v'))	u(v')  \ud{v'}\ud{x}\\
P[u](t,v)			&:=	\nabla_v \cdot K(t,v) = \int \int \nabla \phi(x) \otimes  \nabla\phi(x-t(v-v'))
\nabla u(v')  	\ud{v'}	\ud{x}.
\end{aligned}
\end{equation}
Our goal here is to show that the solutions of \eqref{eq:Main} converge to a solution
of the Landau equation \eqref{eq:Landau} as $\eps\rightarrow 0$.
Up to microscopic times, the result can be found in \cite{bobylev_particle_2013}, that is for $t\leq \eps$ as $\eps\rightarrow 0$.
In \cite{velazquez_non-markovian_2018}  the convergence is shown for times of order one, when $K$ is modified by putting a cutoff for $v-v'\approx 0$.

A crucial feature of the weak-coupling limit is the singularity $|w|^{-1}$ of the interaction
kernel $a_{i,j}$, see \eqref{eq:adef}. We stress the fact that the singularity appears in the limit, independent of the choice of interaction potential $\phi$. This can be explained considering the  momentum transfer of two interacting particles with velocities $v,v'$.
The duration of this transfer is proportional to the inverse relative velocity $|v-v'|^{-1}$, hence diverges for particles with very small relative velocity. This singularity is a key technical problem in the theory of the Landau equation and of the weak-coupling limit, see \cite{bobylev_particle_2013,guo_landau_2002,silvestre_upper_2017}. 
Furthermore, similar singularities appear in grazing collision limits from the Boltzmann equation to the Landau equation, and a number of equations with varying exponents of the singularity
have been studied in the literature (see: \cite{alexandre_landau_2004,desvillettes_spatially_2000,desvillettes_spatially_2000-1,villani_new_1998,villani_spatially_1998}). 
 
In this paper, we  prove the limit from \eqref{eq:Main} to the
Landau equation \eqref{eq:Landau}, keeping this important physical property of the system.
The technique presented here shows that singular operators of the form appearing
in the non-Markovian system \eqref{eq:Main} can be controlled using only the average-in-time dissipation of the equation that was proved and used in \cite{velazquez_non-markovian_2018}.

As in \cite{velazquez_non-markovian_2018}, we assume the system is initially close to the Maxwellian $m(v)$ and take the explicit potential $\phi$ with Fourier transform:
\begin{align} \label{FTpotential}
	\hat{\phi}(k) &= \frac1{(1+|k|^2)^\frac{3}{2}}.
\end{align}
The main result of this paper reads as follows. For the precise definition
of the function spaces, see Subsection~\ref{sec:func}.
\begin{theorem} \label{thm1}
	Let $m_0,\sigma>0$ and $m(\sigma^2,m_0)$ be the Maxwellian distribution with mass $m_0$ and standard deviation $\sigma$:
	\begin{align} \label{defmaxwellian}
	m(\sigma^2,m_0)(v):= m_0 \frac{e^{-\frac12 \frac{|v|^2}{\sigma^2}}}{(\sigma \sqrt{2\pi})^3}.
	\end{align}
	Let $n\geq 12$ and $v_0 \in H^n_{\lambda}$ satisfy: 
	\begin{align*}
	0 \leq v_0(v) \leq C e^{-\frac12 |v|}.  
	\end{align*}
	There exist $\delta_1,\eps_0 \in (0,\frac12]$  such that for all  $\eps,\delta_2 \in (0,\eps_0]>0$  the equation
	\begin{equation} \label{thmepseq}
	\begin{aligned} 
	\partial_t u_\eps 	= 	&\frac1\eps \nabla \cdot \left(\int_0^t K[u_\eps(s)]\left(\frac{t-s}\eps,v\right) \nabla u_\eps(s,v) \ud{s}\right) \\
	-  	&\frac1\eps \nabla \cdot \left(\int_0^t P[u_\eps(s)]\left(\frac{t-s}\eps,v\right) u_\eps(s,v)  \ud{s}\right)  \\ 
	u_\eps(0,\cdot)	&= u_0(\cdot) = m(v) + \delta_2 v_0(v)
	\end{aligned}	
	\end{equation}
	has a strong solution $u_\eps \in C^1([0,\delta_1]; H^{n-2}_{\lambda})$ up to time $\delta_1$.
	Furthermore, along a sequence $\eps_j\rightarrow 0$ the $u_{\eps_j}$ converge weakly to 
	$u_{\eps_j} \rightharpoonup^* u$ in $\D'$, and 
	the function $u \in  C^1([0,\delta_1];H^{n-4}_{\lambda})$ solves the Landau equation up to times $0\leq t\leq \delta_1$:
	\begin{equation} \label{uequation}
	\begin{aligned}
	\partial_t 	u 		&= \nabla \cdot \left( \K[u] \nabla u\right) - \nabla \cdot \left( \Pe[u]  u\right) \\
	u(0,v)			&= m(v) + \delta_2 v_0(v)  	\\
	\K[u](v)				&=  \frac{\pi^2}{4}\int (k \otimes k) |\hat{\phi}(k)|^2 \delta(k\cdot(v-v')) u(v') \ud{k} \ud{v'} \\
	\Pe[u](v)			&=  \frac{\pi^2}{4}\int (k \otimes k) |\hat{\phi}(k)|^2 \delta(k\cdot(v-v'))  \nabla u(v') \ud{k} \ud{v'}. 			
	\end{aligned}
	\end{equation}
\end{theorem}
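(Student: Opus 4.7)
The plan is a three-stage argument: uniform-in-$\eps$ a priori estimates in $H^n_\lambda$, local existence of $u_\eps$ on the macroscopic interval $[0,\delta_1]$, and passage to the limit via compactness plus identification. Since $u_\eps$ is close to the Maxwellian we write $u_\eps = m + w_\eps$ with $w_\eps$ small of order $\delta_2$, so that the memory operator $K[u_\eps]$ splits into a dominant linear part $K[m]$ that is completely explicit, and a quadratic remainder that can be bounded by $\delta_2$ times $\|w_\eps\|_{H^n_\lambda}$.

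First I would change variables $\tau = (t-s)/\eps$ to rewrite the memory integral in \eqref{thmepseq} as $\int_0^{t/\eps} K[u_\eps(t-\eps\tau)](\tau,v)\,\nabla u_\eps(t-\eps\tau,v)\,\ud\tau$ and pair the equation with $w_\eps$ in $H^n_\lambda$. The crucial structure is the Fourier representation
\begin{align*}
K[u](\tau,v) \;=\; C\int k\otimes k\,|\hat\phi(k)|^2\,e^{-i\tau k\cdot(v-v')}\,u(v')\,\ud k\,\ud v',
\end{align*}
since, after symmetrization in $k\to -k$, the time-integrated operator becomes nonnegative and converges formally to the Landau kernel $\K[u]$ as $t/\eps\to\infty$. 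This provides the coercive structure needed to close the energy estimate, and the average-in-time dissipation of \cite{velazquez_non-markovian_2018} supplies the quantitative bound that absorbs the singular $|v-v'|^{-1}$ contribution into a positive dissipation term. Local-in-time existence then follows from a contraction in $C([0,T];H^{n-2}_\lambda)$ for small $T$, extended up to the uniform time $\delta_1$ via these bounds.

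The main obstacle is precisely the uniform control of the singular behaviour at $v\approx v'$. The pointwise kernel $K[u](\tau,v)$ does \emph{not} decay in $\tau$ in the bad frequencies $k\cdot(v-v')\approx 0$; only its time average does. One therefore cannot afford to bound $|K[u_\eps(s)]|$ crudely inside the memory integral. The strategy I would adopt is to split the memory integral into a principal Landau-like piece (where the full oscillatory cancellation takes place and the limit is exactly $\K[u_\eps(t)](v)\,\nabla u_\eps(t,v)$) plus a non-Markovian correction, the latter estimated by integration by parts in the Fourier variable $k$; the high regularity $n\geq 12$ supplies the derivatives needed to turn the formal $\delta(k\cdot w)$ singularity into an honest bound, and the time-averaged dissipation controls the total cost.

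For the passage to the limit, the uniform $C([0,\delta_1];H^n_\lambda)$ bound together with the equation gives $\partial_t u_\eps$ bounded in $H^{n-2}_\lambda$; Aubin--Lions then yields a subsequence $u_{\eps_j}\to u$ converging strongly in $C([0,\delta_1];H^{n-3}_\lambda)$ with $u\in C^1([0,\delta_1];H^{n-4}_\lambda)$. To identify $u$ as a Landau solution I would replace $u_\eps(t-\eps\tau)$ by $u_\eps(t)$ inside the memory integral at the cost of an error of order $\eps\tau$ multiplied by the integrable tail of $K$, then use the elementary identity
\begin{align*}
\int_0^\infty e^{-i\tau k\cdot w}\,\ud\tau \;=\; \pi\,\delta(k\cdot w) \;-\; i\,\mathrm{p.v.}\tfrac{1}{k\cdot w},
\end{align*}
whose imaginary part cancels under $k\to -k$, so that only the symmetric $\delta$-part survives. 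The resulting expression is exactly $\K[u](v)\,\nabla u(t,v)$, and the analogous manipulation for $P$ produces $\Pe[u](v)\,u(t,v)$, completing the identification of the limiting equation \eqref{uequation}.
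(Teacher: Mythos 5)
Your proposal captures the correct large-scale heuristics (energy pairing, coercivity of $\Re\La(K[m])$, Sokhotski--Plemelj identification of the limit), but it has two genuine gaps where the actual difficulty of the theorem lives.

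First, the core claim that "the average-in-time dissipation of \cite{velazquez_non-markovian_2018} supplies the quantitative bound that absorbs the singular $|v-v'|^{-1}$ contribution" is exactly what was \emph{not} available before this paper: in \cite{velazquez_non-markovian_2018} that singularity was removed by hand via a cutoff $\eta(v-v')$. Producing the absorption is the content of Section~\ref{sec:Critical} here. The dissipation $D^\alpha_{\eps,A}$ controls a Laplace-transformed weighted $L^2$ norm of $\nabla\La u_\eps$, but the term $\nabla\cdot K[u_\eps](\tau,v)$ behaves like $\int \frac{B(\tau,v-v')u_\eps(v')}{|v-v'|^2}\ud v'$ and cannot be put in $L^\infty$; the paper controls the $\theta$-convolution of $\Lambda[f]$ against $\La(u_\eps-u_0)$ by a dyadic decomposition of velocity space, the identification of $1/|v|$ as the Laplace Green's function (Lemma~\ref{lem:greensop}), a weighted $L^6$ Sobolev embedding on a Vitali covering by unit balls, and the $L^1$-in-$z$ bound of Lemma~\ref{lem:lowerfreez} obtained by bootstrapping the equation in Laplace variables. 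Your suggestion to "integrate by parts in $k$" and invoke $n\geq 12$ is too vague to substitute for this; it does not explain how the $|w|^{-1}$ singularity gets traded against the anisotropic weight $B_1$ in $D^\alpha_{\eps,A}$, which is the only place the extra smallness can come from.

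Second, the existence step: you assert a contraction in $C([0,T];H^{n-2}_\lambda)$ for small $T$, extended to $\delta_1$. This does not obviously work. The $\eps$-scaled memory kernel has $O(1/\eps)$ prefactor, so a naive contraction time degenerates as $\eps\to 0$; and even ignoring $\eps$, the singular convolution kernel does not give a Lipschitz map on $H^{n-2}_\lambda$ without the very estimates you are trying to prove. The paper instead mollifies the gradient (Notation~\ref{not:mollifyer}), proves existence of the mollified problem for fixed $\gamma,\eps>0$ (Lemma~\ref{lemexistence}), sets up a Schauder fixed point in the delicately constrained set $\Omega^{A,\delta}_{\tilde R,\delta_1,R,\eps}$ (whose definition encodes Laplace-transform bounds $\|\cdot\|_{E_A},\|\cdot\|_{F_{\eps,A}},\|\cdot\|_{G_{\eps,A}}$ required for the a priori estimate to close), and removes $\gamma$ and $\eps$ at the end. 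Relatedly, the uniform bound obtained is in $V^n_{A,\lambda}$ (a weighted $L^2$-in-time Laplace-type norm), not in $C([0,\delta_1];H^n_\lambda)$; the Aubin--Lions step as you phrase it would need this stronger bound, which is not proved. Finally, in the identification step, the claim that $K(\tau,v)$ has an "integrable tail" in $\tau$ uniformly is false precisely at $v\approx v'$; this is again the same singularity and is why the identification in the paper is done through the explicit $M_1,M_2$ structure rather than a crude replacement of $u_\eps(t-\eps\tau)$ by $u_\eps(t)$.
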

An analogous result can be found in \cite{velazquez_non-markovian_2018}, however the interactions of particles with small relative
velocity is cut. More precisely, the result is shown for modified kernels $K,P,\K, \Pe$ with a cutoff function $\eta(v-v')$ in the integrals in $v'$, where $\eta$ is a smooth function that cuts off at the origin.

We now discuss the difficulty associated to removing this cutoff. Heuristically, our technique for proving an a priori estimate for solutions of \eqref{thmepseq}, independent of $\eps>0$, works as follows.  We multiply the equation
\begin{align*}
	\partial_t  u_\eps = \frac1\eps  \nabla \cdot \left(\int_0^t K[u_\eps(s)]\left(\frac{t-s}\eps,v\right) \nabla u_\eps(s,v) - P[u_\eps(s)]\left(\frac{t-s}\eps,v\right) u_\eps(s,v)  \ud{s}\right) 
\end{align*} 
with $u_\eps(t,v)$ and integrate in time and space. This yields an estimate for $u_\eps\in L^2(\Reals^+ \times \Reals^3)$, provided we can estimate two terms of the form:
\begin{align}
	Q_1 &= \frac1\eps\int_0^\infty \int_{\Reals^3} \nabla  u_\eps(t,v)   \left(\int_0^t K[u_\eps(s)]\left(\frac{t-s}\eps,v\right) \nabla u_\eps(s,v)\right)\ud{v} \ud{t} \label{Q1} \\
	Q_2 &=-\frac1\eps\int_0^\infty \int_{\Reals^3}  \nabla u_\eps(t,v) \left(\int_0^t \nabla \cdot K[u_\eps(s)]\left(\frac{t-s}\eps,v\right) u_\eps(s,v)\right) \ud{v} \ud{t}. \label{Q2}
\end{align}
The key point in \cite{velazquez_non-markovian_2018} is to prove an estimate of the form $Q_1\geq D_1>0$, where $D_1$ is the square of some (relatively weak) weighted $L^2(\Reals\times \Reals^3)$ norm for the Laplace-transform  of $\nabla u_\eps$. We stress the fact that such an estimate does not hold
for general kernels $K$, but is a feature very specific to the kernel emerging
from the weak-coupling limit. Since we prove $Q_1$ to have the good sign, this part of the argument is not affected by removing the cutoff.

It turns out that we can extract an a priori estimate for $u_\eps$, if we can show that for $c\in(0,1)$ there exists $C>0$ such that $|Q_2|\leq c D_1 +C\|u\|_{L^2(\Reals^+\times \Reals^3)} $. Such an estimate is difficult to obtain
without the cutoff in the space of relative velocities. The problem can be illustrated using  limit kernel $\K$ in \eqref{uequation} as an example, which yields terms of the form
\begin{align} \label{eq:pbterm}
	\nabla \cdot K[u_\eps(s,\cdot )](r/\eps,v) \sim \int \frac{B(r/\eps,v-v')u_\eps(s,v')}{|v-v'|^2} \ud{v'}.
\end{align}
Here $B(s,v-v')$ is some smooth vector-valued function.
Since we only expect $u_\eps(t,v) \in L^2(\Reals^+\times \Reals^3)$ the integral in
\eqref{eq:pbterm} cannot be expected to be in $L^\infty$. 
Instead we use that  $Q_2$ (cf. \eqref{Q2}) involves an integral in time. To obtain a bound for $Q_2$ we therefore carefully study the properties  of time integrals of
products of: 1) $\nabla u_\eps$ which has Laplace transform bounded in the weighted $L^2$ norm given by $D_1$, 2) $\nabla \cdot K[u_\eps(s)]$, which is a vector-valued
function of the form \eqref{eq:pbterm}, and 3) the function $u_\eps\in L^2(\Reals^+\times \Reals^3)$. This is the key step of this paper and the subject of Section~\ref{sec:Critical}.

\section{Structure of the proof}  

In this section we present the steps of the proof of Theorem \ref{thm1}.
The proofs of the individual steps are carried out in the main body of this paper.
The crucial novelty in our result is the a priori estimate for the system with the full singularity, which we prove in Section~\ref{sec:Critical}. Results that are only minor adaptations of the
result in \cite{velazquez_non-markovian_2018} can be found in the Appendix.
Only Lemmas whose proofs are not affected by the singularity of the kernel will be taken from \cite{velazquez_non-markovian_2018} without repeating the proof.

 We start by introducing the framework
of function sets and spaces that we will use throughout the paper.

\subsection{Functional setting and notation} \label{sec:func} 
	\begin{notation} \label{not:mollifyer}
	Let $\vphi_\gamma$, $\gamma>0$ be a standard mollifier sequence on $\Reals^3$. We set ${}^\gamma \nabla f(v) := \nabla (\vphi_\gamma * f)$ for $\gamma\in(0,1]$, and ${}^0 \nabla f=\nabla f $. With $K$ as in \eqref{eq:Main} we set: 
	\begin{align} \label{eq:Pgamma}
	P_\gamma = {}^\gamma \nabla \cdot K.
	\end{align}
	For the Laplace and the Fourier transform we use the conventions:
	\begin{align}
	\La(u)(z) 	&= \int_0^\infty u(t) e^{-z t} \ud{t}, \\
	\hat{u}(k)		&= \frac{1}{(2\pi)^{3/2}} \int_{\Reals^3} u(v) e^{-i k \cdot v} \ud{v}.
	\end{align}
\end{notation} 
We recall Plancherel's identity for Laplace transforms:
\begin{lemma} \label{Plancherel}
	Let $A\geq1$ and  $\mu_A(\mathrm{d}{t}) := e^{-At} \ud{t}$. Then for $u,v \in L^2(\mu_A)$ we have:
	\begin{align} \label{eq:Plancherel}
		(2\pi)^{\frac12} \int_0^\infty e^{-At} \ol{u}(t) v(t) \mu_A(\mathrm{d}t) = \int_\Reals
		\ol{\La(u)}\left(\frac{A}{2}+i\omega\right)
		\La(v) \left(\frac{A}{2}+i\omega\right) \ud{\omega}.
	\end{align}
	Furthermore, we will write $\langle \cdot, \cdot \rangle$ for the complex scalar product given by:
	\begin{align}
		\langle X, Y \rangle = \sum_{i=1}^n \sum_{j=1}^m \ol{X}_{i,j} Y_{i,j}, \quad  \text{ for $X,Y \in \Complex^{n\times m}$}.
	\end{align}
\end{lemma}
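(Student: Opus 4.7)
The plan is to reduce the claimed identity to the classical Plancherel theorem on $L^2(\Reals)$ by recognizing the Laplace transform restricted to the vertical line $\{\Re z = A/2\}$ as a (suitably normalized) Fourier transform of a weighted function.

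First, to each $u \in L^2(\mu_A)$ I associate $\tilde u(t) := u(t)\, e^{-At/2}\, \mathbf{1}_{[0,\infty)}(t)$, extended by zero for $t<0$. The hypothesis $u \in L^2(\mu_A)$ translates to $\|\tilde u\|_{L^2(\Reals)}^2 = \int_0^\infty |u(t)|^2\, e^{-At}\, \ud{t} < \infty$, so that $\tilde u \in L^2(\Reals)$, and analogously for $v$. With the one-dimensional Fourier convention $\hat f(\omega) = (2\pi)^{-1/2} \int_\Reals f(t)\, e^{-i\omega t}\, \ud{t}$ induced from Notation~\ref{not:mollifyer}, a direct computation gives
\begin{equation*}
\widehat{\tilde u}(\omega) \;=\; \frac{1}{\sqrt{2\pi}}\int_0^\infty u(t)\, e^{-At/2}\, e^{-i\omega t}\, \ud{t} \;=\; \frac{1}{\sqrt{2\pi}}\, \La(u)\!\left(\tfrac{A}{2} + i\omega\right),
\end{equation*}
so up to a factor of $\sqrt{2\pi}$ the right-hand side of \eqref{eq:Plancherel} is nothing but $\int_\Reals \overline{\widehat{\tilde u}}(\omega)\, \widehat{\tilde v}(\omega)\, \ud{\omega}$, and analogously the left-hand side is $\int_\Reals \overline{\tilde u}(t)\, \tilde v(t)\, \ud{t}$ modulo the same constant.

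Second, I invoke the classical $L^2(\Reals)$ Plancherel identity $\int_\Reals \overline{\hat f}\, \hat g\, \ud{\omega} = \int_\Reals \overline{f}\, g\, \ud{t}$ with $f = \tilde u$, $g = \tilde v$. Substituting the two formulas just obtained and collecting the $(2\pi)^{\pm 1/2}$ factors together with the $e^{-At}$ weight (which is then reabsorbed into $\mu_A(\mathrm{d}t)$) produces exactly the identity \eqref{eq:Plancherel}. The second part of the lemma, fixing the notation for the complex scalar product $\langle\cdot,\cdot\rangle$, is a declaration and requires no argument.

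I do not expect any serious obstacle: the argument is essentially a one-line consequence of Plancherel on $\Reals$, and the only care required is consistent tracking of the normalization constants. For complete rigor one may first establish the identity on a dense subclass, e.g. $u,v \in C_c((0,\infty))$, where every integral is absolutely convergent and Fubini applies trivially, and then extend to all of $L^2(\mu_A)$ by a density argument relying on the $L^2$-isometry $u \mapsto (2\pi)^{-1/2}\, \La(u)(A/2 + i\,\cdot)$ from $L^2(\mu_A)$ onto $L^2(\Reals;\ud{\omega})$ that is established in the process.
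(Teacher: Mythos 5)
Your argument is correct and is the only sensible one: conjugate by the weight $e^{-At/2}$, identify $\La(u)(A/2+i\,\cdot)$ with a constant times the Fourier transform of $\tilde u(t)=u(t)e^{-At/2}\mathbf{1}_{[0,\infty)}(t)$, and invoke classical Plancherel on $L^2(\Reals)$; this also exhibits $u\mapsto(2\pi)^{-1/2}\La(u)(A/2+i\,\cdot)$ as an isometry, which justifies extension from the dense class you name. The paper itself gives no proof of this lemma (it is stated as a recalled fact), so there is nothing to compare against. One point worth flagging, though: if you track your constants to the end you obtain
\begin{equation*}
2\pi\int_0^\infty \ol{u}(t)\,v(t)\,e^{-At}\,\ud t \;=\; \int_\Reals \ol{\La(u)}\!\left(\tfrac{A}{2}+i\omega\right)\La(v)\!\left(\tfrac{A}{2}+i\omega\right)\ud\omega ,
\end{equation*}
i.e.\ the weight $e^{-At}$ appears exactly once on the left and the prefactor is $2\pi$, not $(2\pi)^{1/2}$. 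The displayed formula \eqref{eq:Plancherel} in the paper contains an extra $e^{-At}$ (on top of the one already inside $\mu_A$) and the prefactor $(2\pi)^{1/2}$; taken literally these cannot both be reconciled with the argument $A/2$ appearing in $\La$ on the right-hand side. Your proof derives the correct identity, but the final sentence claiming it "produces exactly the identity \eqref{eq:Plancherel}" glosses over this bookkeeping; it would be cleaner to state the identity you actually prove and note the apparent typo in the paper's normalization.
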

For the one-particle function $u_\eps$ we will use the following spaces.
\begin{definition} \label{def:spaces}
	Let $\lambda(v), \tilde{\lambda}(v)$ be defined by $\lambda(v):=e^{|v|}$, $\tilde{\lambda}(v):=\frac{e^{|v|}}{1+|v|}$.
	For $n \in \Naturals$ and $\nu=\lambda,\tilde{\lambda}$, we define $H^n_\nu$ as the closure of $C^\infty_c\big(\Reals^3\big)$
	with respect to the norm:
	\begin{align}\label{Hnnorm}
	\|u\|^2_{H^{n}_{\nu}} 		&:= \sum_{\alpha \in \Naturals^3, |\alpha|\leq n} \|\nu^\frac12 (\cdot) \nabla^\alpha u(\cdot)\|^2_{L^2}. 
	\end{align}
	For time-dependent functions $f(t,v)$, define the space $V^{n}_{A,\nu}$ as
	the closure of $C^\infty_c \big([0,\infty) \times \Reals^3)$ with respect to:
	\begin{align} \label{Vndef}
	\|f\|^2_{V^{n}_{A,\nu}} 	&:= \int_0^\infty e^{-At} \sum_{j=1}^d \|f_j(t,\cdot)\|^2_{H^n_\nu}  \ud{t}, \quad \text{where $A\geq 1$.} 
	\end{align}	
\end{definition}

\begin{definition}[Domain of the fixed point mapping]
	Fix $n\in \Naturals$ with $n\geq 12$. We define the norms:
	\begin{align}
		\|f\|_{E_A}&= \sup_{v\in \Reals^3, z \in \Complex: \Re(z)=A/2, |\beta|\leq n-6} |D^\beta_v\La(f)(z,v)(1+|z|^2)|e^{\frac12 |v|},  \\
		\|f\|_{F_{\eps,A}} &=  \sup_{v\in \Reals^3,z\in \Complex: \Re(z)=\frac{A}{2}, |\beta|\leq n-6}|D^\beta_v \La(f)(z,v)\frac{(1+|z|^2)(1+\eps|z|)}{\eps |z|}|e^{\frac12|v|}, \\
		\|f\|_{G_{\eps,A}} &=  \sup_{v\in \Reals^3,z\in \Complex: \Re(z)=\frac{A}{2}, |\beta|\leq n-6}|D^\beta_v \La(f)(z,v)(1+|z|^2)(1+\eps|z|)|e^{\frac12|v|}, \\
		\|f\|_{X^n_{A}}&= \|f\|_{V^n_{A,\tilde{\lambda}}}+\|\partial_t f\|_{V^{n-2}_{A,\tilde{\lambda}}} .
	\end{align}
	Let $\Gamma^{n}_{A,\delta_1} \subset V^n_{A,\tilde{\lambda}} $ be the set given by
	\begin{align} \label{def:Gamma}
		\Gamma^{n}_{A,\delta_1} = \{f \in V^n_{A,\tilde{\lambda}}: \int_{\Reals^3} f(t,v)\ud{v}=0, \supp f \subset [0,2\delta_1]\times \Reals^3  \}.
	\end{align}
	Let $\Omega^{A,\delta}_{\tilde{R},\delta_1,R,\eps}\subset V^n_{A,\tilde{\lambda}}$ be the set of functions given by:
	 \begin{align}
	\begin{aligned} \label{omegadef}
	\Omega^{A,\delta}_{\tilde{R},\delta_1,R,\eps} = \{f_1+f_2&=f\in \Gamma^{n}_{A,\delta_1}, \,\|f\|_{X^n_{A}}\leq 1  , \\
	&\sup_{t\in[0,1]}    \|\partial_t  f(t,\cdot)\|_{H^{n-4}_{\tilde{\lambda}}}\leq \tilde{R},\,  \|f\|_{G_{\eps,A}} \leq R, \,\|f_2\|_{F_{\eps,A}}\leq R, \,   \|f_1\|_{E_A}\leq \delta\}.	
	\end{aligned}  
	\end{align}
\end{definition}
\begin{lemma}[Properties of the domain] \label{lem:properties}
	For all $n\in \Naturals$, $n\geq 12$ and all $A\geq 1$, $\delta,\tilde{R},\delta_1,R,\eps>0$, the domain
	$\Omega^{A,\delta}_{\tilde{R},\delta_1,R,\eps}\subset V^n_{A,\tilde{\lambda}}$ is closed, convex and nonempty.
\end{lemma}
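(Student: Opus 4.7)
The plan is to verify each of the three properties in turn; nonemptiness and convexity are immediate from the definitions, while closedness is the only step that needs any genuine argument.

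Nonemptiness is trivial: the function $f \equiv 0$ with the trivial decomposition $f_1 = f_2 = 0$ satisfies every constraint in \eqref{omegadef}, since all five norms vanish on $0$ and the linear mean-zero and support conditions of $\Gamma^n_{A,\delta_1}$ hold automatically. Convexity follows because every constraint in \eqref{omegadef} is either linear (the conditions $\int f(t,v)\ud{v}=0$ and $\supp f \subset [0,2\delta_1]\times \Reals^3$) or an inequality on a seminorm. Given $f,g \in \Omega$ with decompositions $f = f_1+f_2$, $g = g_1+g_2$ and $\theta \in [0,1]$, the linear decomposition $\theta f_i + (1-\theta)g_i$ of $\theta f + (1-\theta)g$ preserves all five norm bounds by the triangle inequality.

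For closedness, suppose $f^{(k)} \to f$ in $V^n_{A,\tilde{\lambda}}$ with each $f^{(k)} = f_1^{(k)} + f_2^{(k)} \in \Omega^{A,\delta}_{\tilde{R},\delta_1,R,\eps}$. The constraints that depend only on $f$ pass to the limit by lower semicontinuity. The support condition is closed under $V^n_{A,\tilde{\lambda}}$ convergence, and the mean-zero condition is preserved because $g \mapsto \int g\ud{v}$ is continuous on $H^n_{\tilde{\lambda}}$ (since $1/\sqrt{\tilde{\lambda}} \in L^2(\Reals^3)$). The bounds on $\|f\|_{X^n_A}$ and on $\sup_{t\in[0,1]} \|\partial_t f\|_{H^{n-4}_{\tilde{\lambda}}}$ follow because the uniformly bounded sequences $\partial_t f^{(k)}$ have weak-$*$ limits in $V^{n-2}_{A,\tilde{\lambda}}$ and in $L^\infty([0,1]; H^{n-4}_{\tilde{\lambda}})$ that must agree with $\partial_t f$ by distributional uniqueness, so lower semicontinuity transfers the bounds to $f$. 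The $G_{\eps,A}$ bound on $f$ is handled by the same argument applied to Laplace transforms, as described below.

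The main obstacle is producing the limiting decomposition $f = f_1 + f_2$ with the required bounds. The uniform estimates $\|f_1^{(k)}\|_{E_A} \leq \delta$ and $\|f_2^{(k)}\|_{F_{\eps,A}} \leq R$ are $L^\infty$-type bounds on $\La(f_i^{(k)})$ and its $v$-derivatives up to order $n-6$ on the line $\Re(z) = A/2$, weighted by $e^{|v|/2}$ and rational expressions in $|z|$ and $\eps|z|$. The corresponding $L^\infty$ balls are duals of separable $L^1$-type spaces, so by Banach--Alaoglu a subsequence can be extracted along which both $\La(f_1^{(k)})$ and $\La(f_2^{(k)})$ converge weak-$*$, defining $\La(f_1)$ and $\La(f_2)$. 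By Lemma \ref{Plancherel} the convergence of $f^{(k)}$ in $V^n_{A,\tilde{\lambda}}$ yields $L^2_\omega$ convergence of $\La(f^{(k)})$ on $\Re(z) = A/2$, so the sum of the two weak-$*$ limits coincides with $\La(f)$ there, and Laplace uniqueness forces $f = f_1 + f_2$ in the time domain. The weak-$*$ lower semicontinuity of $\|\cdot\|_{E_A}$ and $\|\cdot\|_{F_{\eps,A}}$ preserves the constants $\delta$ and $R$, completing the verification.
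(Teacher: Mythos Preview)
Your proof is correct and follows essentially the same approach as the paper: nonemptiness via the zero function, convexity from the seminorm structure (the paper phrases this as ``intersection of convex sets''), and closedness via weak-$*$ sequential compactness applied both to the constraints on $f$ and to the sequences $f_1^{(k)}$, $f_2^{(k)}$ to extract a limiting decomposition. You are simply more explicit where the paper is terse---spelling out Banach--Alaoglu, the separable predual, and the use of Plancherel to match the weak-$*$ limits with $\La(f)$---but the underlying argument is the same.
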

\begin{proof}
	The set $\Omega^{A,\delta}_{\tilde{R},\delta_1,R,\eps}$ is an intersection of convex sets, hence convex. Furthermore, it contains the zero function, hence it is nonempty.
	To see that the set is also closed, we first remark that $\Gamma^{n}_{A,\delta_1}$ is closed.
	Now	take a sequence $f_n \in \Omega^{A,\delta}_{\tilde{R},\delta_1,R,\eps}$ which is convergent
	in $V^n_{A,\tilde{\lambda}}$, that is:	$f_n \rightarrow f$ in $V^n_{A,\tilde{\lambda}}$ for some $f\in V^n_{A,\tilde{\lambda}}$. 
	We now observe that
	\begin{align*}
		 \|f\|_{X^n_A} \leq R, \quad \sup_{t\in[0,1]}    \|\partial_t  f(t,\cdot)\|_{H^{n-4}_{\tilde{\lambda}}}\leq \tilde{R}, \quad \|f\|_{G_{\eps,A}}\leq R
	\end{align*}
	by weak-* sequential compactness of the spaces generated by these norms.
	Furthermore, every element of the sequence can be decomposed into $f_n=f_{n,1}+f_{n,2}$ with
	\begin{align*}
		\|f_{n,1}\|_{E_A}\leq \delta, \quad \|f_{n,2}\|_{F_{\eps,A}}\leq R. 
	\end{align*}
	Now weak-* sequential compactness implies that such a decomposition also exists for the limit $f$, so we have $f\in \Omega^{A,\delta}_{\tilde{R},\delta_1,R,\eps}$ as claimed.	
\end{proof}

	We proceed by introducing  weight functions, that will allow us later to
keep track of the fine regularity and decay properties emerging from the evolution \eqref{eq:Main}.
The definitions can also be found in \cite{velazquez_non-markovian_2018}, we include them here to keep the analysis self-contained.
\begin{notation} \label{not:weights}
	For $z\in \Complex$ and $v \in \Reals^3$ define: 
	\begin{align}
	\alpha(z,v) &:= \frac{|\Im(z)|}{1+|v|},\quad		\beta(z,v)	:= \frac{|\Re(z)|}{1+|v|}.
	\end{align}
	Further we define positive functions $C_1$ and $C_2$ by:
	\begin{align}
	C_1(z,v)	&= 	\frac{1}{(1+|v|)(1+\alpha(z,v))^2} \label{C1} \\
	C_2(z,v)	&= 	\frac{\beta(z,v)+\alpha(z,v)^2 }{(1+|v|)(1+ \alpha(z,v))^4} \label{C2}.
	\end{align}	
	Let  $0 \neq v \in \Reals^3$, $V,W\in \Complex^3$. We define the anisotropic norm
	\begin{align}
	|W|_v := |P_{v}^\perp W| + \frac{|P_v W|}{1+|v|}  \label{Vnrom}, 
	\end{align}
	and weight functions $B_1(z,v)(V,W)$, $B_2(z,v)(V,W)$ by
	\begin{align}
	B_1(V,W) = C_1(z,v)|V|_v |W|_v +  C_2(z,v) |P_v V| |P_v W|, \label{quadrC} 
	\\		B_2(V,W) = C_1(z,v)|V|_v |W|_v +  C_3(z,v) |P_v V| |P_v W|. \label{quadrCu}	
	\end{align}
\end{notation}
Finally, in order to localize to short times, we introduce a family of cutoff functions.
\begin{definition} \label{def:cutoff}
	Let $\kappa\in C^\infty_c(\Reals;[0,1])$ be a cutoff function with
	$\kappa(s)=1$ for $s\in[-1,1]$ and	$\kappa(s)=0$ for $|s|\geq 2$.
	We set $\kappa_{\delta_1}$ to be the rescaled functions given by
	\begin{align}
	\kappa_{\delta_1}(s) := \kappa\left(\frac{s}{\delta_1}\right). \label{defkappa}
	\end{align}
\end{definition}

\subsection{Proof of the main result} \label{subsec:mainpf}
\begin{proofof}[Proof of Theorem~\ref{thm1}]
We start by restating the existence of a solution to  equation \eqref{thmepseq} as a fixed point problem. To this end, we introduce $f = u_\eps-u_0$, so \eqref{thmepseq} can be rewritten as:
	\begin{equation} \label{eq:fixedpoint}
		\begin{aligned} 
			\partial_t u_\eps 	= 	&\frac1\eps \nabla \cdot \left(\int_0^t K[u_0+f(s)]\left(\frac{t-s}\eps,v\right) \nabla u_\eps(s,v) \ud{s}\right) \\
			-  	&\frac1\eps \nabla \cdot \left(\int_0^t P[u_0+f(s)]\left(\frac{t-s}\eps,v\right) u_\eps(s,v)  \ud{s}\right)  \\ 
			u_\eps(0,\cdot)	&= u_0(\cdot).
		\end{aligned}	
	\end{equation}	 
	Finding a solution $u_\eps$ of \eqref{thmepseq} is equivalent to finding a fixed point of
	the mapping $f\mapsto u_\eps-u_0$, where $u_\eps$ is the solution of \eqref{eq:fixedpoint}
	for $f$ given.
	Since we cannot show the existence of \eqref{eq:fixedpoint} directly, we first consider a mollified version of the equation.
	With ${}^\gamma \nabla$, $P_\gamma$ as introduced in Notation~\ref{not:mollifyer}, consider the mollified equation:
		\begin{equation} \label{eq:fixedmollified}
		\begin{aligned} 
			\partial_t u_\eps 	= 	&\frac1\eps {}^\gamma \nabla \cdot \left(\int_0^t K[u_0+f(s)]\left(\frac{t-s}\eps,v\right) {}^\gamma \nabla u_\eps(s,v) \ud{s}\right) \\
			-  	&\frac1\eps {}^\gamma \nabla \cdot \left(\int_0^t  P_\gamma[u_0+f(s)]\left(\frac{t-s}\eps,v\right) u_\eps(s,v)  \ud{s}\right)  \\ 
			u_\eps(0,\cdot)	&= u_0(\cdot).
		\end{aligned}	
		\end{equation}	
	The existence of solutions to \eqref{eq:fixedmollified} can be proved in a straightforward fashion making use of the smoothness
	provided by the mollified gradient ${}^\gamma \nabla$.
	For the details of the argument we refer to \cite{velazquez_non-markovian_2018}. The result is stated in the following lemma, in the functional setting introduced in Subsection~\ref{sec:func}. Notice that this only establishes the existence of solutions for $\eps,\gamma>0$, but does not include an a priori estimate
	that is independent of these parameters.
	\begin{lemma}[Existence of a solution to the mollified equation] \label{lemexistence}
		Let $n\in \Naturals$, $\gamma,\eps>0$ and $u_0 \in H^n_{\lambda}$. Further
		assume $f\in L^1_{\operatorname{loc}}$, $\supp f \subset [0,1]$, and let $C>0$ such that $|f(t,v)| \leq C e^{-\frac12 |v|}$.  Then there exists a (unique)
		global solution $u_\eps \in C^1([0,\infty);H^n_{\lambda})$ to
		\eqref{eq:fixedmollified}.
	\end{lemma}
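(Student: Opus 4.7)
Since $f$ is given, equation (2.9) is a \emph{linear} Volterra-type integro-differential equation in the unknown $u_\eps$. The plan is therefore to reformulate it as a fixed point equation $u_\eps = u_0 + \mathcal{T}[u_\eps]$ on $C([0,T]; H^n_{\lambda})$ for arbitrary $T > 0$, where
\begin{align*}
\mathcal{T}[w](t) = \frac{1}{\eps}\int_0^t {}^\gamma \nabla \cdot \int_0^\tau \Big( K[u_0+f(s)]\big(\tfrac{\tau-s}{\eps},\cdot\big)\, {}^\gamma \nabla w(s,\cdot) - P_\gamma[u_0+f(s)]\big(\tfrac{\tau-s}{\eps},\cdot\big)\, w(s,\cdot)\Big)\ud{s}\, \ud{\tau},
\end{align*}
and to solve it by a Neumann series argument. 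Uniqueness will automatically follow from linearity, since the difference of two solutions satisfies the homogeneous equation with zero initial datum.

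The key step is to show that $\mathcal{T}$ is a bounded linear operator on $C([0,T]; H^n_{\lambda})$ with norm controlled by $M T^2/2$ for some finite $M = M(\gamma, \eps, u_0, f)$. For this I would first bound the coefficients pointwise. Because $u_0 \in H^n_{\lambda}$ and $|f(s,v)| \leq C e^{-|v|/2}$ by hypothesis, the density $u_0+f(s,\cdot)$ has exponential decay; combined with the smoothness and rapid decay of the self-correlation $\int \nabla \phi(x) \otimes \nabla\phi(x-\cdot)\,\ud{x}$ coming from (1.7), one concludes that $K[u_0+f(s)](\tau,\cdot) \in W^{k,\infty}(\Reals^3)$ uniformly in $s,\tau \geq 0$ for every $k$. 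The mollified gradient ${}^\gamma \nabla g = (\nabla \varphi_\gamma) * g$ has compactly supported kernel, so it maps $L^2$ boundedly into $H^k_{\lambda}$ for every $k$, with a finite (but $\gamma$-dependent) constant, and the same is true of $P_\gamma = {}^\gamma\nabla\cdot K$. Putting these estimates together yields
\begin{align*}
\|\mathcal{T}[w](t,\cdot)\|_{H^n_{\lambda}} \leq M \int_0^t \int_0^\tau \|w(s,\cdot)\|_{H^n_{\lambda}}\, \ud{s}\, \ud{\tau}.
\end{align*}

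Iterating this Volterra bound, $\|\mathcal{T}^k\|_{\mathrm{op}} \leq (MT^2)^k/(2k)!$, so the Neumann series $\sum_{k\geq 0}\mathcal{T}^k[u_0]$ converges in $C([0,T]; H^n_{\lambda})$ and produces a unique fixed point. Since $T$ is arbitrary we obtain a unique $u_\eps \in C([0,\infty); H^n_{\lambda})$; substituting back into (2.9) then gives $\partial_t u_\eps \in C([0,\infty); H^n_{\lambda})$, so indeed $u_\eps \in C^1([0,\infty); H^n_{\lambda})$. The main (and essentially only) obstacle is the bookkeeping of weighted Sobolev bounds on ${}^\gamma \nabla$, $K$, and $P_\gamma$: the constants $M$ above blow up as $\gamma,\eps\to 0$, so this lemma cannot by itself produce any $\eps,\gamma$-uniform estimate -- obtaining such an estimate is precisely the content of the rest of the paper. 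All other steps follow the standard linear Volterra theory and are carried out in detail in \cite{velazquez_non-markovian_2018}.
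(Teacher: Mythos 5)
Your proposal is correct and is exactly the kind of argument the paper intends: the paper does not spell out a proof but defers to \cite{velazquez_non-markovian_2018} with the remark that, thanks to the mollified gradient, the existence is ``straightforward''; reformulating the mollified equation as a linear Volterra integral equation and solving by Neumann series (with uniqueness by linearity plus a Gronwall-type iteration) is that straightforward route.

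Two small imprecisions worth tightening, though neither damages the argument. First, ${}^\gamma\nabla$ does not map $L^2 \to H^k_\lambda$; it maps $L^2_\lambda \to H^k_\lambda$, which is what you actually use. This holds because $\varphi_\gamma$ has support of radius $\sim\gamma$, so $e^{|v|/2}\leq e^{\gamma/2}e^{|w|/2}$ on the support of the convolution kernel, and then Young's inequality with $\|D^\alpha\nabla\varphi_\gamma\|_{L^1}$ gives a $\gamma$-dependent constant. Second, the claim that $K[u_0+f(s)](\tau,\cdot)\in W^{k,\infty}$ uniformly for every $k$ is more than you need and is not obvious from the Fourier side (e.g.\ $\widehat{\nabla g}(k)\sim|k|^3(1+|k|^2)^{-3}$ is only borderline for an $L^1$-bound); what the Volterra estimate actually requires is only $K[u_0+f(s)](\tau,\cdot)\in L^\infty$ uniformly in $s,\tau$, which follows from $\|g\|_{L^\infty}<\infty$ (since $\hat g\in L^1$) together with $\sup_s\|u_0+f(s)\|_{L^1}<\infty$. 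After that, all Sobolev derivatives are absorbed by the two occurrences of ${}^\gamma\nabla$, so you never need to differentiate $K$ itself. With these adjustments the bound $\|\mathcal{T}[w](t)\|_{H^n_\lambda}\leq M\int_0^t\int_0^\tau\|w(s)\|_{H^n_\lambda}\ud s\ud\tau$ holds and the rest of your argument goes through unchanged.
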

	Recall the definitions introduced in Subsection~\ref{sec:func}.
	We set up the mapping $\psi_{\delta_1}$ defined as:
	\begin{equation} \label{psidef}
		\begin{aligned}
		\Psi_{\delta_1}: 	\Omega^{A,\delta}_{\tilde{R},\delta_1,R,\eps} 	&\longrightarrow V^n_{A,\lambda} \\
		f						&\mapsto (u_\eps-u_0)k_{\delta_1}\text{, $u_\eps$ solution to \eqref{eq:fixedmollified}}.
		\end{aligned}
	\end{equation}
	The intuition to the various parameters appearing in \eqref{psidef} is the following:
	the paramater $A\geq 1$ determines the	exponential weight for large times and will be chosen large later. The  parameters $\tilde{R},R,\delta>0$ (cf. Definition~\ref{def:cutoff}) determine an a priori smallness assumption on the Laplace transform of $f$. Finally, $\delta_1>0$ is used to cut off to short times and can be used as an additional small parameter.

	The crucial point of the proof and the main content of this paper is the priori estimate for $\psi_{\delta_1}$, uniform in both the scaling parameter $\eps\rightarrow 0$ and the mollifying parameter $\gamma \rightarrow 0$. In \cite{velazquez_non-markovian_2018}, we proved the estimate after artificially removing the singular part of the appearing integral. We prove the following a priori estimate in this paper.
	\begin{theorem}[A priori estimate]  \label{thm:apriori}
		Let $n\geq 12$ and assume $u_0\in H^n_{\lambda}$ satisfies:
		\begin{align*}
		c \cf_{|v|\leq 4}(v) \leq u_0(v) \leq C e^{-\frac12 |v|}.
		\end{align*}	
		Then there exist $A,\delta,\delta_0>0$ s.t. for all $R,\tilde{R}>0$, $\delta_1\in(0,\delta_0)$ there is an $\eps_0>0$ such that
		for all $\gamma \in (0,\frac12]$ and 
		$\eps \in (0,\eps_0)$  the mapping $\psi_{\delta_1}$ introduced 
		in \eqref{psidef} is well-defined and continuous with respect to the topologies of $V^n_{A,\tilde{\lambda}}$, $V^n_{A,\lambda}$. Moreover, the mapping $\Psi_{\delta_1}$ satisfies:
		\begin{align} 
		\|\Psi_{\delta_1} (f)\|_{V^{n}_{A,\lambda}} &\leq 1, \quad 	\|\partial_t \Psi_{\delta_1} (f)\|_{V^{n-2}_{A,\lambda}} \leq 1. \label{PsiEst}
		\end{align}
	\end{theorem}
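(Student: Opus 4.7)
The plan is to derive an energy-type a priori estimate for the solution $u_\eps$ of the mollified equation \eqref{eq:fixedmollified} (whose global existence is guaranteed by Lemma~\ref{lemexistence}) by testing against $\lambda(v) u_\eps$ and against $\lambda(v)\nabla^\alpha u_\eps$ for $|\alpha|\leq n$, integrated against the exponential weight $e^{-At}$. After converting the time convolution on the right-hand side into a product in the Laplace variable $z = A/2 + i\omega$ via Plancherel's identity (Lemma~\ref{Plancherel}), the estimate reduces to controlling two bilinear forms of the type \eqref{Q1}--\eqref{Q2}, whose $\eps$-dependence is encoded through $\La(K)(\eps z,\cdot)$. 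The novelty compared to \cite{velazquez_non-markovian_2018} is that the inner $v'$-integral now runs over all of $\Reals^3$, including the singular region $v'\approx v$.

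The term $Q_1$ produces a positive dissipation. Thanks to the specific weak-coupling structure of $K$ and the explicit choice \eqref{FTpotential} of $\hat\phi$, one obtains the coercive lower bound $Q_1 \geq D_1[u_\eps]$, where $D_1$ is a weighted $L^2$-norm of $\nabla\La(u_\eps)$ on the line $\Re(z)=A/2$ expressed through the anisotropic weights $B_1(z,v)$ of Notation~\ref{not:weights}; this step is insensitive to the singular set and is essentially imported from \cite{velazquez_non-markovian_2018}. The main obstacle, and the content of Section~\ref{sec:Critical}, is to prove $|Q_2|\leq c\,D_1[u_\eps] + C_c\,\|u_\eps\|_{V^0_{A,\lambda}}^2$ with $c<1$ and $C_c$ independent of $\eps,\gamma$. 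Here $\nabla\cdot K[u_0+f(s)]$ is a singular integral operator whose kernel behaves like $|v-v'|^{-2}$ (smoothed only along the direction $t/\eps$), so it cannot be estimated in $L^\infty$. I would split the $v'$-integral at a scale $\rho$: on the singular region $|v-v'|\leq\rho$, the $\delta(k\cdot(v-v'))$-structure of $K$ reproduces precisely the anisotropic weight appearing in $D_1$, so that Cauchy--Schwarz yields a factor of order $\rho^{1/2}\sqrt{D_1}$ and supplies the small constant $c$; on the regular region $|v-v'|\geq\rho$ the kernel becomes a bounded $L^2$-operator with $\rho$-dependent norm, paired against $u_\eps$ in the $L^2$-norm. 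The time-average given by $\int_0^\infty e^{-At}\ud t$, together with the $(1+|z|^2)^{-1}$ decay built into the norms $E_A$, $F_{\eps,A}$, $G_{\eps,A}$ that define $\Omega^{A,\delta}_{\tilde R,\delta_1,R,\eps}$, is what absorbs the $\eps$-scaling in $\La(K)(\eps z,\cdot)$ and makes the splitting uniform in $\eps,\gamma$; the smallness of $f$ in these norms lets one treat $K[u_0+f]$ as a perturbation of $K[u_0]$.

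Once the zero-order estimate is in hand, the same scheme is iterated on $\nabla^\alpha u_\eps$ for $|\alpha|\leq n$: the commutators $[\nabla^\alpha, K[u_0+f]]$ are expanded by the Leibniz rule, and each derivative that falls on the kernel is absorbed either by the smallness of $f$ in $\Omega^{A,\delta}_{\tilde R,\delta_1,R,\eps}$ or by the pointwise bound on $u_0$. The bound on $\partial_t\Psi_{\delta_1}(f)$ in $V^{n-2}_{A,\lambda}$, with the two-derivative loss required by \eqref{PsiEst}, is obtained by reading $\partial_t u_\eps$ directly off \eqref{eq:fixedmollified} and estimating each factor in $H^{n-2}_\lambda$. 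Finally, multiplication by $\kappa_{\delta_1}$ introduces the small parameter $\delta_1$, which combined with small $\delta$ and large $A$ upgrades the estimate to $\|\Psi_{\delta_1}(f)\|_{V^n_{A,\lambda}}\leq 1$; continuity of $\Psi_{\delta_1}$ from the $V^n_{A,\tilde\lambda}$-topology to the $V^n_{A,\lambda}$-topology follows by applying the same estimates to the difference of the solutions corresponding to two inputs $f^{(1)},f^{(2)}$, using bilinearity of the right-hand side of \eqref{eq:fixedmollified} in $(f,u_\eps)$.
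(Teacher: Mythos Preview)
Your global strategy matches the paper: differentiate \eqref{eq:fixedmollified} by $D^\alpha$, test against $e^{-At}\lambda D^\alpha u_\eps$, pass to Laplace variables, extract the coercive dissipation $D^\alpha_{\eps,A}$ from the diagonal $K$-term (Lemma~\ref{coercivitylemma}), and bound the remaining pieces by $cD^\alpha_{\eps,A}+C\|u_\eps\|^2_{V^n_{A,\lambda}}$. The Leibniz expansion and the $\partial_t$-estimate are also as in the paper.

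The gap is your proposed mechanism for the singular $P$-term. The kernel $\nabla\cdot\La(K)$ behaves like $|v-v'|^{-2}$, which in $\Reals^3$ is the \emph{critical} Riesz exponent: convolution against an $L^2$ function lands in $L^6$ and no better, so restricting to $|v-v'|\leq\rho$ yields no smallness in $\rho$ by a plain Cauchy--Schwarz (indeed $\int_{|w|\leq\rho}|w|^{-4}\,dw=\infty$). Your assertion that the $\delta(k\cdot(v-v'))$-structure ``reproduces precisely the anisotropic weight appearing in $D_1$'' and hence gives $\rho^{1/2}\sqrt{D_1}$ is not justified: $D_1$ is a weighted norm of $\nabla\La u_\eps$, whereas in the $P$-term the singular convolution produces a \emph{function} of $v$ that is then multiplied pointwise by $D^\beta\La u_\eps$ (no gradient), so there is no direct route from the inner integral to $D_1$. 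Extracting a power of $\rho$ requires higher integrability of the integrand, which is exactly what has to be proved.

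The paper's route is different and is the content of Section~\ref{sec:Critical}. It uses that $|v|^{-1}$ is the Green's function of $-\Delta$ on $\Reals^3$: convolution with $|v|^{-1}$ maps $L^2\to\dot H^2$ (Lemma~\ref{lem:greensop}), hence convolution with $|v|^{-2}$ maps $L^2\to\dot H^1\hookrightarrow L^6$. This yields the local bound of Lemma~\ref{lem:singularintegral}, namely $D^\mu\Lambda[f]\in L^6_{\mathrm{loc}}$ with the correct anisotropic weights $(1+|v|)^2(1+\alpha(\eps z,v))$. The trilinear form is then estimated on a Vitali cover of $\Reals^3$ by H\"older with exponents $\tfrac12+\tfrac16+\tfrac13=1$, pairing $\nabla D^\alpha\La u_\eps$ (controlled by $D^\alpha_{\eps,A}$) against $D^\mu\Lambda[f]\in L^6_{\mathrm{loc}}$ and $\La D^\nu u_\eps\in L^3_{\mathrm{loc}}$ (Sobolev, using $|\nu|\leq n-5$); see Lemma~\ref{lem:uregular}. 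A preliminary bootstrap (Lemma~\ref{lem:lowerfreez}) supplies the $L^1$-in-$\omega$ control on $\La((u_\eps-u_0)\kappa_{2\delta_1})$ needed to close the convolution in the Laplace frequency. No cutoff at a spatial scale $\rho$ appears.
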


The theorem above ensures that \eqref{eq:fixedmollified} has
a solution for a given function  $f$ and provides an a priori estimate
that is uniform in $\eps,\gamma>0$. In the next step, we prove the existence
of a solution to
	\begin{equation} \label{eq:nonlinearfixed}
	\begin{aligned} 
	\partial_t u_\eps 	= 	&\frac1\eps {}^\gamma \nabla \cdot \left(\int_0^t K[u_\eps(s)]\left(\frac{t-s}\eps,v\right) {}^\gamma \nabla u_\eps(s,v) \ud{s}\right) \\
	-  	&\frac1\eps {}^\gamma \nabla \cdot \left(\int_0^t  P_\gamma[u_\eps(s)]\left(\frac{t-s}\eps,v\right) u_\eps(s,v)  \ud{s}\right)  \\ 
	u_\eps(0,\cdot)	&= u_0(\cdot),
	\end{aligned}	
	\end{equation}	
by applying Schauder's fixed point theorem to the mapping $\psi_{\delta_1}$.
To this end we need to show that, for an appropriate choice of the parameters,
the mapping $\psi_{\delta_1}$ defined in \eqref{psidef} leaves
the set $\Omega^{A,\delta}_{\tilde{R},\delta_1,R,\eps}$ invariant.
With the a priori estimate in Theorem~\ref{thm:apriori} at hand, the proof works similar to the corresponding proof in \cite{velazquez_non-markovian_2018}, and can therefore be found in the appendix. 
\begin{lemma}[Invariance of the domain] \label{thm:invariance} 
	Let  $v_0\in H^n_{\lambda}$, $n\geq 12$ be a function bounded above and below as $0 \leq v_0(v) \leq C e^{-\frac12 |v|}$, and  $u_0=m(v)+ \delta_2 v_0(v)$ for some $\delta_2>0$.
		
	Then there exist constants $A,\delta,\delta_1,\eps_0,R,\tilde{R}>0$ such that for all	$\delta_2,\eps \in (0,\eps_0]$,  $\gamma \in (0,1]$ the mapping $\Psi_{\delta_1}$ 
	defined in \eqref{psidef} leaves the set $\Omega^{A,\delta}_{\tilde{R},\delta_1,R,\eps}$ invariant.
\end{lemma}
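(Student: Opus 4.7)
The plan is to verify each of the five conditions defining $\Omega^{A,\delta}_{\tilde{R},\delta_1,R,\eps}$ in turn for the image $\Psi_{\delta_1}(f) = (u_\eps - u_0)\kappa_{\delta_1}$, using Theorem~\ref{thm:apriori} as the quantitative workhorse. Membership in $\Gamma^{n}_{A,\delta_1}$ is immediate: the support condition follows from $\supp \kappa_{\delta_1}\subset[-2\delta_1,2\delta_1]$, while $\int_{\Reals^3}(u_\eps-u_0)(t,v)\,\uud{v} = 0$ follows from integrating the divergence-form equation~\eqref{eq:fixedmollified} in $v$ and using the exponential $\lambda$-decay to justify vanishing of the boundary terms. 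The bound $\|\Psi_{\delta_1}(f)\|_{X^n_A}\leq 1$ is essentially contained in~\eqref{PsiEst}: since $\tilde{\lambda}\leq\lambda$ and $|\kappa_{\delta_1}|\leq 1$, the $V^n_{A,\lambda}$ estimate implies the $V^n_{A,\tilde{\lambda}}$ one, and the time-derivative bound is obtained after splitting $\partial_t[(u_\eps-u_0)\kappa_{\delta_1}]=(\partial_t u_\eps)\kappa_{\delta_1}+(u_\eps-u_0)\dot\kappa_{\delta_1}$ and absorbing the factor $\delta_1^{-1}$ coming from $\dot\kappa_{\delta_1}$ into the constant in~\eqref{PsiEst} by shrinking $\delta_1$.

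For the $L^\infty_t H^{n-4}_{\tilde{\lambda}}$ bound on $\partial_t\Psi_{\delta_1}(f)$, the idea is to read $\partial_t u_\eps$ off directly from~\eqref{eq:fixedmollified} as a time-convolution of $K[u_0+f]$ and $P_\gamma[u_0+f]$ against $\nabla u_\eps$ and $u_\eps$, and estimate the resulting spatial $H^{n-4}_{\tilde{\lambda}}$ norm pointwise in $t\in[0,1]$ using the $V^n_{A,\tilde{\lambda}}$ control on $f$ and $u_\eps-u_0$ together with standard weighted Sobolev multiplication inequalities. The resulting constant depends on $\eps,\gamma$ only through parameters that are already controlled by the domain assumptions, so the free parameter $\tilde R$ is simply chosen larger than it.

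The substantive step is the pointwise Laplace estimate $\|\Psi_{\delta_1}(f)\|_{G_{\eps,A}}\leq R$ together with the decomposition $\Psi_{\delta_1}(f)=f_1+f_2$. Laplace-transforming \eqref{eq:fixedmollified} in $t$ and using the linearity of $K,P$ in their arguments to split $K[u_0+f(s)]=K[u_0]+K[f(s)]$, the frozen-coefficient piece becomes a genuine $t$-convolution whose Laplace symbol is $\La(K[u_0])(\eps z,v)$, carrying precisely the $\eps z$-scaling encoded in the weights of $E_A$, $F_{\eps,A}$, $G_{\eps,A}$; the remainder is bilinear in $f$ and $\nabla u_\eps$. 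The decomposition is then natural: $f_2$ is defined as the Laplace inversion of the part of $\La(\Psi_{\delta_1}(f))$ driven by the $\La(K[u_0])(\eps z,v)$-convolution, which inherits a factor $(1+\eps|z|)^{-1}$ relative to $E_A$ and thus yields the $F_{\eps,A}$ bound with constant $R$, while $f_1$ collects the bilinear remainder plus the low-frequency Landau-type response. The $E_A$ bound on $f_1$ is of the form $C(\delta_2+\|f\|_{G_{\eps,A}}\|f\|_{X^n_A})\leq C(\delta_2+R)$, and is reduced below $\delta$ by choosing $\delta_2$ and $\delta_1$ small.

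The main obstacle is producing these pointwise Laplace bounds without the cutoff at $v-v'\approx 0$, since the symbols $\La(K[u_0])(\eps z,v)$ and $\La(P[u_0])(\eps z,v)$ inherit the full singular structure $|v-v'|^{-1}$ in the large-$\eps z$ regime. Here one invokes the time-averaging estimates proved in Section~\ref{sec:Critical}, applied to the Laplace symbols and tracking their anisotropic structure through the weights $C_1,C_2$ and the quadratic forms $B_1,B_2$ of Notation~\ref{not:weights}. Once these pointwise bounds are in place, invariance is obtained by choosing, in order, $\delta$ small (to beat nonlinear corrections coming from $K[f],P[f]$), then $\delta_1,\delta_2$ small (to force $\|f_1\|_{E_A}\leq\delta$), then $A$ large (to damp residual time integrals), and finally $\eps_0$ small enough that all remaining constants are absorbed into $R$ and $\tilde R$.
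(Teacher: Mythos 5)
Your high-level framework — check the five conditions defining $\Omega^{A,\delta}_{\tilde R,\delta_1,R,\eps}$ one by one, with Theorem~\ref{thm:apriori} providing the $V^n_{A,\lambda}$ control and the Laplace-transformed equation~\eqref{eq:fixedmollified} supplying pointwise bounds — is the right skeleton, and the treatment of $\Gamma^n_{A,\delta_1}$, $X^n_A$, and the $\sup_t\|\partial_t(\cdot)\|_{H^{n-4}_{\tilde\lambda}}$ bound all go through essentially as you describe (the paper handles the $\dot\kappa_{\delta_1}$ factor by Poincar\'e rather than by shrinking $\delta_1$, but that is a detail). The substantive gap is in how you construct the decomposition $f_1+f_2$ and prove $\|f_1\|_{E_A}\leq\delta$.

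Your proposed split — $f_2$ the Laplace inversion of the piece driven by $\La(K[u_0])(\eps z,v)$, $f_1$ the bilinear-in-$f$ remainder — does not separate the $E_A$-small part from the $F_{\eps,A}$-bounded part, for a structural reason. The frozen symbol $\La(K[u_0])(\eps z,v)=\int(M_1+M_2)(\eps z,v-v')u_0(v')\ud{v'}$ contains both $M_1$ (the $P_{v'}^\perp$ projection, Laplace symbol $\frac{1}{1+\eps z/|v'|}$) and $M_2$ (the $P_{v'}$ projection, Laplace symbol $\frac{\eps z/|v'|}{(1+\eps z/|v'|)^2}$). Only the $M_2$ piece carries the factor $\eps|z|/(1+\eps|z|)$ that $F_{\eps,A}$ demands; the $M_1$ piece produces an initial time layer of width $\eps$ that has no small prefactor in the $F_{\eps,A}$ or $E_A$ norms — its Laplace transform is $O(1)$ at frequencies $|z|\sim 1/\eps$ regardless of $\delta_1,\delta_2,\eps$. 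So your $f_2$ is \emph{not} bounded in $F_{\eps,A}$, and the boundary-layer contribution cannot be absorbed into $f_1$ either, since no choice of small parameter makes it $\leq\delta$ on its own. Relatedly, your claimed estimate $\|f_1\|_{E_A}\leq C(\delta_2 + R)$ cannot be made $\leq\delta$ because $R$ is a large constant fixed \emph{after} $\delta$, not a free small parameter.

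What the paper's proof supplies, and what you are missing, is the observation that for the kernel generated by $M_1$ alone (the term $p_\eps$ in~\eqref{eq:decomposition}) the boundary layer has an explicit representation $B(t,v;u_0)=\nabla\cdot B_F(t,v;u_0)$ (see~\eqref{defBL}), and that this boundary layer \emph{vanishes identically when $u_0$ is Maxwellian}: $P_{v'}^\perp\big(m(v-v')\nabla m(v)-\nabla m(v-v')m(v)\big)=-P_{v'}^\perp\frac{v'}{\sigma^2}m(v-v')m(v)=0$. Consequently, writing $u_0=m+\delta_2 v_0$, the boundary-layer term $B$ is $O(\delta_2)$ in $E_A$; the remainder $p_\eps-B$, having two time derivatives bounded uniformly and vanishing initial data, is $O(\delta_1\tilde R)$ in $E_A$ after cutoff; and $q_\eps$ (the $M_2$ piece) lies in $F_{\eps,A}$. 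That is why the correct decomposition is $f_1 = p_\eps\kappa_{\delta_1}$, $f_2 = q_\eps\kappa_{\delta_1}$, and why $\delta_2$ must be picked small \emph{last}. Your proposal also invokes the singular-integral estimates from Section~\ref{sec:Critical}, but those play no role here — the invariance argument is entirely a matter of this $M_1$-vs-$M_2$ split and the Maxwellian cancellation, not of the $|v-v'|^{-1}$ singularity.
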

The final ingredient to apply Schauder's fixed point theorem is the following compactness lemma. We do not perform the proof here, but refer to Lemma 2.5 in \cite{velazquez_non-markovian_2018}.
\begin{lemma}[Compactness] \label{lem:Rellich}
	Let $n\in \Naturals$ and let $(u_i)_{i\in \Naturals} \subset V^{n+1}_{A,\lambda}$ be a bounded sequence, such that the sequence
	$(\partial_t u_i) \subset V^{n+1}_{A,\lambda}$ is bounded as well. Then the sequence $(u_i)_{i\in \Naturals}$ is
	precompact in $V^n_{A,\tilde{\lambda}}$. 
\end{lemma}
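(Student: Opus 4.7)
The plan is a weighted Aubin--Lions compactness argument combining (i) a compact Sobolev-type embedding $H^{n+1}_\lambda \hookrightarrow H^n_{\tilde\lambda}$ in the velocity variable, with (ii) uniform equicontinuity in time provided by the bound on $\partial_t u_i$.

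For (i), on any ball $B_R \subset \Reals^3$ the weights $\lambda$ and $\tilde\lambda$ are bounded above and below by positive constants, so the classical Rellich--Kondrachov theorem gives the compact embedding of $H^{n+1}(B_R)$ into $H^n(B_R)$, which is equivalent on $B_R$ to the compact embedding in weighted norms. The weight ratio $\tilde\lambda(v)/\lambda(v) = (1+|v|)^{-1}$ provides the uniform tail bound
\begin{equation*}
  \sum_{|\alpha|\le n}\int_{|v|>R}\tilde\lambda(v)|\nabla^\alpha u(v)|^2 \ud{v} \le \frac{1}{1+R}\,\|u\|^2_{H^{n+1}_\lambda},
\end{equation*}
so a diagonal extraction over $R_k\to\infty$ yields the compact embedding $H^{n+1}_\lambda \hookrightarrow H^n_{\tilde\lambda}$.

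For (ii), on any finite interval $[0,T]$ the exponential time weight $e^{-At}$ is bounded above and below, so the hypotheses give uniform bounds on $u_i$ and $\partial_t u_i$ in $L^2([0,T];H^{n+1}_\lambda)$. The classical Aubin--Lions lemma, with $H^{n+1}_\lambda$ compactly embedded in $H^n_{\tilde\lambda}$, then produces a subsequence convergent in $L^2([0,T];H^n_{\tilde\lambda})$, equivalently in the $V^n_{A,\tilde\lambda}$-norm restricted to $[0,T]$. The passage from precompactness on every $[0,T]$ to precompactness on $[0,\infty)$ proceeds by a further diagonal extraction over $T_k\to\infty$, together with a uniform-in-$i$ tail estimate for $\int_{T_k}^\infty e^{-At}\|u_i(t)\|^2_{H^n_{\tilde\lambda}} \ud{t}$.

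The main obstacle I expect is precisely this tail estimate in time: since the exponential weight $e^{-At}$ is identical in the source and target spaces, uniform tail smallness does not come for free from the $V^{n+1}_{A,\lambda}$ bound on $u_i$ alone. One must use the bound on $\partial_t u_i$ to control $\|u_i(t)\|_{H^{n+1}_\lambda}$ pointwise in $t$ via the fundamental theorem of calculus (yielding a bound of the form $\|u_i(0)\|_{H^{n+1}_\lambda} + C e^{At/2}$), and then exploit the weight gap between $\lambda$ and $\tilde\lambda$ together with the dropped derivative to render the tail integral summable uniformly in $i$. The spatial compact embedding step itself is a fairly standard weighted Rellich--Kondrachov argument and should cause no trouble.
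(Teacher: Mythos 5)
Your Aubin--Lions strategy is the right shape for this lemma, and your weighted Rellich argument for the velocity variable is fine. You have also correctly identified the one genuinely dangerous step: the uniform-in-$i$ smallness of the time tail $\int_T^\infty e^{-At}\|u_i(t)\|^2_{H^n_{\tilde\lambda}}\,\mathrm{d}t$. The problem is that the workaround you sketch does not close this gap, and indeed cannot.

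Concretely, the fundamental theorem of calculus plus Cauchy--Schwarz against the weight $e^{-At}$ gives exactly $\|u_i(t)\|_{H^{n+1}_\lambda}\le \|u_i(0)\|_{H^{n+1}_\lambda}+\frac{M}{\sqrt A}e^{At/2}$, so that $e^{-At}\|u_i(t)\|^2_{H^{n+1}_\lambda}$ is merely bounded and the tail integral does not improve at all. The "weight gap" $\tilde\lambda/\lambda=(1+|v|)^{-1}$ and the dropped derivative are both gains in the \emph{velocity} variable; they provide no decay in \emph{time} and therefore contribute nothing to the time tail. In fact, the tail estimate is false under the stated hypotheses: take $\phi\in C^\infty_c(\Reals^3)$, $\chi\in C^\infty_c((0,2))$ with $\|\chi\|_{L^2}>0$, and set $u_i(t,v)=\chi(t-i)\,e^{At/2}\phi(v)$. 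Then $\|u_i\|_{V^{n+1}_{A,\lambda}}$ and $\|\partial_t u_i\|_{V^{n+1}_{A,\lambda}}$ are constants independent of $i$, the $V^n_{A,\tilde\lambda}$-norms are a fixed nonzero constant, but $u_i\rightharpoonup 0$ weakly in $V^n_{A,\tilde\lambda}$, so no subsequence converges strongly. Thus the precompactness claim needs an additional hypothesis that tames the $t\to\infty$ behavior.

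For orientation: the paper does not actually prove this lemma here; it is quoted verbatim from the reference. In the application inside this paper, the sequences produced by $\Psi_{\delta_1}$ live in $\Gamma^n_{A,\delta_1}$ and hence have support contained in the fixed strip $[0,2\delta_1]\times\Reals^3$. With that uniform compact support in time, the tail issue evaporates, a single finite time horizon suffices, and your argument — Rellich--Kondrachov in weighted velocity spaces plus the classical Aubin--Lions lemma on $[0,2\delta_1]$ — goes through directly. You should either add a uniform time-support assumption to the statement or make explicit that the embedding is only needed on functions supported in a fixed compact time interval; your proof then needs no tail estimate at all, and no diagonal extraction over $T_k\to\infty$.
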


With Lemma~\ref{lem:Rellich} and Lemma \ref{thm:invariance} at hand, the proof of the main result Theorem~\ref{thm1} quickly follows.
By Lemma~\ref{lem:properties}, the set $\Omega^{A,\delta}_{\tilde{R},\delta_1,R,\eps}$ is nonempty and convex. Further, by Lemma~\ref{thm:invariance} it is left invariant by the mapping $\psi_{\delta_1}$. Furthermore, for $\gamma>0$, the mapping is compact by Lemma~\ref{lem:Rellich}, so by Schauder's theorem there is a fixed point of $\psi_{\delta_1}$.
This yields the existence of solutions of \eqref{eq:nonlinearfixed}, which by Theorem~\ref{thm:apriori} are uniformly bounded in the parameters $\eps,\gamma$. Theorems~\ref{thm1} now simply follows from the
compactness shown in Lemma~\ref{lem:Rellich},  see also Section 5 in \cite{velazquez_non-markovian_2018}. 

Our proof of Theorem~\ref{thm1} is subject to the validity of Theorem~\ref{thm:apriori} and Lemma~\ref{thm:invariance}. The proof of Theorem~\ref{thm:apriori} is the content of the next section, the proof of Lemma~\ref{thm:invariance} can be found in the appendix.

\end{proofof}

\subsection{Proof of the a priori estimate}  
\begin{proofof}[Proof of Theorem~\ref{thm:apriori}]
In order to obtain uniform estimates for solutions of \eqref{eq:fixedmollified}, we differentiate the equation by $D^\alpha$. We then multiply with $e^{-At} D^\alpha u_\eps $, $A\geq 1$ and the weight function $\lambda(v)$ (cf. Definition~\ref{def:spaces}), and integrate in time to obtain: 
\begin{align} \label{eq:apriori}
A \int_0^T  \int \lambda(v) |D^{\alpha}u_\eps(t,v)|^2 e^{-At} \ud{t} \ud{v} \leq - 2 Q^{\alpha}_{\eps,A}[u_0+f](u_\eps \cf_{[0,T]}) + \|\lambda^\frac12 D^\alpha u_0\|^2_{L^2}. 
\end{align}
Here $Q^{\alpha}_{\eps,A}$ is an operator that decomposes into:
\begin{align}
Q_{\eps,A}^\alpha[\nu ](u) = &\sum_{\beta \leq \alpha} \binom{\alpha}{\beta} Q_{\eps,A}^{\alpha,\beta} [\nu ](u), \label{Qdef} \\
Q_{\eps,A}^{\alpha,\beta}[\nu](u) = 	& \int_0^\infty \int \frac{e^{-At}}{\eps}    {}^\gamma \nabla (D^\alpha u(t)\lambda)   \int_0^t  D^{\alpha-\beta} K[\nu(s)](\frac{t-s}{\eps})  {}^\gamma \nabla D^\beta u(s) \ud{s}\ud{v} 	\ud{t} \label{Kterm}\\
-	&\int_0^\infty\int \frac{e^{-At}}{\eps}    {}^\gamma \nabla (D^\alpha  u(t) \lambda)  \int_0^t  D^{\alpha-\beta} {}^\gamma \nabla \cdot K[\nu(s)](\frac{t-s}{\eps})		 D^\beta u(s) \ud{s}\ud{v} \ud{t}.	\label{Pterm}
\end{align}
We observe that the operators $Q_{\eps,A}^{\alpha,\beta}[\nu](u)$ are linear in the first argument and
quadratic in the second argument.
Using the linearity in the first argument, we rewrite $Q_{\eps,A}^\alpha[u_0+f](u)$ as:
\begin{align}
	Q_{\eps,A}^\alpha[u_0+f](u) &= Q_{\eps,A}^{\alpha,\alpha}[u_0 ](u) + Q_{\eps,A}^{\alpha,\alpha}[f ](u)	+ \sum_{\beta < \alpha} \binom{\alpha}{\beta} Q_{\eps,A}^{\alpha,\beta} [u_0 ](u) + \sum_{\beta < \alpha} \binom{\alpha}{\beta} Q_{\eps,A}^{\alpha,\beta} [f ](u).
\end{align}

Then we show that the first term has the correct sign, i.e. it yields a dissipative term after integrating in time, and the other terms can be handled as a perturbation.
More precisely, the a priori estimate in Theorem~\ref{thm:apriori} is a corollary of the following results.

\begin{lemma} \label{coercivitylemma}
	Let $n\geq 12$ and $u_0 \in H^n_\lambda$ be a function that is bounded above and below by
	\begin{align} \label{eq:u0est}
	c \cf_{|v|\leq 4}(v) \leq u_0(v) \leq C e^{-\frac12|v|}, \quad \|u_0\|_{H^n_{\lambda}} \leq C, \quad \text{for some constants $c,C>0$}.	
	\end{align}
	Let $A\geq 1$, $\eps \in  (0,\frac1{A}]$, $\gamma \in (0,1]$ and $\alpha \in \Naturals^3$
	be a multi-index of absolute value at most $n\in \Naturals$.
	Set $a= \frac{A}{2}$ and let $D^{\alpha}_{\eps,A}$ be given by ($z=a+i\omega$):
	\begin{align} \label{dissipexplic}
	D^{\alpha}_{\eps,A}(u):=  \int_\Reals \int_{\Reals^3}  B_1(\eps z,v) [{}^\gamma \nabla D^\alpha \La (u)(z,v),{}^\gamma \nabla D^\alpha \La (u)(z,v)]  \lambda(v) \ud{v} \ud{\omega} .
	\end{align}	
	Then for some constants $c,C>0$ independent of $A,\eps,\gamma$ we have:
	\begin{align}
	Q^{\alpha,\alpha}_{\eps,A}[u_0](u) \geq c D^{\alpha}_{\eps,A}(u) - C \|u\|_{V^n_{A,\lambda}}^2 					\label{quadrcoerc}. 
	\end{align}
\end{lemma}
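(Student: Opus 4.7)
The inequality \eqref{quadrcoerc} is the same coercivity estimate as in \cite{velazquez_non-markovian_2018}; the cutoff removed in the present paper enters only the cross-terms $Q^{\alpha,\beta}[f]$ (the $Q_2$-analysis from the introduction), so the self-interaction piece $Q^{\alpha,\alpha}[u_0]$ is unaffected. I plan to (i) use Lemma~\ref{Plancherel} to recast the time convolution as a frequency-domain integral, (ii) combine the $K$- and $P$-terms \eqref{Kterm}--\eqref{Pterm} by an integration by parts in $v$, and (iii) conclude via a pointwise lower bound on $\mathrm{Re}\,\La K[u_0](\eps z,v)$ against the quadratic form $B_1(\eps z,v)$ from Notation~\ref{not:weights}.

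\textbf{Plancherel and combination of the $K$- and $P$-terms.} The time convolution with $K[u_0](\cdot/\eps)$ has Laplace symbol $\eps\,\La K[u_0](\eps z,v)$, so Plancherel reduces $Q^{\alpha,\alpha}_{\eps,A}[u_0](u)$ (up to a universal constant) to an integral on $\mathrm{Re}\,z = A/2$ of the quantity $\overline{\La({}^\gamma\!\nabla(D^\alpha u\,\lambda))}\cdot\La K[u_0](\eps z)\,\La({}^\gamma\!\nabla D^\alpha u)$ minus the analogous term with ${}^\gamma\!\nabla\!\cdot\!\La K[u_0](\eps z)$ and $\La D^\alpha u$. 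The Leibniz rule ${}^\gamma\!\nabla(g\lambda) = \lambda\,{}^\gamma\!\nabla g + g\,{}^\gamma\!\nabla\lambda$ (up to a mollifier commutator controlled uniformly in $\gamma$ by $\|u\|_{V^n_{A,\lambda}}^2$, using that $\lambda$ has bounded logarithmic derivative) together with an integration by parts in $v$ on the $P$-term cancels one second-derivative contribution and identifies the principal piece as
\begin{equation*}
\mathrm{Re}\!\int\!d\omega\!\int\!dv\,\lambda(v)\,\bigl\langle {}^\gamma\!\nabla\La D^\alpha u(z,v),\,\La K[u_0](\eps z,v)\,{}^\gamma\!\nabla\La D^\alpha u(z,v)\bigr\rangle.
\end{equation*}
The remaining remainders involve $\nabla\lambda$, $D^2\lambda$, ${}^\gamma\!\nabla\!\cdot\!K[u_0]$ and the mollifier commutator, and are absorbed into $C\|u\|_{V^n_{A,\lambda}}^2$ using the smoothness and decay of $K[u_0]$ inherited from \eqref{FTpotential} and the upper bound on $u_0$.

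\textbf{Lower bound on the symbol.} From the Fourier representation $K[u_0](\tau,v) = \int\!\int k\!\otimes\!k\,|\hat\phi(k)|^2\,u_0(v')\,e^{-i\tau k\cdot(v-v')}\,dk\,dv'$ one computes, with $w = v-v'$ and $a = A/2$,
\begin{equation*}
\mathrm{Re}\,\La K[u_0](\eps z,v) = \int\!\!\int k\!\otimes\!k\,|\hat\phi(k)|^2\,u_0(v')\,\frac{\eps a}{(\eps a)^2 + (\eps\omega + k\cdot w)^2}\,dk\,dv'.
\end{equation*}
The Poisson-type kernel in $k\cdot w$ concentrates on the hyperplane $k\cdot w = -\eps\omega$ and, in the strict limit $\eps a\to 0$, reproduces the Landau projector $|w|^{-1}(I-\hat w\otimes\hat w)$. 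Using $u_0(v')\geq c\,\cf_{|v'|\leq 4}(v')$ to integrate out $v'$, and decomposing $k$ in components parallel and perpendicular to $w$, one obtains the pointwise matrix inequality $\bigl\langle V,\mathrm{Re}\,\La K[u_0](\eps z,v)\,V\bigr\rangle \geq c\,B_1(\eps z,v)(V,V)$: the perpendicular part produces the $|\cdot|_v$-norm term with weight $C_1$ from \eqref{C1}, while the parallel direction (controlled only by the $\eps a$-regularization) yields the $|P_v V|^2$ term with the weaker weight $C_2$ from \eqref{C2}. Inserting this in the previous display and recognizing the definition \eqref{dissipexplic} yields $c\,D^\alpha_{\eps,A}(u)$ and closes \eqref{quadrcoerc}.

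\textbf{Main obstacle.} The delicate point is the symbol lower bound along the parallel-to-$w$ direction: this component is invisible in the strict Landau limit and is recovered only via the $\eps a$-regularization, which is what forces the asymmetric weight $C_2$ in $B_1$ and requires splitting the $k$-integral according to whether $|k\cdot w|$ is large or small compared to $\eps a$. Once this pointwise matrix bound is in hand, the remaining steps (Plancherel, the integration-by-parts identities, and the commutator/lower-order bounds) are routine and uniform in $\eps,\gamma>0$.
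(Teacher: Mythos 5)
Your proposal is correct and follows essentially the same route as the paper, which in fact defers the full argument to Lemma~3.7 of \cite{velazquez_non-markovian_2018} and only sketches the idea (Plancherel, symmetrization, and complex coercivity of $\mathrm{Re}\,\La K[u_0]$ against the anisotropic weight $B_1$). Your explicit Poisson-kernel description of the symbol lower bound matches the $M_1,M_2$ computation in Lemma~\ref{Mlemma}, with the only minor imprecision being that the $C_1\bigl(|P_vV|/(1+|v|)\bigr)^2$ part of $C_1|V|_v^2$ is actually absorbed by the parallel $C_2|P_vV|^2$ contribution (one checks $C_1/(1+|v|)^2\lesssim C_2$ using $a\geq 1/2$) rather than coming from the perpendicular projector.
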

The proof of Lemma~\ref{coercivitylemma} can be executed exactly as the proof of Lemma 3.7 in
\cite{velazquez_non-markovian_2018}. We therefore omit the proof here, and only shortly discuss the idea below. 
\begin{lemma} \label{lem:nonlinhigh}
	Let $A\geq 1$, $n\geq 12$, $\alpha$ a multi-index with $|\alpha|\leq n$ and $c>0$ arbitrary be given. 
	There exists $\delta_0(c,A,n)>0$ such that for all $\delta \in (0,\delta_0]$ and $R,\tilde{R}>0$, $\delta_1 \in (0,1)$
	we can estimate:
	\begin{align}
	(2\pi)^\frac12|Q^{\alpha,\alpha}_{\eps,A}[f](u)|	&\leq  c D^{\alpha}_{\eps,A}(u) +  \|u\|_{V^n_{A,\lambda}}^2, \label{freethmhigh}
	\end{align}
	for all $f \in \Omega^{A,\delta}_{\tilde{R},\delta_1,R,\eps}$, when $0<\eps\leq \eps_0(\delta,R,\tilde{R},\delta_1,A,c,n)$ is small.	
\end{lemma}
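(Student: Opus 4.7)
The plan is to apply Plancherel (Lemma~\ref{Plancherel}) to transport $Q^{\alpha,\alpha}_{\eps,A}[f](u)$ to the line $\{\Re z = A/2\}$ in the Laplace variable, and then to exploit the decomposition $f = f_1 + f_2$ granted by $f \in \Omega^{A,\delta}_{\tilde R, \delta_1, R, \eps}$. Linearity of $Q^{\alpha,\alpha}_{\eps,A}[\cdot](u)$ in its first slot splits the estimate into $Q^{\alpha,\alpha}_{\eps,A}[f_1](u) + Q^{\alpha,\alpha}_{\eps,A}[f_2](u)$, and the two summands will be handled by complementary mechanisms: the $f_1$ piece by the $E_A$-smallness $\delta$, the $f_2$ piece by the factor $\eps|z|$ present in the $F_{\eps,A}$ norm.

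For $Q^{\alpha,\alpha}_{\eps,A}[f_1](u)$, the bound $|D^\beta_v \La(f_1)(z,v)|\leq \delta\, (1+|z|^2)^{-1} e^{-|v|/2}$ (valid for $|\beta|\leq n-6$, hence ample to differentiate $K[f_1]$) plays precisely the role that the pointwise bound $u_0 \leq C e^{-|v|/2}$ plays in Lemma~\ref{coercivitylemma}. Repeating the bilinear analysis leading to \eqref{quadrcoerc} with $u_0$ replaced by $f_1$, I expect an estimate of the form
\begin{equation*}
(2\pi)^{1/2}|Q^{\alpha,\alpha}_{\eps,A}[f_1](u)| \leq C\, \|f_1\|_{E_A}\,\bigl(D^{\alpha}_{\eps,A}(u) + \|u\|^2_{V^n_{A,\lambda}}\bigr),
\end{equation*}
so that fixing $\delta_0 \leq \min(c,1)/(2C)$ makes this at most $\tfrac{c}{2} D^{\alpha}_{\eps,A}(u) + \tfrac12 \|u\|^2_{V^n_{A,\lambda}}$.

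For $Q^{\alpha,\alpha}_{\eps,A}[f_2](u)$ the decisive feature is the cancellation of the prefactor $1/\eps$ in \eqref{Kterm}--\eqref{Pterm} against the factor $\eps|z|$ extracted from $|\La(f_2)(z,v)|\leq R\,\eps|z|\,(1+|z|^2)^{-1}(1+\eps|z|)^{-1} e^{-|v|/2}$. After Plancherel one is left with a multiplier of size $|z|(1+\eps|z|)^{-1}(1+|z|^2)^{-1}$ coupling the two copies of $\La({}^\gamma\nabla D^\alpha u)$. A Cauchy--Schwarz split of this multiplier into one piece weighted by $B_1(\eps z,v)$ (absorbed into $D^\alpha_{\eps,A}(u)$) and one piece weighted by $\lambda(v)$ (absorbed into $\|u\|^2_{V^n_{A,\lambda}}$), combined with the short-time support $\supp f_2 \subset [0,2\delta_1]\times\Reals^3$ on the low-frequency regime $\eps|z|\leq 1$, should yield a bound $C(R,\tilde R,\delta_1,A)\,\eps^{\theta}\,\|u\|^2_{V^n_{A,\lambda}}$ for some $\theta>0$, which is rendered $\leq \tfrac12 \|u\|^2_{V^n_{A,\lambda}}$ by choosing $\eps_0$ small in terms of the remaining parameters.

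The main obstacle in executing this programme is the $P$-part \eqref{Pterm}, where the extra ${}^\gamma\nabla$ falling on $K[f(s)]$ produces the $|v-v'|^{-2}$ singularity described around \eqref{eq:pbterm}. A pointwise-in-$v$ bound on $\nabla\cdot K[f_2(s)]$ is unavailable, since on the ``target'' side of the bilinear form only $u \in L^2$ is controlled, so the naive Cauchy--Schwarz scheme above collapses unless the time integration is used in an essential way. The resolution --- and the technical core of this paper --- is exactly the singular-integral machinery of Section~\ref{sec:Critical}, which controls the full three-factor time convolution of $\nabla u$, $\nabla\cdot K[f]$ and $u$ directly by $D^{\alpha}_{\eps,A}(u)$ and $\|u\|^2_{V^n_{A,\lambda}}$ without any cutoff in $v-v'$, thereby closing the estimate and removing the artificial cutoff used in \cite{velazquez_non-markovian_2018}.
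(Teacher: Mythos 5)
Your high-level plan — isolate the part of $f$ controlled by the $\delta$-smallness and the part where the $\eps|z|$ factor can be squeezed for an $\eps^{\theta}$ gain, then push the $P$-part ($J_2$, $I_3$) to the Section~\ref{sec:Critical} machinery (Lemma~\ref{lem:higherterms}) — is consistent with the paper's strategy, but the central mechanism is organized very differently, and your version has a gap that looks hard to close. First, the linear split $Q^{\alpha,\alpha}_{\eps,A}[f] = Q^{\alpha,\alpha}_{\eps,A}[f_1] + Q^{\alpha,\alpha}_{\eps,A}[f_2]$ destroys the mean-zero property: only $f=f_1+f_2$ is required to lie in $\Gamma^n_{A,\delta_1}$, so $\int f_i\,\ud v$ need not vanish for $i=1,2$. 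The paper never splits $f$ this way. Instead it uses Lemma~\ref{lem:realFT} to combine the $E_A$ and $F_{\eps,A}$ bounds into a single pointwise estimate $\min\{\delta+R\eps|\tau|/(1+\eps|\tau|),\,R/(1+\eps|\tau|)\}$ on $\La(f)$, symmetrizes $\Lambda$ in the two Laplace variables to obtain $\Lambda_0,\Lambda_1,\Lambda_2$, further decomposes $L_2$ into $N_1 + N_2$, and then — crucially — applies Lemma~\ref{N1lemma}, whose mean-value-theorem trick requires $\int h\,\ud v = 0$ for the full $f$, to upgrade the naive $|v|^{-2}$ decay of the $N_1$ convolution to $|v|^{-3}$. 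That extra factor of $(1+|v|)^{-1}$ is exactly what is needed to match the anisotropic parallel weight $C_2$ in $B_1$ and hence in $D^{\alpha}_{\eps,A}$. Your ``repeat the bilinear analysis of Lemma~\ref{coercivitylemma}'' does not reproduce this: that lemma is a coercivity (lower) bound predicated on the positivity of $u_0$, not an absolute bound, and without symmetrization plus mean-zero the parallel component of $M_2$ fails to sit inside the dissipation.

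Two smaller points: the ``cancellation of the prefactor $1/\eps$ against $\eps|z|$'' is not quite how the bookkeeping works — after passing to Laplace variables, the $1/\eps$ in \eqref{Kterm}--\eqref{Pterm} has already been absorbed into the argument $\eps z$ of $\Lambda$ (the Jacobian from $s\mapsto(t-s)/\eps$), so what remains is that the $\eps|z|/(1+\eps|z|)$ factor from $F_{\eps,A}$ feeds into the frequency integral $\int_\Reals R\eps|\tau|/((1+\eps|\tau|)(1+|\tau|)^2)\,\ud\tau \leq CR\eps^{1/2}$ (eq.~\eqref{loggain}) to produce the $\eps^{\theta}$ gain. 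And $\supp f_2$ is not controlled by the definition of $\Omega^{A,\delta}_{\tilde R,\delta_1,R,\eps}$ — only $\supp f$ is — so that step in the $f_2$ argument is also unavailable as written.
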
 
The proof of Lemma~\ref{lem:nonlinhigh} requires only minor modifications from the one in \cite{velazquez_non-markovian_2018} and can therefore be found in the appendix.

\begin{lemma} \label{lem:lowerorder}
	Let $n\geq 12$  and  $c>0$ an arbitrarily small constant. There exists $C>0$ such that for all $A\geq 1$, $\alpha\in \Naturals^3$, $|\alpha|\leq n$ there
	exists $\delta_0(c,A,n)>0$ such that for all $\delta,\delta_1 \in (0,\delta_0]$ and all $\tilde{R},R>0$, 
	we can estimate:
	\begin{align}
		\sum_{\beta<\alpha} \binom{\alpha}{\beta}|Q^{\alpha,\beta}_{\eps,A}[u_0](u)|&\leq c D^\alpha_{\eps,A}(u)+ C \|u\|^2_{V^n_{A,\lambda}}, \label{quadrbili}\\ 
	\sum_{\beta < \alpha} \binom{\alpha}{\beta} |Q^{\alpha,\beta}_{\eps,A}[f ](u)|	&\leq  c D^{\alpha}_{\eps,A}(u) +  C\|u\|_{V^n_{A,\lambda}}^2, \label{freethmlow}
	\end{align}
	for all $f \in \Omega^{A,\delta}_{\tilde{R},\delta_1,R,\eps}$, when $0<\eps\leq \eps_0(\delta,\delta_1,R,\tilde{R},A,c_1,n)$ is small.	
\end{lemma}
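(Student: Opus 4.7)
The plan is to work entirely on the Laplace side. Applying Plancherel (Lemma~\ref{Plancherel}) rewrites each $Q^{\alpha,\beta}_{\eps,A}[\nu](u)$, for $\beta<\alpha$, as an integral over the critical line $z=a+i\omega$, $a=A/2$, of a trilinear form in $\La({}^\gamma\nabla D^\alpha u\,\lambda)(z,v)$, the rescaled kernel $D^{\alpha-\beta}\La K[\nu](\eps z,v)$ (or its divergence, for the second term in $Q^{\alpha,\beta}$), and $\La({}^\gamma\nabla D^\beta u)(z,v)$. The key structural observation is that $|\beta|+1\leq|\alpha|\leq n$, so the low-derivative factor $\La({}^\gamma\nabla D^\beta u)$ is already controlled by $\|u\|_{V^n_{A,\lambda}}$ via Plancherel. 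At each fixed $z$ I apply Cauchy--Schwarz in $v$ against the quadratic form $B_1$: one copy of $|{}^\gamma\nabla D^\alpha\La u|$ is absorbed into $D^\alpha_{\eps,A}(u)$, and the remaining factor only needs a suitable uniform-in-$z$ bound on the kernel times the low factor.

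For the $u_0$ contribution I exploit that $\La u_0(z)=u_0/z$, so the rescaled kernel $\La K[u_0](\eps z,v)$ has precisely the $z$-homogeneity compatible with the weights $C_1,C_2$ entering $B_1$, and the pointwise bounds \eqref{eq:u0est} together with $u_0\in H^n_\lambda$ supply enough regularity for $D^{\alpha-\beta}K[u_0](t,v)$ to have the required decay in $v$, with constants depending on $\|u_0\|_{H^n_\lambda}$. This yields \eqref{quadrbili} with $C$ depending on $u_0$ and $c$ arbitrarily small once a free small factor is paired against the dissipative weight. For $\nu=f\in\Omega^{A,\delta}_{\tilde{R},\delta_1,R,\eps}$ I use the splitting $f=f_1+f_2$: the $E_A$-component $f_1$ enters with norm at most $\delta$, which for $\delta\leq\delta_0(c,A,n)$ small is absorbed into $c D^\alpha_{\eps,A}(u)$; the $F_{\eps,A}$-component $f_2$ has Laplace transform bounded by a multiple of $\tfrac{\eps|z|}{(1+|z|^2)(1+\eps|z|)}$, so that each estimate carries a free factor $\eps|z|$ in the low-frequency regime, and integration against $(1+|z|^2)^{-1}$ on the critical line yields an overall $O(\eps)$ contribution, absorbable for $\eps\leq\eps_0$ small.

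The main obstacle is the singular behaviour of $D^{\alpha-\beta}\La K[\nu](\eps z,v)$ near $v\approx v'$: each $v$-derivative on the kernel a priori worsens the $|v-v'|^{-1}$ limit singularity identified in \eqref{eq:adef}, so the kernel cannot be controlled pointwise for $|\alpha-\beta|\geq 1$. The remedy, parallel to the analysis of Section~\ref{sec:Critical}, is to integrate by parts in $v$, transferring the extra derivatives from the kernel onto $\La({}^\gamma\nabla D^\beta u)$ and turning $D^\beta$ into $D^{\beta'}$ with $|\beta'|=|\alpha|\leq n$; the boundary terms vanish thanks to the weight $\lambda(v)=e^{|v|}$ and the exponential decay of $\nu$. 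After this rearrangement the kernel carries only the original $|v-v'|^{-1}$ singularity of $K[\nu]$, and the resulting trilinear form is structurally identical to the top-order $Q^{\alpha,\alpha}_{\eps,A}$ already treated in Lemmas~\ref{coercivitylemma} and~\ref{lem:nonlinhigh} and, crucially, in Section~\ref{sec:Critical}, differing only in the linear $\nu$-coefficient. The bounds \eqref{quadrbili} and \eqref{freethmlow} then follow by applying those estimates with $\nu$ replaced by the corresponding lower-order derivative of $u_0$ or $f$.
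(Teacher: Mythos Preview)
Your integration-by-parts step has a genuine gap. You claim that integrating by parts in $v$ transfers $D^{\alpha-\beta}$ from the kernel onto $\La({}^\gamma\nabla D^\beta u)$, producing a form ``structurally identical to the top-order $Q^{\alpha,\alpha}_{\eps,A}$''. But integration by parts in the outer $v$-integral distributes $D^{\alpha-\beta}$ over \emph{both} $u$-factors via the Leibniz rule; the cross terms in which at least one derivative lands on ${}^\gamma\nabla(D^\alpha u\,\lambda)$ carry $|\alpha|+2$ or more derivatives of $u$, and for $|\alpha|=n$ these are not controlled by $V^n_{A,\lambda}$. If instead you meant integration by parts in the convolution variable $v'$ (so that $D^{\alpha-\beta}K[\nu]=K[D^{\alpha-\beta}\nu]$, which is clean), then the second $u$-factor remains ${}^\gamma\nabla D^\beta u$, the form is \emph{not} $Q^{\alpha,\alpha}$, and Lemmas~\ref{coercivitylemma} and~\ref{lem:nonlinhigh} are not directly applicable. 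Your final sentence (``$\nu$ replaced by the corresponding lower-order derivative of $u_0$ or $f$'') seems to conflate these two different integrations by parts. Moreover, nothing in Section~\ref{sec:Critical} is ``parallel'' to an integration-by-parts reduction to $Q^{\alpha,\alpha}$; that section handles the singularity by a different mechanism.

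The paper does not reduce to $Q^{\alpha,\alpha}$. The sum over $\beta<\alpha$ is split according to whether $|\beta|>n-4$ or $|\beta|\leq n-4$. In the first regime $|\alpha-\beta|\leq n-7$, so the kernel $D^{\alpha-\beta}\Lambda[f]$ can be bounded pointwise using the smallness of $f$ via the $E_A/F_{\eps,A}$ decomposition and Lemma~\ref{lem:realFT} (this is Lemma~\ref{lem:higherterms}). In the second regime the key tool is the $L^6_{\mathrm{loc}}$ bound on $D^\mu\Lambda[f]$ coming from the inverse-Laplacian estimate of Lemma~\ref{lem:singularintegral} and Corollary~\ref{lem:kernelL2}; on each ball of a Vitali cover this $L^6$ factor is paired with an $L^3_{\mathrm{loc}}$ Sobolev bound on the low-order factor $\La(D^\nu(u_\eps-u_0))$, and the argument is closed in the Laplace variable using the $L^1$-in-$\omega$ bootstrap \eqref{eq:uRe0} of Lemma~\ref{lem:lowerfreez} (this is Lemma~\ref{lem:uregular}). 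It is the $L^6$--$L^3$--$L^2$ H\"older decomposition, not integration by parts, that compensates for the $|v-v'|^{-2}$ singularity of the differentiated kernel.
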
 
When keeping the full singularity of the kernel $K$, the critical terms are the $Q_{\eps,A}^{\alpha,\beta}$ with $\beta\approx \alpha$, since then the derivatives
act on the singular kernel $K$. Therefore, the proof of Lemma~\ref{lem:lowerorder} requires new ideas and is a key point of this paper. The proof is carried out in Section~\ref{sec:Critical}.

 Since the dissipativity proved in Lemma~\ref{coercivitylemma} is crucial to the understanding of the a priori estimate, we briefly sketch the idea of the proof here.
We rewrite $Q^{\alpha,\alpha}_{\eps,A}$  as:
\begin{align*}
Q_{\eps,A}^{\alpha,\alpha}[u_0](u) = 	& \int_0^\infty \int \frac{e^{-At}}{\eps}    {}^\gamma \nabla D^\alpha u(t)\lambda   \int_0^t   K[u_0](\frac{t-s}{\eps})  {}^\gamma \nabla D^\alpha u(s) \ud{s}\ud{v} 	\ud{t}\\
+	& \int_0^\infty \int \frac{e^{-At}}{\eps}   D^\alpha u(t) {}^\gamma \nabla (\lambda)   \int_0^t   K[u_0](\frac{t-s}{\eps})  {}^\gamma \nabla D^\alpha u(s) \ud{s}\ud{v} 	\ud{t}\\
-	&\int_0^\infty\int \frac{e^{-At}}{\eps}    {}^\gamma \nabla (D^\alpha  u(t) \lambda)  \int_0^t   {}^\gamma \nabla \cdot K[u_0](\frac{t-s}{\eps})		 D^\alpha u(s) \ud{s}\ud{v} \ud{t}.	
\end{align*}
The crucial point to handle $Q^{\alpha,\alpha}_{\eps,A}$ is the first term on the right-hand
side above. 
Using Lemma~\ref{Plancherel}, the time integration transforms into ($V=V(z,v)= \La({}^\gamma \nabla D^\alpha u)(z,v)$):
\begin{align*}
& (2\pi)^{\frac12}\int_0^\infty \int_{\Reals^3} \frac{e^{-At}}{\eps}    {}^\gamma \nabla D^\alpha u(t)\lambda   \int_0^t   K[u_0](\frac{t-s}{\eps})  {}^\gamma \nabla D^\alpha u(s) \ud{s}\ud{v} 	\ud{t}	
\\
= &\int_\Reals \int_{\Reals^3}  \langle V\lambda  ,  \La( K)[u_0](\eps z) V \rangle \ud{v} 	\ud{\omega}
\end{align*} 
where $z=a+i\omega=A/2+i\omega$. Since the function $\La(K)$ is pointwise a symmetric matrix, we can symmetrize the expression above and obtain:
\begin{align*}
& (2\pi)^{\frac12}\int_0^\infty \int \frac{e^{-At}}{\eps}    {}^\gamma \nabla D^\alpha u(t)\lambda   \int_0^t   K[u_0](\frac{t-s}{\eps})  {}^\gamma \nabla D^\alpha u(s) 
=\int_\Reals \int   \langle V\lambda ,   \Re(\La( K))[u_0](\eps z) V \rangle \ud{v} 	\ud{\omega}.
\end{align*}
The particular kernel $\La(K)$ given by the identity \eqref{KLapl} has the property
that the real part $\Re(\La(K))$ is complex coercive if $u_0$ is a positive function.
A careful analysis of this coercivity property yields a lower
bound for $Q^{\alpha,\alpha}_{\eps,A}[u_0](u)$ in terms
of the anisotropic weight functional $B$ introduced in \eqref{quadrC} and yields Lemma~\ref{coercivitylemma}.



Due to the explicit choice of the potential $\phi$ (cf. \eqref{FTpotential}), some
integrals appearing in this analysis become explicit. More precisely,
we make use of the subsequent lemma, for the computation we refer to \cite{velazquez_non-markovian_2018}.
\begin{lemma} \label{Mlemma}
	Define matrix-valued functions $M_1(z,v),M_2(z,v)$ by
	\begin{equation}\label{defM}
	\begin{aligned}
	M_1(z,v)	&:= \frac{\pi^2}{4 |v|} \frac{1}{1+\frac{z}{|v|}} P_v^\perp, \quad 			 			M_2(z,v)	:= \frac{\pi^2}{4 |v|} \frac{ \frac{z}{|v|}}{(1+\frac{z}{|v|})^2} P_{v},\quad  \text{ for $\Re(z)\geq 0$, $v\in \Reals^3$} .		
	\end{aligned}
	\end{equation} 
	Further let $u \in H^n_{\tilde{\lambda}}$ (cf. \eqref{Hnnorm}), then the following identity holds:
	\begin{align} \label{Midentity}
	\int (k \otimes k) |\hat{\phi}(k)|^2 \frac{z}{z^2+(k\cdot v)^2} \ud{k} &=  M_1(z,v)+M_2(z,v)\\
	\La(K[u])(z,v) 	&=  \int (M_1+M_2)(z,v-v') u(v')  \ud{v'} \label{KLapl} .
	\end{align}
\end{lemma}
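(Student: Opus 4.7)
The plan has two parts: first prove \eqref{Midentity}, then deduce \eqref{KLapl} from it. For the deduction, starting from the definition of $K$ and interchanging the $dv'$ integration with the Laplace transform in $t$, the inner $x$-integral can be diagonalised by Plancherel. Using the paper's Fourier convention, $\widehat{\partial_i \phi}(k) = ik_i\hat\phi(k)$ and $\widehat{f(\cdot-y)}(k) = e^{-ik\cdot y}\hat f(k)$, so
\[
\int \nabla\phi(x) \otimes \nabla\phi(x-tw)\, dx = \int (k\otimes k)|\hat\phi(k)|^2 e^{-itk\cdot w}\, dk
\]
with $w=v-v'$. Integrating against $e^{-zt}\, dt$ for $\Re(z)>0$ produces a factor $(z+ik\cdot w)^{-1} = (z-ik\cdot w)/(z^2+(k\cdot w)^2)$; the imaginary part is odd in $k$ while $(k\otimes k)|\hat\phi(k)|^2$ is even in $k$, so only $z/(z^2+(k\cdot w)^2)$ survives in the integral. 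This yields \eqref{KLapl} once \eqref{Midentity} is applied with $v$ replaced by $v-v'$.

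For \eqref{Midentity} itself, I would exploit the $O(3)$-symmetry of $\hat\phi$: for fixed $z$ with $\Re(z)\geq 0$ and $v\neq 0$, the matrix-valued left-hand side must decompose as $f_\perp(z,|v|)P_v^\perp + f_\parallel(z,|v|) P_v$, so only two scalar coefficients need to be computed. Choosing coordinates with $v=|v|e_1$ and $k=(k_1,k_\perp)\in \Reals\times \Reals^2$, the weight $z/(z^2+(k\cdot v)^2)$ depends only on $k_1$, and the two-dimensional $k_\perp$-integrals against $|\hat\phi(k)|^2 = (1+|k|^2)^{-3}$ can be evaluated in closed form by passing to polar coordinates and integrating $\rho^j(1+k_1^2+\rho^2)^{-3}\, d\rho$ for $j=1,3$.

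What remains, and what I expect to be the most computation-heavy step, is evaluating two one-dimensional rational integrals in $k_1$ against the weight $z/(z^2+|v|^2 k_1^2)$ and reorganising the result into the partial-fraction form prescribed by \eqref{defM}. I would close the contour in the upper half-plane: for $\Re(z)>0$ the integrand has a double pole at $k_1=i$, coming from $(1+k_1^2)^{-2}$ after the $k_\perp$-integration, and a simple pole at $k_1 = iz/|v|$, while the conjugate poles lie in the lower half-plane. Summing the two residues and collecting terms in the variable $\zeta = z/|v|$, the outcome must assemble into linear combinations of $1/(1+\zeta)$ and $\zeta/(1+\zeta)^2$, yielding exactly the coefficients $\pi^2/(4|v|)$ times $P_v^\perp$ and $P_v$ respectively, matching the definitions of $M_1$ and $M_2$. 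The boundary case $\Re(z)=0$ then follows by continuity, and the hypothesis $u\in H^n_{\tilde\lambda}$ is used only to ensure absolute convergence of the outer $dv'$ integral in \eqref{KLapl}, uniformly in $\omega = \Im(z)$.
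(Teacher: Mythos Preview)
Your approach is correct and complete: Plancherel in $x$, the parity argument to kill the $ik\cdot w$ term, the $O(3)$-reduction to two scalar coefficients, the explicit $k_\perp$-integration against $(1+|k|^2)^{-3}$, and closing the $k_1$-contour in the upper half-plane all go through exactly as you describe (one minor inaccuracy: only the parallel coefficient picks up a double pole at $k_1=i$, the perpendicular one has a simple pole there, but this does not affect your plan). The paper itself does not prove this lemma at all --- it states the result and defers the computation to \cite{velazquez_non-markovian_2018} --- so there is nothing to compare against here; your argument is the natural one and is almost certainly what the cited reference contains.
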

The following Proposition is the analog of Lemma 3.3 in \cite{velazquez_non-markovian_2018}, here however with the singularity in $v$ kept.
\begin{proposition}
	Let $\beta\in \Naturals^3$ be a multi-index and $M_i$, $i=1,2$ as introduced in \eqref{defM}. Then there exists a constant $C_{|\beta|}>0$, such that for $V,W\in \Complex^3$ arbitrary complex vectors, $z\in \Complex$ with $\Re(z)\geq0$ and $v\in \Reals^3$ we have:
	\begin{align} \label{eq:Mestimate}
	|\langle V, D^\beta(M_i(z,v)) W\rangle | \leq  \frac{C_{|\beta|}|V||W|}{|v|^{|\beta|+1} (1+\alpha(z,v))}.
	\end{align}
\end{proposition}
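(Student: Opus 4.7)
The plan is to decouple the matrix structure of $M_i$ from its scalar radial structure in $|v|$, so that $D^\beta_v$ can be distributed by Leibniz. Using $\tfrac{1}{1+z/|v|}=\tfrac{|v|}{|v|+z}$, I first rewrite
\begin{equation*}
M_1(z,v)=\frac{\pi^2}{4}\cdot\frac{P_v^\perp}{|v|+z},\qquad M_2(z,v)=\frac{\pi^2}{4}\cdot\frac{zP_v}{(|v|+z)^2}.
\end{equation*}
The projections $P_v$ and $P_v^\perp=I-P_v$ are $0$-homogeneous and smooth on $\Reals^3\setminus\{0\}$, hence $|D^\gamma P_v|+|D^\gamma P_v^\perp|\leq C_\gamma/|v|^{|\gamma|}$. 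For the scalar factor, which depends on $v$ only through $|v|$, the radial chain rule gives for $|\gamma|\geq 1$ and any $k\geq 1$
\begin{equation*}
\bigl|D^\gamma(|v|+z)^{-k}\bigr|\leq C_{|\gamma|,k}\sum_{j=1}^{|\gamma|}\frac{1}{|v|^{|\gamma|-j}\,||v|+z|^{k+j}},
\end{equation*}
the case $|\gamma|=0$ being the trivial bound $||v|+z|^{-k}$.

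The analytic content is concentrated in three elementary inequalities, all valid for $\Re z\geq 0$:
\begin{equation*}
||v|+z|\geq\tfrac{1}{\sqrt 2}\bigl(|v|+|\Im z|\bigr),\qquad|z|\leq||v|+z|,\qquad|v|\bigl(1+\alpha(z,v)\bigr)\leq 2\bigl(|v|+|\Im z|\bigr).
\end{equation*}
The first two follow directly from $\Re(|v|+z)\geq|v|\geq 0$; the third is verified by splitting into the regimes $|v|\leq 1$ and $|v|\geq 1$ and using the definition $\alpha(z,v)=|\Im z|/(1+|v|)$. Chained together with $(1+\alpha)^{k+j}\geq 1+\alpha$, they produce the master bound
\begin{equation*}
\frac{1}{|v|^{|\gamma|-j}\,||v|+z|^{k+j}}\leq\frac{C}{|v|^{|\gamma|+k}\,(1+\alpha(z,v))}\quad\text{for all admissible } j.
\end{equation*}

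With these building blocks in hand, Leibniz applied to $D^\beta M_1$ produces a sum of terms each bounded by $C|v|^{-|\gamma_1|}\cdot C|v|^{-|\gamma_2|-1}(1+\alpha)^{-1}$ with $|\gamma_1|+|\gamma_2|=|\beta|$, which collapses to $C/(|v|^{|\beta|+1}(1+\alpha))$. For $M_2$ the extra numerator factor $|z|$ is absorbed by one copy of $|z|\leq ||v|+z|$, reducing $|z|\,|D^{\gamma_2}(|v|+z)^{-2}|$ to a $k=1$-type quantity, after which the same bookkeeping closes the estimate. Inserting into the trivial inequality $|\langle V,D^\beta M_i W\rangle|\leq|V|\,|W|\,|D^\beta M_i|$ then yields \eqref{eq:Mestimate}.

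The main obstacle is purely combinatorial: in each Leibniz subterm the powers of $|v|$ and $||v|+z|$ must assemble exactly to $|v|^{|\beta|+1}(1+\alpha)$, and the single surviving factor of $(1+\alpha)$ must not be dissipated by careless estimation. The only nontrivial analytic input is the third inequality $|v|(1+\alpha)\leq 2(|v|+|\Im z|)$, which is the unique place where the hypothesis $\Re z\geq 0$ genuinely enters.
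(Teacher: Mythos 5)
Your proof is correct and takes essentially the same route as the paper's, which simply asserts the result using the $(-1)$-homogeneity of $P_v/|v|$, $P_v^\perp/|v|$ together with explicit differentiation of the remaining radial factor; you have merely spelled out the Leibniz and radial-chain-rule bookkeeping. One small slip in your closing remark: the hypothesis $\Re z\geq 0$ actually enters through your first two inequalities $||v|+z|\geq\tfrac1{\sqrt 2}(|v|+|\Im z|)$ and $|z|\leq||v|+z|$ (both of which can fail if $\Re z$ is negative), whereas the third inequality $|v|(1+\alpha(z,v))\leq 2(|v|+|\Im z|)$ is a purely definitional bound that holds for every $z\in\Complex$ and makes no use of the sign of $\Re z$.
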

\begin{proof}
	Using that $\frac{P_v}{|v|}, \frac{P^\perp_v}{|v|}$ are functions with homogeneity $-1$, the
	estimate follows by explicit differentiation of the remaining function.
\end{proof}
\begin{lemma} \label{anisotrope}
	Let $z\in \Complex$ with $0\leq \Re(z)\leq 1$ and $\beta$ be a multi-index.
	Let $V,W \in \Complex^3$ be  complex vectors. Further let $n\geq 1$ and $f \in H^n_{\tilde{\lambda}}$ satisfy $|f|\leq C e^{-\frac12 |v|}$.
	Recall $B_1$, $B_2$ as defined in \eqref{quadrC}-\eqref{quadrCu} and $C_1$ defined in \eqref{C1}.
	Then for $|\beta|\leq 1$, there holds
	\begin{align}
	\left|\int_{\Reals^3} \langle V,(M_1+M_2)(z,v-v')  W \rangle | f(v')  \ud{v'} \right|
	\leq C	&(1+\alpha(z,v)) B_2(z,v)(V,W)
	\label{matrixest2} \\
	\left|\int_{\Reals^3} \langle V, D^\beta \left((M_1+M_2)(z,v-v')\right) W \rangle  f(v')  \ud{v'} \right|	\leq C	&\frac{(1+\alpha(z,v))}{(1+|v|)^{|\beta|}} C_1(z,v)|V||W|.
	\label{matrixest3}
	\end{align}
\end{lemma}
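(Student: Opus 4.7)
The plan is to derive both inequalities from the pointwise bound \eqref{eq:Mestimate} combined with a near--far splitting of the $v'$-integration, namely $\{|v-v'| \leq (1+|v|)/2\}$ and its complement, together with the hypothesis $|f(v')| \leq Ce^{-|v'|/2}$.

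I would start with the simpler inequality \eqref{matrixest3}. Inserting \eqref{eq:Mestimate} bounds the integrand by $C_{|\beta|}|V||W|f(v')/\bigl(|v-v'|^{|\beta|+1}(1+\alpha(z,v-v'))\bigr)$. In the near region one has $|v'|\geq |v|/2$, so $|f(v')|\leq Ce^{-|v|/4}$ provides an exponentially small prefactor that easily absorbs the integrable singularity $|v-v'|^{-(|\beta|+1)}$ (integrable in $\Reals^3$ since $|\beta|+1\leq 2<3$). In the far region one has $1+|v-v'| \simeq 1+|v|$, hence $|v-v'|^{-(|\beta|+1)} \lesssim (1+|v|)^{-(|\beta|+1)}$ and $1+\alpha(z,v-v') \gtrsim 1$ while differing from $1+\alpha(z,v)$ only by a bounded factor; integrating $f$ against these slowly varying factors produces $C/\bigl((1+|v|)^{|\beta|+1}(1+\alpha(z,v))\bigr)$, which after comparing with $C_1(z,v) = (1+|v|)^{-1}(1+\alpha(z,v))^{-2}$ is exactly $(1+\alpha(z,v))(1+|v|)^{-|\beta|}C_1(z,v)|V||W|$, as required.

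For the anisotropic estimate \eqref{matrixest2} I would decompose $V = P_v V + P_v^\perp V$, similarly $W$, and expand $\langle V, (M_1+M_2)(z,v-v')W\rangle$ into its four bilinear pieces. The geometric key is that in the far region the unit vector $(v-v')/|v-v'|$ differs from $v/|v|$ by $O(|v'|/(1+|v|))$, so the projectors satisfy $P_{v-v'}^\perp = P_v^\perp + O(|v'|/(1+|v|))$ and $P_{v-v'} = P_v + O(|v'|/(1+|v|))$ as linear maps. Inserted into $M_1$ this produces four contributions to $\langle V, M_1 W\rangle$: no extra smallness on the $V_\perp$--$W_\perp$ term, one factor $|v'|/(1+|v|)$ on each of the two mixed terms, and two such factors on the $V_\parallel$--$W_\parallel$ term. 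The $M_2$ piece is intrinsically parallel--parallel; its scalar prefactor $\frac{z/|v-v'|}{(1+z/|v-v'|)^2}$ carries the additional smallness that, after the near/far integration, produces the $\beta(z,v)+\alpha(z,v)^2$ factor needed to match the weight $C_3$ appearing in $B_2$. Integrating each contribution against $f$ using the same near/far split, and using that $|v'|^k f(v')$ is integrable for every $k$, distributes the explicit factors $|v'|/(1+|v|)$ and $(|v'|/(1+|v|))^2$ precisely into the coefficients $(1+|v|)^{-1}$ and $(1+|v|)^{-2}$ that appear in $|V|_v|W|_v$. The $(1+\alpha(z,v))$ slack on the right-hand side of \eqref{matrixest2} absorbs the loss from replacing $\alpha(z,v-v')$ by $\alpha(z,v)$.

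The main obstacle is the anisotropic bookkeeping in the far region: one must carefully expand $P_{v-v'}^\perp$ into its leading term $P_v^\perp$ plus a controlled remainder and track how each piece acts on the four combinations $V_\parallel W_\parallel,\,V_\parallel W_\perp,\,V_\perp W_\parallel,\,V_\perp W_\perp$, then verify that the parallel--parallel term inherits the stronger weight $C_3$ coming from $M_2$ rather than the weaker weight one would obtain from $M_1$ alone. Once the anisotropy is correctly exposed, both integration steps reduce to the same near/far argument used for \eqref{matrixest3}.
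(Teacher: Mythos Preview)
The paper does not actually supply a proof of this lemma; it is stated and then immediately used, with the argument presumably deferred to \cite{velazquez_non-markovian_2018}. So there is no in-paper proof to compare against, and I evaluate your proposal on its own.

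Your strategy is the natural one and is essentially correct. The near/far split at scale $(1+|v|)/2$ together with the pointwise bound \eqref{eq:Mestimate} handles \eqref{matrixest3} directly, and for \eqref{matrixest2} the expansion $P_{v-v'}^\perp=P_v^\perp+O(|v'|/(1+|v|))$ in the far region, integrated against the exponentially decaying $f$, is precisely what distributes the extra $(1+|v|)^{-1}$ factors into $|\cdot|_v$. Two small points of care. First, your assertion that in the far region $1+\alpha(z,v-v')$ and $1+\alpha(z,v)$ differ only by a bounded factor is not literally true uniformly in $v'$: if $|v'|\gg|v|$ then $\alpha(z,v-v')\ll\alpha(z,v)$. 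You should sub-split once more (e.g.\ at $|v'|\le|v|/4$) so that on one piece $|v-v'|\simeq|v|$ genuinely holds, while the complementary piece inherits $e^{-c|v|}$ from $f$ and is harmless. Second, note that the weight $C_3$ entering $B_2$ is never defined in this paper (an omission carried over from the earlier work); your identification of the $M_2$ contribution with a $\beta+\alpha^2$ factor matches $C_2$, which governs the \emph{real part} relevant for coercivity, whereas for the modulus bound needed here the scaling of $M_2$ is $(\beta+\alpha)/\bigl((1+|v|)(1+\alpha)^2\bigr)$. Whatever $C_3$ is intended to be, it should reflect the latter, so adjust that sentence accordingly.
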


With the Lemmas above at hand, we can conclude the proof of Theorem~\ref{thm:apriori}:
There exists a $C>0$ such that for all $A\geq 1$ and there exist $\delta,\delta_0>0$ s.t. for all $R,\tilde{R}>0$, $\delta_1\in(0,\delta_0)$ there is an $\eps_0>0$ such that
\begin{align*}
	|Q^\alpha_{\eps,A}(u_0 + f )(u)| \leq C \|u\|^2_{V^n_{A,\lambda}}. 
\end{align*}
Then the identity \eqref{eq:apriori} implies (possibly changing $C$)
\begin{align}
	A \|u_\eps\|^2_{V^n_{A,\lambda}} \leq C\|u_\eps\|^2_{V^n_{A,\lambda}} + \|u_0\|_{H^n_\lambda}^2. \label{eq:aprio1}
\end{align}
On the other hand, we can use the equation \eqref{eq:fixedmollified} to find an estimate for the time derivative. For any multi-index $\alpha$ with $|\alpha|\leq n$ we have:
\begin{equation} \label{eq:derivlaplace}
\begin{aligned}	
	\La (\partial_t D^\alpha u_\eps ) = &\sum_{\beta\leq \alpha}	\binom{\alpha}{\beta} {}^\gamma \nabla \cdot \left( \int (M_1+M_2)(\eps z,v-v') D^\beta u_0(v')  {}^\gamma \nabla \La(D^{\alpha-\beta}u_\eps \kappa_{2\delta_1})(z,v)\ud{v'} \right)  \\
	-&\binom{\alpha}{\beta} {}^\gamma \nabla \cdot \left( \int  {}^\gamma \nabla(M_1+M_2)(\eps z,v-v') D^\beta u_0(v')   \La(D^{\alpha-\beta} u_\eps \kappa_{2\delta_1})(z,v) \ud{v'} \right) \\
	+& \binom{\alpha}{\beta}{}^\gamma \nabla \cdot \left( \int (M_1+M_2)(\eps z,v-v')   {}^\gamma \nabla \La(D^\beta f(\cdot,v') D^{\alpha-\beta} u_\eps\kappa_{2\delta_1} (\cdot,v))(z) \ud{v'}\right)  \\
	-&  \binom{\alpha}{\beta}{}^\gamma \nabla \cdot \left( \int  {}^\gamma \nabla(M_1+M_2)(\eps z,v-v')\La( D^\beta f(\cdot,v') D^{\alpha-\beta} u_\eps(\cdot,v))(z) \ud{v'}  \right).
\end{aligned}
\end{equation}
Since $\|f\|_{X^n_{A}}\leq 1$, we can estimate $f$ uniformly in time as:
\begin{align*}
	\|f(t,\cdot)\|_{H^m_{\tilde{\lambda}}} \leq C, \quad \text{for $m\leq n-2$}.
\end{align*}
Bringing this to the equation \eqref{eq:derivlaplace} and using $|M_1(z,v)|+|M_2(z,v)|\leq C/|v|$ gives
\begin{align*}
	\|\partial_t  u_\eps\|_{V^l_{A,\lambda}} \leq C \| u_\eps\|_{V^n_{A,\lambda}}, \quad \text{for $l\leq n-4$}.
\end{align*} 
For $0< \delta_1\leq 1/A $ and $l\leq n-4$ this yields:
\begin{align*}
	\|u_\eps \kappa_{2\delta_1}(t,\cdot)\|_{H^l_\lambda} \leq C \| u_\eps\|_{V^n_{A,\lambda}} 
\end{align*}
We assume $n\geq 12$, so we know $|\beta|\leq n-4$ or $|\alpha-\beta|\leq n-4$. This allows to use \eqref{eq:derivlaplace} and infer
\begin{align*}
	\|\partial_t (u_\eps-u_0)\|_{V^{n-2}_{A,\lambda}}\leq C \|u_\eps\|_{V^{n}_{A,\lambda}}. 
\end{align*}
Furthermore, Poincaré inequality yields 
\begin{align}
\|\partial_t ((u_\eps-u_0)\kappa_{\delta_1})\|_{V^{n-2}_{A,\lambda}}\leq C \|\partial_t (u_\eps-u_0)\|_{V^{n-2}_{A,\lambda}},
\end{align}
with a constant independent of $\delta_1>0$.
 With \eqref{eq:aprio1} we conclude that we can pick $A\geq 1$ large enough such that \eqref{PsiEst} holds.

\end{proofof}
\section{A priori estimate for the non-Markovian system with the full kernel} \label{sec:Critical}

\subsection{Toolbox}

We start by providing some lemmas that will enable us to use equation \eqref{eq:fixedmollified} in Laplace variables to bootstrap estimates on the solution $u_\eps$. To this end, we introduce the following notation.
\begin{notation}
	The convolution $*_a$ is to be understood as ($z=a+i\omega$): 
	\begin{align}\label{eq:convolution}
	(f*_a g)(z)=(f*_a g)(a+i\omega)= \frac{1}{2\pi}\int_{\Reals} f(i\theta) g(a+i(\omega-\theta))\ud{\theta}.
	\end{align}
\end{notation}

\begin{lemma} \label{lem:lowerfreez}
	Let $A\geq 1$, $\delta_1\in (0,1/A)$, $f \in \Gamma^{n}_{A,\delta_1}$ (cf. \eqref{def:Gamma}) with $\|f\|_{V^n_{A,\tilde{\lambda}}}\leq 1$, and assume $u$ solves equation \eqref{eq:fixedmollified}. Then for $m\leq n-4$ we have:
	\begin{align}
	 \|\|\La(D^m (u_\eps-u_0)\kappa_{2\delta_1})(z ,v)\|_{L^2_\lambda}\|_{L^1_{\Re(z)=0}}&\leq C  \|u\|_{V^n_{A,\lambda}}. \label{eq:uRe0}
	\end{align} 
\end{lemma}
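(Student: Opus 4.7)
My plan is to exploit the compact time support of $g := D^m(u-u_0)\kappa_{2\delta_1}$ together with the vanishing $g(0,\cdot)=0$ (which holds since $u(0)=u_0$ and $\kappa_{2\delta_1}(0)=1$), so that the Laplace transform $\La(g)(i\omega,v)$ on $\Re(z)=0$ coincides (up to a factor $\sqrt{2\pi}$) with the Fourier transform of $g$ in the time variable. This reduces the $L^1_\omega$ bound to standard $L^2_t L^2_\lambda$ estimates for $g$ and $\partial_t g$ via Plancherel.

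First I would apply Cauchy--Schwarz with weight $(1+|\omega|)^{-2}$,
\begin{align*}
\int_\Reals \|\La(g)(i\omega,\cdot)\|_{L^2_\lambda}\ud\omega
\leq C\left(\int_\Reals(1+\omega^2)\|\La(g)(i\omega,\cdot)\|^2_{L^2_\lambda}\ud\omega\right)^{\!1/2},
\end{align*}
and then invoke the identity $i\omega\La(g)(i\omega,v)=\La(\partial_t g)(i\omega,v)$ (valid since $g(0,\cdot)=0$) together with Plancherel (Lemma~\ref{Plancherel}, applied on the imaginary axis) to reduce the right-hand side to $C\bigl(\|g\|_{L^2_t L^2_\lambda}^2 + \|\partial_t g\|_{L^2_t L^2_\lambda}^2\bigr)^{1/2}$.

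The $L^2_t L^2_\lambda$ bound on $g$ is immediate: because $\supp g\subset[0,4\delta_1]$ and $A\delta_1\leq 1$, the exponential weight $e^{-At}$ is bounded below on $\supp g$, hence $\|g\|_{L^2_t L^2_\lambda}\leq C\|g\|_{V^0_{A,\lambda}}\leq C\|u\|_{V^n_{A,\lambda}}$ (the $u_0$ contribution is absorbed using its time-independence and the compact support of $\kappa_{2\delta_1}$). Writing $\partial_t g = D^m\partial_t u\cdot\kappa_{2\delta_1} + D^m(u-u_0)\cdot\kappa_{2\delta_1}'$, the second summand is bounded analogously with an extra factor $\|\kappa'\|_\infty=O(1/\delta_1)$. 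For the first summand I would substitute equation \eqref{eq:fixedmollified} for $\partial_t u$. The key observation is that the kernel $K[u_0](\cdot/\eps,v)$ has $L^1_t$-mass of order $\eps$, consistent with the uniform-in-$z$ bound on $\La K[u_0](z,v)$ for $\Re(z)\geq 0$ obtained from Lemma~\ref{Mlemma} combined with \eqref{matrixest2}. Young's convolution inequality in $t$ then cancels the prefactor $1/\eps$, yielding
\begin{align*}
\|\partial_t u\|_{L^2_t H^m_\lambda}\leq C\|u\|_{V^n_{A,\lambda}},\qquad m\leq n-4,
\end{align*}
with a two-derivative loss absorbed by the buffer $n-m\geq 4$.

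The main obstacle is the $f$-dependent contribution to the kernel, since $K[f(s)]$ introduces extra $s$-dependence and the time integral is no longer a pure convolution. I would handle this term in the Laplace domain via the decomposition appearing in \eqref{eq:derivlaplace}, so that the $s$-convolution in physical time becomes a $\zeta$-convolution in the Laplace variable (see \eqref{eq:convolution}). On the line $\Re(z)=A/2$, both $\La(f)$ and $\La({}^\gamma\nabla u)$ are square-integrable in $\omega$ by Plancherel, with norms controlled by $\|f\|_{V^n_{A,\tilde\lambda}}\leq 1$ and $\|u\|_{V^n_{A,\lambda}}$ respectively, and the factor $(M_1+M_2)(\eps z,v-v')$ is uniformly bounded on $\Re(z)\geq 0$ by \eqref{matrixest2}. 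Young's convolution in the Laplace variable together with the spatial integration against $u_0+f$ in $v'$ then yields the desired $L^2_t H^m_\lambda$ bound, closing the argument.
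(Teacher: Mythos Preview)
Your reduction via Cauchy--Schwarz and Plancherel to bounding $\|\partial_t g\|_{L^2_t L^2_\lambda}$ is clean, and your treatment of the $K[u_0]$ part by Young's inequality in physical time is correct (the explicit kernel $e^{-|v-v'|t}[\,P_{v-v'}^\perp+(1-|v-v'|t)P_{v-v'}\,]$ does have $L^1_t$-mass $\lesssim |v-v'|^{-1}$, so the $\eps^{-1}$ cancels as you claim). The gap is in the $f$-contribution. You propose to bound it in $L^2_\omega$ on $\Re(z)=A/2$ via Young's convolution, using $\La(f)\in L^2_\omega$ and $\La({}^\gamma\nabla u)\in L^2_\omega$. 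But Young only gives $L^2*L^2\subset L^\infty$, not $L^2$; equivalently, in physical time you would need $\|f(\cdot,v')\nabla u_\eps(\cdot,v)\|_{L^2_t(e^{-At})}$ to be controlled, and with both factors only in $L^2_t$ the product is merely $L^1_t$. The lemma's hypotheses give only $\|f\|_{V^n_{A,\tilde\lambda}}\le 1$ (no $\partial_t f$ bound), so $\La(f)\notin L^1_\omega$ in general, and the spatial integration in $v'$ does not help with $\omega$-integrability.

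The paper's proof avoids this by never asking for $\La(\partial_t u_\eps)$ in $L^2_\omega$. Instead it writes the equation as $|z|\,\|\La(u_\eps-u_0)(z)\|_{H^{k-2}_\lambda}\le \mathrm{RHS}(z)$, where $\mathrm{RHS}$ is only in $L^2_\omega+L^\infty_\omega$ (the $L^\infty$ part being exactly the $L^2*L^2$ convolution you encountered). Combining this with the direct $L^\infty_\omega$ bound on $\La((u_\eps-u_0)\kappa_{2\delta_1})$ coming from compact time support, one gets $L^{3/2}_\omega$ for $\La(u_\eps-u_0)$; feeding this back into the equation and using $\|\F(\kappa_{2\delta_1})\|_{L^1}\le C$ then yields the desired $L^1_\omega$. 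The two-step bootstrap is precisely what absorbs the loss from $L^2*L^2\subset L^\infty$, and it works under the lemma's weaker hypothesis on $f$. Your approach would go through if you additionally assume $\|\partial_t f\|_{V^{n-2}_{A,\tilde\lambda}}\le 1$ (giving $\La(f)\in L^1_\omega$), which is available in the application to $\Omega^{A,\delta}_{\tilde R,\delta_1,R,\eps}$ but not in the lemma as stated.
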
 
\begin{proof}
	By Plancherel's identity (cf. \eqref{eq:Plancherel}) and the assumption $0<\delta_1<1/A$ we have:
	\begin{align}
	\|\|\La(u_\eps\kappa_{2\delta_1})\|_{H^n_{\lambda}}\|_{L^2(i\Reals)}+\|\|\La(u_\eps\kappa_{2\delta_1})\|_{H^n_{\lambda}}\|_{L^\infty(i\Reals)} &\leq C \|u_\eps\|_{V^n_{A,\lambda}}.  \label{lowertool1}
	\end{align}
	Since $\supp f \subset [0,2 \delta_1]\times \Reals^3$ and $\|f\|_{V^n_{A,\tilde{\lambda}}}\leq 1$,  we also have:
	\begin{align}	
	\|\|\La(f)\|_{H^n_{\tilde{\lambda}}}\|_{L^2(i\Reals)}+\|\|\La(f)\|_{H^n_{\tilde \lambda}}\|_{L^\infty(i\Reals)} &\leq C  \label{lowertool2}.
	\end{align}
	We transform  equation \eqref{eq:fixedmollified} to Laplace variables (recall the notation $*_a$ introduced in \eqref{eq:convolution}):
	\begin{align*}
	z\La(u_\eps-u_0) = 	&\nabla \cdot \left(  \int (M_1+M_2)(\eps z,v')  \left(\La(u_0+f)(\cdot,v-v')*_a\nabla \La (u_\eps \kappa_{2 \delta_1})  (\cdot,v)\right)(z) \ud{v'} 	 \right) \\
	-&\nabla \cdot \left(  \int (M_1+M_2)(\eps z,v')  \left(\nabla\La(u_0+f)(\cdot,v-v')*_a \La (u_\eps \kappa_{2 \delta_1})(\cdot,v)\right)(z)  \ud{v'}	 \right).
	\end{align*}
	Here we have taken advantage of the fact that $f$ is compactly supported to localize $u_\eps$.
	We use that the matrices $M_1,M_2$ (cf. \eqref{defM}) satisfy $|(M_1+M_2)(z,v)| \leq 1/|v|$. 
	Hence we can estimate $u_\eps$ as follows:
	\begin{align} \label{bootroot}
	|z|\|\La((u_\eps-u_0) )(z)\|_{H^{k-2}_{\lambda}} \leq C \left( \|\La(u_\eps\kappa_{2\delta_1 })\|_{H^k_\lambda}+  \|\La(f)(\cdot)\|_{H^{k}_\lambda} *_{\Re(z)}\|\La(u_\eps\kappa_{2 \delta_1})(\cdot)\|_{H^k_\lambda}(z)\right).
	\end{align}
	Inserting the estimates \eqref{lowertool1} and \eqref{lowertool2} above yields:
	\begin{align} \label{eq:bootstrapstart}
		\|\|\La((u_\eps-u_0) )(z)\|_{H^{n-2}_{\lambda}}\|_{L^{\frac32}_{\Re(z)=0}} \leq C  \|u_\eps\|_{V^n_{A,\lambda}}.
	\end{align}
	We observe that the cutoff functions $\kappa_{2 \delta_1}$ satisfy:
	\begin{align} \label{eq:Fourcutoff}
		\|\F(\kappa_{2 \delta_1})(\cdot)\|_{L^1(\Reals)}\leq C	.
	\end{align}
	Hence we can infer from \eqref{eq:bootstrapstart} that $u_\eps$ also satisfies: 
	\begin{align} \label{eq:32 est}
	\|\|\La((u_\eps-u_0)\kappa_{2 \delta_1} )(z)\|_{H^{n-2}_{\lambda}}\|_{L^{\frac32}_{\Re(z)=0}} \leq C  \|u_\eps\|_{V^n_{A,\lambda}}.
	\end{align}
	Plugging \eqref{eq:32 est} back into \eqref{bootroot}, and using \eqref{eq:Fourcutoff} we find 
	\begin{equation*}
		\|\|\La(D^m (u_\eps-u_0)\kappa_{2\delta_1})(z ,v)\|_{L^2_\lambda}\|_{L^1_{\Re(z)=0}}\leq C  \|u\|_{V^n_{A,\lambda}}, \quad \text{ for $m\leq n-4$},
	\end{equation*}
	which concludes the proof of the Lemma.
\end{proof}

We have shown the proof of Theorem \ref{thm1} in Section~\ref{subsec:mainpf},
provided the validity of Lemma~\ref{lem:lowerorder}. Proving this result is the crucial part of the paper and the content of this section.

Henceforth we will use that $\frac{1}{|v|}$ is, up to a constant, the fundamental solution
to the Laplace equation on the whole space. This implies that 
the convolution operator appearing in the Landau equation \eqref{eq:Landau}, as well
as the operators appearing in the non-Markovian equation \eqref{eq:Main} can be studied in terms of the inverse of the Laplacian. In the following, we collect the corresponding estimates in the weighted space $H^n_\nu$ introduced in Definition~\ref{def:spaces}.

From the theory of the Laplace
equation, we have the following straightforward result.
\begin{lemma} \label{lem:greensop}
	For $u \in C^\infty_c(\Reals^3)$, let $T[u]$ be given by the convolution operator
	\begin{align} \label{eq:Tdef}
		T[u](v) = \int_{\Reals^3} \frac{u(v')}{|v-v'|} \ud{v'}.
	\end{align}
	Then $T$ can be extended to a continuous operator $T: L^2(\Reals^3) \longrightarrow \dot{H}^2(\Reals^3)$, i.e. for some $C>0$ we have:
	\begin{align*}
		\|T[u]\|_{\dot{H}^2} \leq C \|u\|_{L^2}.
	\end{align*}
\end{lemma}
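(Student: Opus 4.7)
The plan is to exploit the fact that $\frac{1}{|v|}$ is, up to a multiplicative constant, the Newtonian potential: in $\mathbb{R}^3$ one has $-\Delta\bigl(\frac{1}{4\pi|v|}\bigr) = \delta_0$. Equivalently, on the Fourier side, $\mathcal{F}\bigl(\frac{1}{|v|}\bigr)(k) = \frac{c}{|k|^2}$ for a dimensional constant $c>0$. Thus $T$ is (up to a constant) the inverse Laplacian $(-\Delta)^{-1}$, and the statement reduces to the well-known fact that the composition $\partial_i\partial_j(-\Delta)^{-1}$ is an $L^2$-bounded Fourier multiplier (the Riesz projection in disguise).

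First I would work with $u \in C^\infty_c(\mathbb{R}^3)$, for which the convolution integral in \eqref{eq:Tdef} is classically well-defined and $T[u]$ is smooth with polynomial decay at infinity making the Fourier calculation legitimate. For such $u$, I would compute
\begin{equation*}
\widehat{T[u]}(k) \;=\; \frac{c}{|k|^2}\,\hat{u}(k),
\end{equation*}
and therefore, for any multi-index $\beta$ with $|\beta|=2$,
\begin{equation*}
\widehat{\partial^{\beta} T[u]}(k) \;=\; c\,\frac{(ik)^{\beta}}{|k|^2}\,\hat{u}(k),
\end{equation*}
whose symbol $m_\beta(k) := c(ik)^\beta/|k|^2$ is bounded by $c$ uniformly in $k \neq 0$. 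Plancherel's identity then yields $\|\partial^\beta T[u]\|_{L^2} \leq c\,\|u\|_{L^2}$. Summing over $|\beta|=2$ gives the desired estimate $\|T[u]\|_{\dot{H}^2} \leq C\,\|u\|_{L^2}$.

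Finally I would extend $T$ to all of $L^2(\mathbb{R}^3)$ by density: since $C^\infty_c$ is dense in $L^2$ and the map $u \mapsto T[u]$ is a bounded linear operator from $(C^\infty_c, \|\cdot\|_{L^2})$ into the Banach space $\dot{H}^2(\mathbb{R}^3)$ (viewed, as is standard, as the completion of $C^\infty_c$ under $\|D^2 \cdot\|_{L^2}$, equivalently the space of tempered distributions modulo polynomials of degree $\leq 1$ whose second distributional derivatives lie in $L^2$), the operator extends uniquely by continuity to all of $L^2$, preserving the bound.

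I do not anticipate a genuine obstacle here; the only mild technical point is that for a generic $u \in L^2(\mathbb{R}^3)$ the integral defining $T[u]$ in \eqref{eq:Tdef} need not converge absolutely pointwise, so the operator must be interpreted in the distributional/extension-by-continuity sense described above. This is why the argument is carried out first on $C^\infty_c$ and then passed to the closure. No additional regularity structure of $u$ is used beyond $L^2$, which is precisely the strength needed in the sequel to handle the singular kernel $|v-v'|^{-1}$ arising in the non-Markovian term.
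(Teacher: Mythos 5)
Your proof is correct and is precisely the argument the paper has in mind: the paper states this lemma without proof, noting only that $1/|v|$ is (up to a constant) the fundamental solution of the Laplacian, so $T$ is essentially $(-\Delta)^{-1}$. Your Fourier-multiplier computation and Plancherel bound, followed by density, is the standard realization of that remark.
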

\begin{remark}
	We write $\dot{H}^k(\Reals^3)=\dot{H}^k$ for the homogeneous Sobolev space of $k$-th order, i.e. the closure
	of $C^\infty_c(\Reals^3)$ with respect to the norm:
	\begin{align*}
		\|u\|_{\dot{H}^k} := \|\nabla^k u \|_{L^2(\Reals^3)}.
	\end{align*}
\end{remark}
%
\begin{lemma} \label{lem:l6sov}
	We have a Sobolev embedding for $\dot{H}^1$:
	\begin{align}\label{eq:sobolev}
		\|u\|_{L^6} \leq C \|u\|_{H^1}.
	\end{align}
	Furthermore, there is a constant $C>0$ such that for all $u\in H^1_{\lambda}$ we have 
	\begin{align}\label{eq:sobweighted}
		\|u(\cdot ) e^{\frac12|\cdot|}\|_{L^6} \leq C \|u\|_{H^1_\lambda}.
	\end{align}
\end{lemma}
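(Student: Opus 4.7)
The first assertion is the classical Sobolev embedding $\dot H^1(\Reals^3)\hookrightarrow L^6(\Reals^3)$, which follows from the Gagliardo–Nirenberg–Sobolev inequality in three dimensions (the critical exponent for $H^1$ is $2^*=6$). The version in the statement, $\|u\|_{L^6}\leq C\|u\|_{H^1}$, is weaker and follows directly.

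For the weighted version, the natural idea is to introduce $w(v):=u(v)\,e^{\frac12|v|}$ and reduce to the unweighted embedding. First I would check that the substitution $v\mapsto u(v)e^{\frac12|v|}$ maps $H^1_\lambda$ into $H^1(\Reals^3)$. Since the function $v\mapsto e^{\frac12|v|}$ is Lipschitz (with weak gradient $\frac12\,\frac{v}{|v|}e^{\frac12|v|}$, which is pointwise bounded by $\frac12 e^{\frac12|v|}$ away from the origin, a set of measure zero), the product rule gives
\begin{equation*}
\nabla w(v) = e^{\frac12|v|}\nabla u(v) + \tfrac{1}{2}\frac{v}{|v|}\,e^{\frac12|v|}u(v),
\end{equation*}
in the sense of distributions. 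Then by the triangle inequality
\begin{equation*}
\|\nabla w\|_{L^2} \leq \|e^{\frac12|v|}\nabla u\|_{L^2}+\tfrac12\|e^{\frac12|v|}u\|_{L^2}= \|\lambda^{\frac12}\nabla u\|_{L^2}+\tfrac12\|\lambda^{\frac12}u\|_{L^2}\leq C\|u\|_{H^1_\lambda},
\end{equation*}
and similarly $\|w\|_{L^2}=\|\lambda^{\frac12}u\|_{L^2}\leq \|u\|_{H^1_\lambda}$.

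Combining these two estimates yields $\|w\|_{H^1}\leq C\|u\|_{H^1_\lambda}$. Applying the first part of the lemma to $w$ then gives
\begin{equation*}
\|u\,e^{\frac12|\cdot|}\|_{L^6}=\|w\|_{L^6}\leq C\|w\|_{H^1}\leq C\|u\|_{H^1_\lambda},
\end{equation*}
which is the desired weighted embedding. The only mild subtlety to be careful about is the non-smoothness of $|v|$ at the origin; but since $|v|$ is Lipschitz and $u$ is a Sobolev function, the chain/product rule holds in the weak sense, and the singularity of $\nabla|v|=v/|v|$ is removable in $L^\infty$, so no obstruction arises. For functions in $C^\infty_c(\Reals^3)$, which are dense in $H^1_\lambda$ by definition, the computation is completely classical, and the final inequality then extends by density.
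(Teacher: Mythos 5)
Your proof is correct and takes essentially the same approach as the paper's: you set $w=ue^{\frac12|\cdot|}$ (the paper writes $u=\tilde u\,e^{-\frac12|\cdot|}$, so $\tilde u=w$), use the product rule and the fact that $|\nabla e^{\frac12|v|}|\lesssim e^{\frac12|v|}$ to get $\|w\|_{H^1}\leq C\|u\|_{H^1_\lambda}$, and then invoke the standard $H^1\hookrightarrow L^6$ embedding in $\Reals^3$. The extra remarks about the Lipschitz nature of $|v|$ and density of $C^\infty_c$ are fine but not strictly needed beyond what the paper records.
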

\begin{proof}
	The estimate \eqref{eq:sobolev} is the classical Gagliardo-Nierenberg inequality. For the proof of \eqref{eq:sobweighted} 
	we write $u\in H^1_\lambda$ as $u= \tilde{u}(v)e^{-\frac12 |v|}$. Since 
	the weight function satisfies $|\nabla e^{-\frac12|v|}| \sim e^{-\frac12 |v|}$, we have
	\begin{align*}
		\|\tilde{u}\|_{H^1} \leq C \|u\|_{H^1_\lambda}.
	\end{align*}
	Now the claim follows from \eqref{eq:sobolev}.
\end{proof}
The main tool to handle the full singularity of the Landau kernel is contained in the following Lemma. 
Here we combine the classical results in the Lemmas above with the weighted spaces
given by the weight functions defined in Notation~\ref{not:weights}, to obtain $L^6_{\operatorname{loc}}$ estimates
that are compatible with the structure of the dissipation functional $D^{\alpha}_{\eps,A}(u)$ in
\eqref{dissipexplic}.

\begin{lemma} \label{lem:singularintegral}
	Recall the weight function $\alpha$ introduced
	in Notation~\ref{not:weights}. For $z\in \Complex$, let $e(z,\cdot) \in W^{1,1}(\Reals^3)$ be a function satisfying the estimate:
	\begin{align} \label{eq:eAss}
	|e(z,v)| &\leq \frac{C}{|v|(1+\alpha(z,v))}, \quad |\nabla e(z,v)| \leq \frac{C}{|v|^2(1+\alpha(z,v))} .
	\end{align}
	For $f\in L^2_{\tilde{\lambda}}$ define the function:
	\begin{align}
	E(f)(z,v) := \int_{\Reals^3} f(v') e(z,v-v') \ud{v'}.
	\end{align}
	Then the following holds
	\begin{align} \label{est:L6}
	\|(1+|v|)^2 (1+\alpha(z,v))\nabla_v E(f)\|_{L^p(B_1(v^*))} \leq C_p \|f\|_{L^2_{\tilde{\lambda}}}, \quad \text{for $1\leq p\leq 6$, $v^* \in \Reals^3$}.
	\end{align}
\end{lemma}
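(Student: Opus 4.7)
The plan is to localize the weighted $L^p$-bound on $B_1(v^*)$ to an unweighted convolution estimate, by exploiting that both $(1+|v|)$ and $\alpha(z,v)$ vary only by a bounded factor over $B_1(v^*)$. Since $(1+|v|)^2(1+\alpha(z,v)) \sim (1+|v^*|)^2(1+\alpha(z,v^*))$ on this ball, the claim reduces to showing
\begin{equation*}
\|\nabla E(f)\|_{L^p(B_1(v^*))} \leq \frac{C_p \, \|f\|_{L^2_{\tilde\lambda}}}{(1+|v^*|)^2 (1+\alpha(z,v^*))}.
\end{equation*}
I would then split $f = f_1 + f_2$ with $f_1 := f\,\cf_{B_3(v^*)}$ and treat the two pieces separately.

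For the near piece I use two observations: on $B_3(v^*)$ the weight $\tilde\lambda$ varies by a bounded factor, so $\|f_1\|_{L^2(\Reals^3)} \leq C (1+|v^*|)^{1/2} e^{-|v^*|/2} \|f\|_{L^2_{\tilde\lambda}}$; and for $v \in B_1(v^*)$, $v' \in B_3(v^*)$ the displacement $w=v-v'$ satisfies $|w| \leq 4$, whence a direct inspection of the definition of $\alpha$ gives $(1+\alpha(z,w))^{-1} \leq 5(1+\alpha(z,v^*))^{-1}$. The assumption on $\nabla e$ then yields the pointwise bound $|\nabla E(f_1)(z,v)| \leq C(1+\alpha(z,v^*))^{-1}\int|f_1(v')|/|v-v'|^2 \,\mathrm{d}v'$. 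Combining Lemma~\ref{lem:greensop} with the Sobolev embedding $\dot{H}^1 \hookrightarrow L^6$ shows that convolution against $|\cdot|^{-2}$ maps $L^2 \to L^6(\Reals^3)$; passing to $L^p(B_1(v^*))$ via the trivial embedding on the bounded ball gives $\|\nabla E(f_1)\|_{L^p(B_1(v^*))} \leq C_p (1+|v^*|)^{1/2} e^{-|v^*|/2} (1+\alpha(z,v^*))^{-1} \|f\|_{L^2_{\tilde\lambda}}$. Multiplication by $(1+|v^*|)^2 (1+\alpha(z,v^*))$ leaves the uniformly bounded residual $(1+|v^*|)^{5/2} e^{-|v^*|/2}$.

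For the far piece we have $|v-v'| \geq 2$, so there is no singularity. Cauchy--Schwarz against the weight $\tilde\lambda^{-1}(v') = (1+|v'|)e^{-|v'|}$ reduces the task to bounding $J(z,v) := \int_{|v-v'|\geq 2} (1+|v'|)e^{-|v'|}/(|v-v'|^4 (1+\alpha(z,v-v'))^2)\,\mathrm{d}v'$. Here I would use the two complementary pointwise estimates $(1+\alpha(z,w))^{-2} \leq 1$ and $(1+\alpha(z,w))^{-2} \leq (1+|w|)^2/|\Im z|^2$. Splitting the $v'$-integral into $|v'|\leq|v|/2$ (where $|v-v'| \geq |v|/2$ produces polynomial decay in $|v^*|$) and $|v'|>|v|/2$ (where $e^{-|v'|}$ produces $e^{-|v^*|/4}$-decay), the first bound gives $J \leq C/(1+|v^*|)^4$, while the second — combined with $(1+|w|)^2/|w|^4 \leq C/|w|^2$ for $|w|\geq 2$ — gives $J \leq C/(|\Im z|^2 (1+|v^*|)^2)$. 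Taking the minimum and applying $\max(a^2,b^2) \geq (a+b)^2/4$ with $a=1+|v^*|$, $b=|\Im z|$, together with the identity $(1+|v^*|)(1+\alpha(z,v^*)) = 1+|v^*|+|\Im z|$, yields $J(z,v) \leq C/((1+|v^*|)^4(1+\alpha(z,v^*))^2)$. The resulting pointwise bound on $|\nabla E(f_2)|$ in $B_1(v^*)$, followed by the trivial embedding $L^\infty(B_1(v^*)) \hookrightarrow L^p(B_1(v^*))$, completes the estimate.

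The principal difficulty is the simultaneous extraction of the two decay factors $(1+|v^*|)^{-2}$ and $(1+\alpha(z,v^*))^{-1}$ in the far-field bound on $J$: no single estimate on $(1+\alpha(z,w))^{-2}$ is sharp across all values of $|\Im z|$, so one is forced to combine two complementary bounds covering the regimes $|\Im z| \lesssim 1+|v^*|$ and $|\Im z| \gg 1+|v^*|$. All remaining steps — localization, near-field Newton potential bound, Cauchy--Schwarz — are relatively routine once the right decomposition is in place.
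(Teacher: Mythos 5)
Your proof is correct, and it uses the same two key ingredients as the paper's argument: the Newton-potential estimate of Lemma~\ref{lem:greensop} followed by the Sobolev embedding $\dot H^1 \hookrightarrow L^6$ for the singular (``near'') piece, and a direct pointwise $L^\infty$ bound for the nonsingular (``far'') piece. The difference is geometric. The paper decomposes $\Reals^3$ into dyadic annuli $A_j = \{2^j \leq |v'| < 2^{j+1}\}$ centered at the \emph{origin}, takes the union of the three annuli adjacent to the shell $|v'|\approx|v|$ as the near region (a multiplicative neighborhood), and derives the far-field decay from a single dyadic estimate comparing $\alpha(z,v-v')$ to $\alpha(z,v)$ annulus by annulus; the exponential decay is extracted because every near annulus forces $|v'|\geq |v|/4$. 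You instead decompose $f=f_1+f_2$ with $f_1$ supported on the fixed metric ball $B_3(v^*)$ — a more obviously local choice adapted to the target ball $B_1(v^*)$ — and extract the far-field factor $(1+\alpha(z,v^*))^{-1}$ by playing off the two complementary bounds $(1+\alpha(z,w))^{-2}\leq 1$ and $(1+\alpha(z,w))^{-2}\leq (1+|w|)^2/|\Im z|^2$, then recombining via $\min\{a^{-2},b^{-2}\}\leq 4/(a+b)^2$. Both routes buy the same thing; your far-field argument is somewhat more transparent about \emph{why} both the polynomial decay in $|v^*|$ and the decay in $\alpha(z,v^*)$ come out, at the cost of a two-case argument where the paper gets both at once from the dyadic comparison \eqref{logelei}. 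One small remark for completeness: in your near-field step you silently pass from the assumed bound on $\nabla e$ to $\nabla E(f_1)=\int f_1(v')\nabla e(z,v-v')\,\mathrm d v'$; the hypothesis $e(z,\cdot)\in W^{1,1}(\Reals^3)$ is what licenses this differentiation under the integral, and is worth citing explicitly.
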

\begin{proof}
	We decompose the velocity space $\Reals^3\setminus \{0\}$ into annuli 
	\begin{align} \label{def:Aj}
		A_j:= \{v \in \Reals^3: 2^j \leq |v| <2^{j+1}\},\quad  j \in \Z.	
	\end{align}
	For $v\in \Reals^3$ a given vector, let $j(v)$ be such that $v \in A_{j(v)}$ and:
	\begin{align} \label{def:Iv}
			I(v)=\{j(v)-1,j(v),j(v)+1\} \subset \Z.	
	\end{align}   
	A function $f\in L^2_{\tilde{\lambda}}$ we
	write as $f(v)= \tilde{f}(v) e^{-\frac12 |v|}$. Notice that $\tilde{f}$ satisfies:
	\begin{align} \label{eq:tilde}
		\left\|\frac{\tilde{f}(v)}{(1+|v|)^\frac12}\right\|_{L^2} = \|f\|_{L^2_{\tilde{\lambda}}}.
	\end{align}
	Then we can  estimate the derivative of $E$ using \eqref{eq:eAss}:
	\begin{align*}
	|\nabla E(f)(z,v)|	&\leq \sum_{k \in \Z}\int_{A_j} \frac{|f(v')|}{|v-v'|^2 (1+\alpha(z,v-v')) } \ud{v'} \\
	\leq &\sum_{k \in \Z \setminus I(v)}\int_{A_j} \frac{|\tilde{f}(v')|e^{-\frac12 |v'|}}{|v-v'|^2 (1+\alpha(z,v-v') )} \ud{v'}
	+  \sum_{k \in I(v)}\int_{A_j} \frac{|\tilde{f}(v')|e^{-\frac12 |v'|}}{|v-v'|^2 (1+\alpha(z,v-v')) } \ud{v'} \\
	=:& Z_1 + Z_2 .
	\end{align*}  
	We estimate the terms $Z_1$ and $Z_2$ separately. Due to the dyadic scaling we have:
	\begin{align*}
		\frac{1}{|v-v'|^2 (1+\alpha(z,v-v'))} 			&\leq C (1+|v'|) \frac{1}{|v|^2(1+\alpha(z,v))} \quad &\text{on $A_k$ with $k>j(v)+1$}, \\
			\frac{1}{|v-v'|^2 (1+\alpha(z,v-v'))} 		&\leq C  \frac{1}{|v|^2(1+\alpha(z,v))}\quad &\text{on $A_k$ with $k<j(v)-1$}.
	\end{align*}
	Hence we can estimate:
	\begin{align} \label{logelei}
	\frac{1}{|v-v'|^2(1+\alpha(z,v-v'))} \leq C\frac{1+|v'|}{|v|^2(1+\alpha(z,v))}, \quad  \text{for $v'\in A_k, k\notin I(v)$}.
	\end{align}
	We use \eqref{logelei} to obtain an upper bound for $Z_1$:
	\begin{align*}
	|Z_1(z,v)|  &\leq C \sum_{j \in \Z \setminus I(v)}\int_{A_j} \frac{|\tilde{f}(v')|\exp(- 2^{j-3})}{(1+|v|)^2 (1+\alpha(z,v-v') )} \ud{v'} \\
				&\leq C \sum_{j \in \Z \setminus I(v)}\frac{\exp(- 2^{j-3})}{(1+|v|)^2 (1+\alpha(z,v) )}\int_{A_j} (1+|v'|) |\tilde{f}(v')| \ud{v'}.
	\end{align*}
	For $j\in \Z \setminus I(v)$ we estimate with Young's inequality:
	\begin{align*}
		\int_{A_j} (1+|v'|) |\tilde{f}(v')| \ud{v'} &\leq \left\|\frac{\tilde{f}(v)}{(1+|v|)^\frac12}\right\|_{L^2} \|  (1+|v'|)^\frac32)\|_{L^2(A_j)} \\
			&\leq (1+2^j)^2 2^{\frac32 j} \left\|\frac{\tilde{f}}{(1+|v|)}\right\|_{L^2}.
	\end{align*}
	We use the identity \eqref{eq:tilde} for $\tilde{f}$ to find
	\begin{equation}
	\begin{aligned}  \label{Z1est}
	 	|Z_1(z,v)|  	&\leq C \sum_{j \in \Z \setminus I(v)}\frac{\exp(- 2^{j-3})}{(1+|v|)^2 (1+\alpha(z,v) )} ((1+2^j)2^j)^\frac32  \left\|\frac{\tilde{f}}{(1+|v|)^\frac12}\right\|_{L^2}  \\
	 	&\leq  \frac{C\|f\|_{L^2_{\tilde{\lambda}}}}{(1+|v|^2)(1+\alpha(z,v))}.
	\end{aligned}
	\end{equation}
	It remains to estimate the term $Z_2$. To this end, for $v' \in A_j$, $j\in I(v)$ we have $\frac14 |v|\leq |v'|\leq 4 |v|$. This yields that the weight function $\alpha$ satisfies, for $v,v'$ as above:
	\begin{align}\label{eq:alphabd}
		\frac{1}{1+\alpha(z,v-v')} \leq C \frac{1}{1+\alpha(z,v)}.
	\end{align}
	Furthermore, for $v' \in A_j$, $j\in I(v)$ there holds:
	\begin{align}\label{eq:bd}
		e^{-\frac12 |v'|} \leq e^{-\frac18 |v|},\quad 	0<c \leq \frac{1+|v'|}{1+|v|}\leq C. 
	\end{align}
	Combining the estimates \eqref{eq:alphabd} and \eqref{eq:bd}, we obtain a bound for $Z_2$:
	\begin{align} \label{eq:Z2first}
	|Z_2(z,v)|\leq \frac{Ce^{-\frac18 |v|}(1+|v|)^\frac12}{1+\alpha(z,v)}\sum_{j \in I(v)}\int_{A_j} \frac{|\tilde{f}(v')|/(1+|v'|)^\frac12}{|v-v'|^2 } \ud{v'}. 
	\end{align} 
	The convolution can be estimated using Lemma \ref{lem:greensop} and \eqref{eq:tilde}:
	\begin{align*}
		\left\|\sum_{j \in I(v)}\int_{A_j} \frac{|\tilde{f}(v')|/(1+|v'|)^\frac12}{|\cdot-v'|^2 } \ud{v'}\right\|_{\dot{H}^1} &\leq C \left\|\frac{\tilde{f}(v)}{(1+|v|)^\frac12}\right\|_{L^2} = C \|f\|_{L^2_{\tilde{\lambda}}}.
	\end{align*}
	Then the Sobolev embedding \eqref{eq:sobolev} yields:
	\begin{align} \label{eq:singAj}
		\|\int_{A_j} \frac{|\tilde{f}(v')|/(1+|v'|)^\frac12}{|\cdot-v'|^2 } \ud{v'}\|_{L^6} \leq C \|f\|_{L^2_{\tilde{\lambda}}}.  
	\end{align}
	Bringing this estimate to \eqref{eq:Z2first}, we obtain: 
	\begin{align} \label{Z2est}
	\|(1+|v|)^2 (1+\alpha(z,v)) Z_2(z,v)\|_{L^p(B_1(v^*))} \leq C_p \|f\|_{L^2_{\tilde{\lambda}}}, \quad \text{for $1\leq p\leq 6$, $v^* \in \Reals^3$}.
	\end{align}
	Combining \eqref{Z1est} and \eqref{Z2est} gives the claim of the Lemma.
\end{proof}

\begin{definition}
	Let  $\Lambda$ is given by $M_1,M_2$ (cf. \eqref{defM}) as:
	\begin{align}
	\Lambda[f](z,\tau,v)	&= \int_{\Reals^3} \int_{0}^\infty  \left(M_1+M_2\right)(z,v-v') e^{-i\tau s}f(s,v') \ud{s} \ud{v'} \label{Lambdadef} .
	\end{align}
\end{definition}
Using the Plancherel identity \eqref{eq:Plancherel} the functionals $Q^{\alpha,\beta}_{\eps,A}$ can be expressed as:
\begin{equation}\label{eq:Qfrep}
\begin{aligned}
	(2\pi)^\frac12 Q^{\alpha,\beta}_{\eps,A}[f](u)	
	=	&\int_{\Reals} \int_\Reals \int \langle\nabla  ( \lambda   \La( D^\alpha u) (z)), \Lambda[f](\eps z, 				\omega-\theta) \La(\nabla D^\beta u)(p)\rangle \ud{v}\ud{\theta} \ud{\omega} \\
	-	&\int_{\Reals} \int_\Reals \int \langle  \nabla  (\La( D^\alpha u) \lambda)(z), \nabla\cdot  \Lambda[f](\eps 			z, \omega-\theta) \La(D^\beta u)(p) \rangle \ud{v}\ud{\theta} 	\ud{\omega}.
\end{aligned}
\end{equation}
We now prove that \eqref{est:L6} yields an estimate for quadratic functionals
of a certain form. This will allow us to obtain a bound for the
functionals $Q^{\alpha,\beta}_{\eps,A}$ in terms of the dissipation $D^\alpha_{\eps,A}$ (cf. \eqref{dissipexplic}) and the average-in-time
norm $V^n_{A,\lambda}$. 

\begin{corollary} \label{lem:L6}
	There exists $C>0$ such that for all $v^*\in \Reals^3$, $z\in \Complex$ with $\Re(z)\geq0$ and $|\mu|=1$:
	\begin{align}
	\|(1+\alpha(z,v))(1+|v|^2) D^\mu \La(K)[g]\|_{L^6(B_1(v^*))} &\leq C \|g\|_{L^2_{\tilde{\lambda}}}. \label{lin1}
	\end{align}
\end{corollary}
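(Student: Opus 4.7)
The plan is to deduce the corollary directly from Lemma \ref{lem:singularintegral} applied to the matrix-valued kernel $M_1+M_2$. By the representation \eqref{KLapl} in Lemma \ref{Mlemma},
\[
\La(K)[g](z,v) = \int_{\Reals^3} (M_1+M_2)(z,v-v')\, g(v') \ud{v'},
\]
so after commuting the first-order derivative $D^\mu$ inside the convolution, each scalar component of $D^\mu \La(K)[g]$ is exactly of the form $\nabla_v E(g)$ appearing in Lemma \ref{lem:singularintegral}, with the scalar kernel $e(z,\cdot)$ taken to be the corresponding entry of $(M_1+M_2)(z,\cdot)$.

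To invoke that lemma I would verify the pointwise bounds \eqref{eq:eAss} for this choice of $e$. This is precisely what \eqref{eq:Mestimate} provides: taking $|\beta|=0$ yields $|e(z,v)| \leq C/(|v|(1+\alpha(z,v)))$, while $|\beta|=1$ yields $|\nabla e(z,v)| \leq C/(|v|^2(1+\alpha(z,v)))$. Then \eqref{est:L6} with $p=6$, summed over the finitely many matrix entries of $M_1+M_2$ and combined with the observation that $(1+|v|)^2$ and $1+|v|^2$ are comparable, delivers the desired inequality \eqref{lin1}.

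This is essentially a bookkeeping reduction: the genuine analytical content, namely the Riesz-type boundedness of Lemma \ref{lem:greensop} and the Sobolev embedding of Lemma \ref{lem:l6sov}, has already been packaged into Lemma \ref{lem:singularintegral}. The only points requiring attention are the matrix-valued nature of $M_1+M_2$, which is harmless since Lemma \ref{lem:singularintegral} applies entrywise, and the decision to quote \eqref{eq:Mestimate} rather than differentiating the explicit formulas in \eqref{defM} directly, which would lead to messier algebra without any new information. Consequently, there is no principal obstacle in this step.
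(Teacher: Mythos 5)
Your proof is correct and follows the same route as the paper: apply the representation \eqref{KLapl}, verify the hypotheses \eqref{eq:eAss} of Lemma \ref{lem:singularintegral} via \eqref{eq:Mestimate} with $|\beta|=0,1$, and invoke \eqref{est:L6} with $p=6$. You spell out the bookkeeping (entrywise application and the comparability of $(1+|v|)^2$ with $1+|v|^2$) a bit more explicitly than the paper, but the argument is the same.
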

\begin{proof}
		By the identity \eqref{KLapl} we have:
	\begin{align} 
	\La(K[g])(z,v) 	&=  \int (M_1+M_2)(z,v-v') g(v')  \ud{v'},
	\end{align}
	where $M_1,M_2$  are given by \eqref{defM}. We now use \eqref{eq:Mestimate} to infer that
	$e(z,v):= (M_1+M_2)(z,v-v')$ satisfies the assumptions of Lemma \ref{lem:singularintegral}, which proves \eqref{lin1}. 
\end{proof}

\begin{corollary} \label{lem:kernelL2}
	For all $A\geq 1$ and  $v^*\in \Reals^3$, $z,p\in \Complex$, $\Re(z)\geq 0$, $\Re(p)\geq 0$, $|\mu|=1$:
	\begin{align}
	\|(1+\alpha(z,v))(1+|v|^2) D^\mu \Lambda[g](z,p,v)\|_{L^6(B_1(v^*))} 		&\leq C \|\La g(p,\cdot)\|_{L^2_{\tilde{\lambda}}} \label{nonlin1} 
	\end{align}
	
\end{corollary}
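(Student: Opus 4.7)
The plan is to reduce the estimate directly to Corollary~\ref{lem:L6}. First, for $\Re(p) \geq 0$ I would reinterpret the definition~\eqref{Lambdadef} with $p$ in place of $i\tau$ (which is a natural analytic continuation of the Fourier integral in $s$ into the half-plane where the exponential factor has non-positive real part). Fubini's theorem is justified by the absolute convergence of the $s$-integral when $\Re(p)>0$, with the boundary case $\Re(p)=0$ handled by density. This gives the identity
\[
\Lambda[g](z,p,v) \;=\; \int_{\Reals^3} (M_1+M_2)(z,v-v')\, \La g(p,v')\, \ud v' \;=\; \La(K)\big[\La g(p,\cdot)\big](z,v),
\]
where the second equality is just \eqref{KLapl}. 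Thus $\Lambda[g](z,p,\cdot)$ is exactly the operator $\La(K)$ applied at spectral parameter $z$ to the $v'$-function $\La g(p,\cdot) \in L^2_{\tilde\lambda}$.

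Second, I would simply apply Corollary~\ref{lem:L6} to the function $\La g(p,\cdot)$, with the same $z$ and $\mu$, to obtain
\[
\|(1+\alpha(z,v))(1+|v|^2)\, D^\mu \Lambda[g](z,p,v)\|_{L^6(B_1(v^*))} \;\leq\; C\, \|\La g(p,\cdot)\|_{L^2_{\tilde\lambda}},
\]
which is the claim. The only subtlety is verifying that the identification $\Lambda[g](z,p,v)=\La(K)[\La g(p,\cdot)](z,v)$ is legitimate for complex $p$ in the closed right half-plane; once this is done, no new analysis is required, since the weighted $L^6$-bound for the singular convolution operator is precisely the content of Lemma~\ref{lem:singularintegral} and Corollary~\ref{lem:L6}. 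In particular, there is no genuine obstacle beyond this reduction: the hard work (the dyadic decomposition, the separation of the near and far regions $A_j$ with $j \in I(v)$ vs.~$j \notin I(v)$, and the use of the Sobolev embedding $\dot H^1 \hookrightarrow L^6$ applied to $T$ from Lemma~\ref{lem:greensop}) has already been carried out.
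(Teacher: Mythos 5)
Your proof is correct and takes essentially the same approach as the paper, which in fact states the corollary without proof and leaves this exact reduction implicit. The identity
\[
\Lambda[g](z,p,v) = \int_{\Reals^3}(M_1+M_2)(z,v-v')\,\La g(p,v')\,\ud v' = \La(K)\big[\La g(p,\cdot)\big](z,v)
\]
immediately reduces the claim to Corollary~\ref{lem:L6}; the only minor quibble is that the Fubini/analytic-continuation discussion is unnecessary, since in~\eqref{Lambdadef} the $s$-integral already sits inside the $v'$-integral, so the identity is just a matter of reading off the definition with $i\tau$ replaced by $p$.
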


\begin{lemma} \label{lem:realFT}
	Let $f\in \Omega^{A,\delta}_{\tilde{R},\delta_1,R,\eps}$ with $A\geq 1$, $R,\tilde{R}>0$, $\eps,\delta,\delta_1 \in (0,1)$. Further let $|\nu|\leq n-6$.  Then we can estimate:
	\begin{align}
	\left|\int e^{-is\tau} \nabla^\nu f(s,v) \ud{s}\right|\leq C(A) \frac{e^{-\frac12 |v|}}{(1+|\tau|^2)} \min\{\delta+\frac{R\eps |\tau|}{(1+\eps|\tau|)},\frac{R}{(1+\eps|\tau|)}\}, \quad \text{for $\tau \in \Reals$}.
	\end{align}
\end{lemma}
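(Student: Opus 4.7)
The plan is to interpret $\int e^{-is\tau}\nabla^\nu f(s,v)\,\ud s$ as the entire function $z\mapsto\La(\nabla^\nu f)(z,v)$ (well-defined because $f\in\Gamma^n_{A,\delta_1}$ is supported in $s\in[0,2\delta_1]$) evaluated at $z=i\tau$, and to transfer the bounds on $\La(\nabla^\nu f)$ available on the vertical line $\Re(z)=A/2$---encoded by the norms $E_A$, $F_{\eps,A}$ and $G_{\eps,A}$ defining $\Omega^{A,\delta}_{\tilde R,\delta_1,R,\eps}$---down to the imaginary axis.

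I would first derive a convolution identity by applying the Plancherel inversion of Lemma~\ref{Plancherel} on the line $\Re z=A/2$:
\begin{equation*}
 \nabla^\nu f(s,v)=\frac{e^{As/2}}{2\pi}\int_\Reals e^{i\omega s}\,\La(\nabla^\nu f)(A/2+i\omega,v)\,\ud\omega.
\end{equation*}
Multiplying by $e^{-is\tau}$, integrating over $s\in[0,2\delta_1]$ and swapping integrations (legitimate by the $1/(1+\omega^2)$ decay of the Laplace data) yields
\begin{equation*}
 \int e^{-is\tau}\nabla^\nu f(s,v)\,\ud s=\frac{1}{2\pi}\int_\Reals \La(\nabla^\nu f)(A/2+i\omega,v)\,J(\omega-\tau)\,\ud\omega,\quad J(\eta):=\frac{e^{A\delta_1+2i\delta_1\eta}-1}{A/2+i\eta},
\end{equation*}
with pointwise bound $|J(\eta)|\le C(A)/(1+|\eta|)$. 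I would then insert, in turn, the three available bounds on $|\La(\nabla^\nu f_i)(A/2+i\omega,v)|$ coming from $E_A$, $F_{\eps,A}$ and $G_{\eps,A}$, and split the $\omega$-integration into $\{|\omega|\ge|\tau|/2\}$ (where the Laplace datum evaluated near $\omega\sim\tau$ already carries the desired factor $1/[(1+\tau^2)(1+\eps|\tau|)]$) and the complementary ``bad'' region $\{|\omega|\le|\tau|/2\}$, estimating each piece separately.

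The main obstacle is the bad region: there $|J(\omega-\tau)|\sim 1/|\tau|$ and the naive $L^1$-convolution estimate falls short by a factor of $|\tau|$, giving only $1/(1+|\tau|)$ decay instead of the required $1/(1+\tau^2)$. The fix is to combine the oscillatory factor $e^{2i\delta_1\omega}$ in the numerator of $J$ with the boundary cancellations $\nabla^\nu f(0,v)=\nabla^\nu f(2\delta_1,v)=0$. These vanish because $\supp f\subset[0,2\delta_1]\times\Reals^3$ and, by $\|f\|_{X^n_A}\le 1$, $\nabla^\nu f(\cdot,v)$ is continuous in $s$ for $|\nu|\le n-4$. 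Fourier inversion at $s=0$ and $s=2\delta_1$ converts these boundary zeros into the two moment identities
\begin{equation*}
 \int_\Reals \La(\nabla^\nu f)(A/2+i\omega,v)\,\ud\omega=0=\int_\Reals e^{2i\delta_1\omega}\,\La(\nabla^\nu f)(A/2+i\omega,v)\,\ud\omega,
\end{equation*}
which allow me to subtract from $J(\omega-\tau)$ its slowly-decaying constant-in-$\omega$ and oscillatory principal parts on the bad region. The effective kernel then carries an additional factor proportional to $\omega/(A/2-i\tau)$, supplying the missing $1/|\tau|$. Combining the three resulting estimates and taking the minimum produces the stated $\min\{\,\delta+R\eps|\tau|/(1+\eps|\tau|),\,R/(1+\eps|\tau|)\,\}$ structure, concluding the proof.
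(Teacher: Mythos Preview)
Your convolution identity with the kernel $J(\eta)=(e^{A\delta_1+2i\delta_1\eta}-1)/(A/2+i\eta)$ is correct, but the rescue via moment identities runs into two genuine problems.

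\textbf{First, the identity $\int_\Reals \La(\nabla^\nu f)(A/2+i\omega,v)\,d\omega=0$ is equivalent to $\nabla^\nu f(0^+,v)=0$, and this is not part of the hypotheses.} Membership in $\Gamma^n_{A,\delta_1}$ only says $\supp f\subset[0,2\delta_1]\times\Reals^3$; since $0$ is in this interval, nothing prevents $f(0,\cdot)\neq 0$. Continuity in $s$ (from $\|f\|_{X^n_A}\le 1$) does give $f(2\delta_1,\cdot)=0$, but the left endpoint is not forced. So one of your two moment identities is unavailable.

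\textbf{Second, even granting both moment identities, you lose a logarithm.} After subtracting the principal parts you obtain an effective kernel bounded by $C(A)\,|\omega|/\bigl(|\zeta|\,|\zeta+i\omega|\bigr)$ with $\zeta=A/2-i\tau$. Since $|F(\omega)|\le C/(1+\omega^2)$ gives only $|\omega|\,|F(\omega)|\le C/(1+|\omega|)$, the resulting bound is
\[
\frac{C}{1+|\tau|}\int_\Reals\frac{d\omega}{(1+|\omega|)(1+|\omega-\tau|)}\;\le\;\frac{C\log(2+|\tau|)}{(1+|\tau|)^2},
\]
which is strictly weaker than the claimed $C/(1+\tau^2)$. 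Removing the logarithm would require higher moments (i.e.\ $\partial_t f(0)=\partial_t f(2\delta_1)=0$), which are again not assumed.

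The paper avoids all of this by a one--line observation: since $\supp f\subset[0,2]$ and $\kappa_2\equiv 1$ there, one has $f=f\kappa_2$, so the shift from $\Re z=A/2$ to the imaginary axis is a convolution with the Fourier transform of the $C^\infty_c$ function $t\mapsto e^{At/2}\kappa_2(t)$. That kernel is Schwartz (with $A$--dependent constants), so the weight $(1+|\omega|^2)^{-1}(1+\eps|\omega|)^{-1}$ and the $\min$--structure transfer directly, with no moment cancellations and no logarithm. Replacing your hard cutoff $[0,2\delta_1]$ by the smooth cutoff $\kappa_2$ is exactly the missing idea.
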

\begin{proof}
	By definition of $\Omega^{A,\delta}_{\tilde{R},\delta_1,R,\eps}$, the estimate above holds for $\tau\in \Complex$ with $\Im(\tau)=-a=-A/2$. Further $f\in \Omega^{A,\delta}_{\tilde{R},\delta_1,R,\eps}$ yields
	$\supp f \subset [0,2]\times \Reals^3$, so $f(t,v)= f(t,v) \kappa_{2}(t)$, hence
	the estimate follows by the convolution identity for Laplace transforms.
\end{proof}

\begin{lemma}[Estimate for $\nu$ small] \label{lem:uregular}
	Let $\varrho$ be a function satisfying $\varrho(v)\leq Ce^{|v|}$.
	For any $c>0$ there exists $C>0$ such that for  $A\geq 1$ and $\delta_1\in (0,1/A)$ we have:
	If $f \in \Omega^{A,\delta}_{\tilde{R},\delta_1,R,\eps}$ and $u_\eps$ satisfies \eqref{eq:fixedmollified},
	then we have:
	\begin{align}
	\left|\int_{\Reals} \int_\Reals \int \langle \nabla D^\alpha \La u_\eps (z) \varrho , D^{\mu} \Lambda[f](\eps z, \omega-\theta) \La( D^\nu (u_\eps-u_0))(p)\rangle  \ud{v}\ud{\theta} \ud{\omega}\right|  &\leq c D^\alpha_{\eps,A}(u_\eps) + C \|u_\eps\|^2_{V^n_{A,\lambda}}, \label{est:low2}
	\end{align}
	for multi-indices $\alpha, \nu, \mu$ with $|\nu|\leq n-5$, $|\alpha|\leq n$, $1\leq|\mu|\leq n$. Here $z=a+i\omega$, $p=a+i\theta$.
\end{lemma}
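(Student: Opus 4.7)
The strategy is to apply Cauchy--Schwarz in the $v$ variable with a weight matched to the dissipation $D^{\alpha}_{\eps,A}(u_\eps)$, and to estimate the resulting quadratic form in $D^\mu\Lambda[f]\cdot\La(D^\nu(u_\eps-u_0))$ using the pointwise exponential decay of $\La(D^\nu(u_\eps-u_0))$ (afforded by the smallness of $|\nu|$) together with the $L^6$ estimate of Corollary~\ref{lem:kernelL2}. Pairing $\nabla D^\alpha\La u_\eps(z)\varrho$ with $(C_1(\eps z,v)\lambda(v))^{1/2}/\varrho$ at fixed $(\omega,\theta)$ yields
\begin{equation*}
\left|\int_{\Reals^3}\bigl\langle\nabla D^\alpha\La u_\eps(z)\varrho,\,D^\mu\Lambda[f](\eps z,\omega-\theta)\,\La(D^\nu(u_\eps-u_0))(p)\bigr\rangle\,\ud v\right|^2\leq I_1(\omega)\,I_2(\omega,\theta),
\end{equation*}
where $I_1(\omega):=\int C_1(\eps z,v)\lambda(v)|\nabla D^\alpha\La u_\eps|^2\,\ud v$ integrates (over $\omega$) to the isotropic part of $D^\alpha_{\eps,A}(u_\eps)$, and the remainder
\begin{equation*}
I_2(\omega,\theta):=\int_{\Reals^3}\frac{\varrho^2(v)\,|D^\mu\Lambda[f](\eps z,\omega-\theta,v)|^2\,|\La(D^\nu(u_\eps-u_0))(p,v)|^2}{C_1(\eps z,v)\,\lambda(v)}\,\ud v
\end{equation*}
has to be controlled by $\|u_\eps\|_{V^n_{A,\lambda}}^2$ after integrating in $(\omega,\theta)$.

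The estimate for $I_2$ is the crux of the argument. Since $|\nu|\leq n-5$ and $n\geq 12$, we have $|\nu|+2\leq n-3$; consequently the Sobolev embedding $H^2(\Reals^3)\hookrightarrow L^\infty(\Reals^3)$ combined with the weight $\lambda=e^{|v|}$ yields the pointwise bound $|\La(D^\nu(u_\eps-u_0))(p,v)|\leq C\,M(p)\,e^{-|v|/2}$, where $M(p):=\|\La(D^\nu(u_\eps-u_0))(p)\|_{H^2_\lambda}$. Inserting this and using $\varrho^2/\lambda\leq Ce^{|v|}$, the exponential factors cancel and
\begin{equation*}
I_2(\omega,\theta)\leq CM(p)^2\int_{\Reals^3}(1+|v|)(1+\alpha(\eps z,v))^2\,|D^\mu\Lambda[f](\eps z,\omega-\theta,v)|^2\,\ud v.
\end{equation*}
I bound the latter by covering $\Reals^3$ with the unit balls $\{B_{1/2}(v_k)\}_{v_k\in\tfrac{1}{2}\Z^3}$ and applying H\"older in $v$ with exponents $(3,3/2)$ on each ball: Corollary~\ref{lem:kernelL2} gives $\|D^\mu\Lambda[f]\|_{L^6(B_{1/2}(v_k))}\leq C\|\La f(\omega-\theta,\cdot)\|_{L^2_{\tilde\lambda}}/((1+\alpha(\eps z,v_k))(1+|v_k|^2))$, so that the polynomial factors $(1+|v_k|)$ and $(1+\alpha)^2$ are compensated by $(1+|v_k|^2)^{-2}$ and $(1+\alpha)^{-2}$ respectively, contributing $\leq C\|\La f(\omega-\theta,\cdot)\|_{L^2_{\tilde\lambda}}^2/(1+|v_k|)^3$ per ball.

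To close the integration in $(\omega,\theta)$, I apply Cauchy--Schwarz in $(\omega,\theta)$ and Plancherel (Lemma~\ref{Plancherel}): the $\theta$-integral of $M(p)^2$ is identified with $C\|u_\eps-u_0\|_{V^{|\nu|+2}_{A,\lambda}}^2$, which is controlled by $\|u_\eps\|_{V^n_{A,\lambda}}^2+\|u_0\|_{H^n_\lambda}^2$ (since $|\nu|+2\leq n$), while the $\tau$-integral of $\|\La f(\tau,\cdot)\|_{L^2_{\tilde\lambda}}^2$ is controlled by $Ce^{2A\delta_1}\|f\|_{V^0_{A,\tilde\lambda}}^2\leq C$ using $\supp f\subset[0,2\delta_1]$ and $\delta_1\leq 1/A$. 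The convolution structure in $\theta$ (with kernel $D^\mu\Lambda[f](\eps z,\omega-\theta,\cdot)$) is then handled by Young's inequality, and the weighted AM--GM inequality $2ab\leq c^{-1}a^2+cb^2$ absorbs the dissipation with the prescribed smallness $c>0$. The main obstacle is closing the $v$-summation: the $L^6_{\mathrm{loc}}$-only bound on $D^\mu\Lambda[f]$ combined with the polynomial weight $(1+|v|)(1+\alpha)^2$ coming from the inverse dissipation yields $\sum_k(1+|v_k|)^{-3}$, which is borderline logarithmically divergent at infinity; to close it, one must exploit the extra exponential decay $e^{-|v|/8}$ of the singular part $Z_2$ appearing in the proof of Lemma~\ref{lem:singularintegral} (arising precisely because $f\in L^2_{\tilde\lambda}$ has an exponential weight) together with the decay of $(M_1+M_2)(\eps z,\cdot)$ in $|z|$ for large $|\omega|$, which upgrades the $(1+\alpha)^{-1}$ bound of Corollary~\ref{lem:kernelL2} to an $\omega$-integrable decay via Plancherel in $\omega$.
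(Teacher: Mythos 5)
Your overall strategy---Cauchy--Schwarz against the dissipation weight, $L^6_{\mathrm{loc}}$ control of $D^\mu\Lambda[f]$ from Corollary~\ref{lem:kernelL2}, and a Vitali covering---is in the right spirit, but the way you allocate the weights between the two factors does not close, as you acknowledge yourself at the end. By extracting a pointwise $L^\infty$ bound on $\La(D^\nu(u_\eps-u_0))(p,\cdot)$ via $H^2\hookrightarrow L^\infty$, you burn all of the exponential weight $e^{-|v|/2}$ against $\varrho^2/\lambda$ and you also throw away the spatial square-summability of that factor. What remains is $\int(1+|v|)(1+\alpha)^2\,|D^\mu\Lambda[f]|^2\,\ud v$, and since $D^\mu\Lambda[f]$ is only controlled in $L^6_{\mathrm{loc}}$ (not globally), the ball-by-ball H\"older bound produces $\sum_k(1+|v_k|)^{-3}$, which is $\int r^2(1+r)^{-3}\,\ud r=\infty$. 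The repair you sketch does not exist: Corollary~\ref{lem:kernelL2} is the stated interface, and the exponential decay of $Z_2$ in the proof of Lemma~\ref{lem:singularintegral} is already priced into its $L^6$ bound---there is no residual gain to re-extract without proving a strictly stronger lemma. The large-$|\omega|$ decay of $(M_1+M_2)(\eps z,\cdot)$ is also irrelevant here because the divergence occurs in the large-$|v|$ regime where $\alpha(\eps z,v)=\eps|\Im z|/(1+|v|)$ is $O(1)$ or small.

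The paper's proof avoids this entirely by \emph{not} taking a pointwise bound on $\La(D^\nu(u_\eps-u_0))$. It first integrates in $\theta$ to form $H_\eps(z,v)$, then estimates $\|(1+\alpha)(1+|v|^2)H_\eps e^{|v|/2}\|_{L^2(\Reals^3)}$ by applying Young on each Vitali ball in the form $L^6(B_k)\times L^3(B_k)\to L^2(B_k)$: the $L^6(B_k)$ factor is $(1+\alpha)(1+|v|^2)D^\mu\Lambda[f]$, whose bound $\leq C\|\La f(p,\cdot)\|_{H^n_{\tilde\lambda}}$ is \emph{uniform in $k$} precisely because the weight $(1+\alpha)(1+|v|^2)$ is built into the $L^6$ estimate; the $L^3(B_k)$ factor is $\La(D^\nu(u_\eps-u_0)\kappa_{2\delta_1})e^{|v|/2}$, which by the local Sobolev embedding $H^1(B_k)\hookrightarrow L^3(B_k)$ retains the square-summability $\sum_k\|\cdot\|_{H^1(B_k)}^2\leq C\|\cdot\|_{H^{n-2}_\lambda}^2$ through \eqref{summinglp}. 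No polynomial tail survives. Finally, the $\theta$-convolution is handled by Lemma~\ref{lem:lowerfreez}, which provides the $L^1$-in-frequency control $\|\|\La((u_\eps-u_0)\kappa_{2\delta_1})\|_{H^{n-2}_\lambda}\|_{L^1_{\Re(z)=0}}\leq C\|u\|_{V^n_{A,\lambda}}$, a bootstrap from the equation that your argument does not use. Your proposal also does Cauchy--Schwarz pointwise in $(\omega,\theta)$ \emph{before} the $\theta$-integration, which further forecloses the Young-type convolution step that the paper relies on. In short, the gap is real: your allocation puts the polynomial weights on the local-$L^6$-only factor instead of keeping the summable factor in $L^2$/$L^3$, and the hand-waved fix does not repair it.
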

\begin{proof} 
	We start by estimating the integral in $\theta$.
	We introduce the function $H_\eps(z,v)$ given by: 
	\begin{align*}
	H_\eps(z,v)&=\int  D^{\mu} \Lambda[f](\eps z, p) \La( D^\nu (u_\eps-u_0))(i(\omega-\theta))  \ud{\theta} \\
				&=\int  D^{\mu} \Lambda[f](\eps z, p) \La( D^\nu (u_\eps-u_0)\kappa_{2 \delta_1})(i(\omega-\theta))  \ud{\theta}.
	\end{align*}
	Furthermore, Corollary~\ref{lem:kernelL2} gives an estimate for $\Lambda[f]$: 
	\begin{align}\label{est:Lambdalocal}
	\|(1+\alpha(z,v'))(1+|v'|^2) D^\mu \Lambda[f](\eps z,p,v')\|_{L^6(B_1(v))} 		&\leq C \|\La(f)(p,\cdot)\|_{H^n_{\tilde{\lambda}}}.
	\end{align}
	Now we pick a Vitali covering of the space $\Reals^3$. More precisely, we cover the space
	with the collection of balls given by $\left(B_\frac13(v)\right)_{v\in \Reals^3}$. By Vitali's covering Lemma there is a sequence of balls , $(B_1(v_k))_{k\in \Naturals}\subset \Reals^3$ such that
	\begin{align*}
	\Reals^3 &= \bigcup_{k\in \Naturals} B_1(v_k),\quad 	B_\frac13(v_j) \cap B_\frac13(v_k)=\emptyset, \quad  \text{ for $j\neq k$}.
	\end{align*}
	Using that $	B_\frac13(v_k)$ are disjoint, the balls $B_k :=B_1(v_k)$ satisfy
	\begin{align} \label{summinglp}
	\left(\sum_{k\in \Naturals} \|f\|^p_{L^p(B_k)}\right)^{1/p} \leq  \|f\|_{L^p(\Reals^3)}.
	\end{align}
	Now we apply Sobolev embedding on the balls $B_k$:
	\begin{align} \label{est:LuLocal}
	&\|\La(D^\nu (u_\eps-u_0)\kappa_{2 \delta_1})(z ,v)e^{\frac12|v|}\|_{L^3(B_k)}
	\leq C\|\La(D^\nu (u_\eps-u_0)\kappa_{2 \delta_1})(z ,v)e^{\frac12 |v|}\|_{H^1(B_k)}.
	\end{align}
	Now we start estimating $H_\eps(z,v)$. With the estimate \eqref{summinglp} we obtain (here $p=a+i\theta$, $a=A/2$):
	\begin{align*}
	&\|\|(1+\alpha(z,v))(1+|v|^2)H_\eps(z,v)e^{\frac12 |v|}\|_{L^2(\Reals^3)}\|_{L^2_{\Re(z)=a}} \\
	\leq  C &\| \int_\Reals \|(1+\alpha)(1+|v|^2) D^\mu \Lambda[f](\eps z,p) \La(D^\nu (u_\eps-u_0)\kappa_{2 \delta_1})(z-p)e^{\frac12 |v|} \|_{L^2(\Reals^3)} \ud{\theta} \|_{L^2_{\Re(z)=a}} \\
	\leq  C &\| \int_{\Reals}\left(  \sum_k \|(1+\alpha)(1+|v|^2) D^\mu \Lambda[f](\eps z,p) \La(D^\nu (u_\eps-u_0)\kappa_{2 \delta_1})(z-p)e^{\frac12 |v|} \|_{L^2(B_k)}^2 \right)^\frac12 \ud{\theta} \|_{L^2_{\Re(z)=a}}.
	\end{align*}
	On each ball $B_k$ we apply Young's inequality and obtain:
	\begin{align*}
	&\left\|(1+\alpha)(1+|v|^2) D^\mu \Lambda[f](\eps z,p) \La(D^\nu (u_\eps-u_0)\kappa_{2 \delta_1})(z-p)e^{\frac12 |v|} \right\|_{L^2(B_k)}^2 \\
	\leq &\ \|(1+\alpha)(1+|v|^2) D^\mu \Lambda[f](\eps z,p)\|_{L^6(B_k)}^2\|\La(D^\nu (u_\eps-u_0)\kappa_{2 \delta_1})(z-p)e^{\frac12 |v|}\|^2_{L^3(B_k)} .
	\end{align*}
	Using $|\nu|\leq n-5$ and the estimates \eqref{est:Lambdalocal} and \eqref{est:LuLocal} on each ball $B_k$ we get:
	\begin{equation}
	\begin{aligned} \label{eq:Hepscomp}
		&\|\|(1+\alpha(z,v))(1+|v|^2)H_\eps(z,v)e^{\frac12 |v|}\|_{L^2(\Reals^3)}\|_{L^2_{\Re(z)=a}}\\
	\leq  C &\| \int_{\Reals} \|\La(f)(p)\|_{H^n_{\tilde{\lambda}}}\left(\sum_k  \|\La(D^\nu (u_\eps-u_0)\kappa_{2 \delta_1})(i(\omega-\theta))e^{\frac12 |v|}\|^2_{L^3(B_k)} \right)^\frac12 \ud{\theta}  \|_{L^2_{\Re(z)=a}} \\
	\leq  C &\| \int_{\Reals} \|\La(f)(p)\|_{H^n_{\tilde{\lambda}}} \left(\sum_k  \|\La( (u_\eps-u_0)\kappa_{2 \delta_1})(i(\omega-\theta))e^{\frac12 |v|}\|^2_{H^{n-4}(B_k)} \right)^\frac12  \ud{\theta}  \|_{L^2_{\Re(z)=a}}.	
	\end{aligned}
	\end{equation}
	By assumption, $f\in \Omega^{A,\delta}_{\tilde{R},\delta_1,R,\eps}$, therefore:
	\begin{align} \label{eq:fHnL2}
		\|\| \La(f)(p)\|_{H^n_{\tilde{\lambda}}} \|_{L^2_{\Re(p)=a}} \leq 1.
	\end{align}
	We insert the estimates \eqref{eq:fHnL2} and \eqref{summinglp} into \eqref{eq:Hepscomp}:
	\begin{align*}
	\|\|(1+\alpha(z,v))(1+|v|^2)H_\eps(z,v)e^{\frac12 |v|}\|_{L^2(\Reals^3)}\|_{L^2_{\Re(z)=a}}
	\leq C  \|\|\La( (u_\eps-u_0)\kappa_{2 \delta_1})(z)\|_{H^{n-2}_{\lambda}}\|_{L^1_{\Re(z)=0}}.
	\end{align*}
	Applying \eqref{eq:uRe0}, we conclude
	\begin{equation}\label{est:Heps1}
	\begin{aligned}			
	\|\|(1+\alpha(z,v))(1+|v|^2)H_\eps(z,v)e^{\frac12 |v|}\|_{L^2(\Reals^3)}\|_{L^2_{\Re(z)=a}} \leq C  \|u\|_{V^n_{A,\lambda}}. 
	\end{aligned}
	\end{equation}
	We recall the definition of the dissipation function $D^\alpha_{\eps,A}$ (cf. \eqref{dissipexplic}) and observe that:
	\begin{align}\label{est:dissbound}
	\| \frac{ \nabla  D^\alpha \La(u)}{(1+\alpha(\eps z,v))(1+|v|^2)}\|_{V^0_{A,\lambda}} \leq C D^\alpha_{\eps,A}.
	\end{align}
	Finally, combining the estimates \eqref{est:Heps1} and \eqref{est:dissbound} yields:
	\begin{align*}
	&\left|\int_{\Reals} \int_\Reals \int  \langle \nabla  D^\alpha \La(u_\eps) (z) \varrho , D^{\mu} \Lambda[f](\eps z, \omega-\theta) \La( D^\nu (u_\eps-u_0))(p) \rangle  \ud{v}\ud{\theta} \ud{\omega}\right|	\\
	&\leq \| \frac{ \nabla  D^\alpha \La(u)}{(1+\alpha(\eps z,v))(1+|v|^2)}\|_{V^0_{A,\lambda}} \|u\|_{V^n_{A,\lambda}} \leq  c D^\alpha_{\eps,A} + C \|u\|_{V^n_{A,\lambda}},	
	\end{align*}
	as claimed.
\end{proof}
\begin{lemma}[Estimate for $\nu$ large] \label{lem:higherterms}
	Let $\varrho(v)$ be a function satisfying $|\varrho(v)|\leq C e^{-\frac12 |v|}$.
	For $c>0$ given, there exists $C>0$ such that for all $A\geq 1$ there is a $\delta_0(A)>0$ small, such that $\forall \delta \in (0,\delta_0)$, $R>0$ we can choose $\eps>0$ small enough such that
	\begin{align}
	\left|\int \langle \varrho(v)  \nabla D^\alpha \La(u) ,D^{\mu} \Lambda[f](\eps z, \omega-\theta)  D^\nu \La(u)(p,v)\rangle  \ud{v} \ud{\omega} \ud{\theta} \right|&\leq  c D^\alpha_{\eps,A}(u) + C \|u\|^2_{V^n_{A,\lambda}},\label{high}
	\end{align}
	for all $|\alpha|,|\nu|\leq n$, $1\leq |\mu|\leq n-6$, $\delta_1\in (0,1)$, $\tilde{R}>0$ and $f\in \Omega^{A,\delta}_{\tilde{R},\delta_1,R,\eps}$.
\end{lemma}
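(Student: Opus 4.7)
The plan is to mimic the strategy of Lemma~\ref{lem:uregular}, but to take advantage of the fact that since $|\mu|\leq n-6$, velocity derivatives of $\La(f)$ admit pointwise sup-norm estimates directly from the definitions of $\|\cdot\|_{E_A}$ and $\|\cdot\|_{F_{\eps,A}}$, rather than the Sobolev-type control used in the regime $|\nu|\leq n-5$. The first step is to move all $D^\mu_v$ derivatives off the kernel and onto $\La(f)$ by integration by parts in $v'$, which gives
\begin{align*}
D^\mu_v \Lambda[f](\eps z,\omega-\theta,v)=\int_{\Reals^3}(M_1+M_2)(\eps z,v-v')\,D^\mu_{v'}\La(f)(i(\omega-\theta),v') \ud v',
\end{align*}
and is legitimate because $|\mu|\leq n-6$ provides the regularity required of $\La(f)$.

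Next, splitting $f=f_1+f_2$ as in \eqref{omegadef} and using $\|f_1\|_{E_A}\leq \delta$ together with $\|f_2\|_{F_{\eps,A}}\leq R$, one obtains for all $|\mu|\leq n-6$ and $\tau\in\Reals$
\begin{align*}
|D^\mu_{v'}\La(f_1)(i\tau,v')|\leq \frac{C\delta\,e^{-|v'|/2}}{1+|\tau|^2},\quad |D^\mu_{v'}\La(f_2)(i\tau,v')|\leq \frac{CR\,e^{-|v'|/2}}{1+|\tau|^2}\cdot \frac{\eps|\tau|}{1+\eps|\tau|}.
\end{align*}
Applying Lemma~\ref{anisotrope} with $\beta=0$, taking $|D^\mu_{v'}\La(f_i)(i\tau,\cdot)|$ as the decaying density appearing there, yields the pointwise bounds
\begin{align*}
|D^\mu \Lambda[f_1](\eps z,\omega-\theta,v)|&\leq \frac{C\delta}{(1+|v|)(1+\alpha(\eps z,v))(1+|\omega-\theta|^2)},\\
|D^\mu \Lambda[f_2](\eps z,\omega-\theta,v)|&\leq \frac{CR\,q_\eps(\omega-\theta)}{(1+|v|)(1+\alpha(\eps z,v))},
\end{align*}
where $q_\eps(\tau):=\eps|\tau|/[(1+|\tau|^2)(1+\eps|\tau|)]$. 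The quantitative point is $\|q_\eps\|_{L^1(\Reals)}\leq C\eps\log(1/\eps)\to 0$ as $\eps\to 0$.

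Inserting these bounds into the integrand of \eqref{high}, I would then apply Cauchy--Schwarz in $v$, placing the weight $\lambda^{1/2}(v)/[(1+|v|)^{1/2}(1+\alpha(\eps z,v))]$ next to $|\nabla D^\alpha \La u(z,v)|$ so that the resulting square integrates (in $\omega$ and $v$) to a constant times $D^\alpha_{\eps,A}(u)$, while the complementary factor containing $|D^\nu \La u(p,v)|\lambda^{1/2}$ gives $\|u\|^2_{V^n_{A,\lambda}}$ after integration in $\theta$ and $v$; the exponential decay $|\varrho|\leq Ce^{-|v|/2}$ absorbs the remaining polynomial weight mismatches. Young's convolution inequality in $(\omega,\theta)$, with convolution kernels $(1+|\tau|^2)^{-1}$ for the $f_1$ contribution and $q_\eps(\tau)$ for the $f_2$ contribution, bounds the left-hand side of \eqref{high} by
\begin{align*}
C\bigl(\delta+R\,\eps\log(1/\eps)\bigr)\sqrt{D^\alpha_{\eps,A}(u)}\,\|u\|_{V^n_{A,\lambda}},
\end{align*}
and a final application of $ab\leq \tfrac{\eta}{2}a^2+\tfrac{1}{2\eta}b^2$ yields the claim upon choosing first $\delta\leq \delta_0(c,A,n)$ small and then $\eps$ small enough (depending on $R$ and $c$) that $R\eps\log(1/\eps)$ is dominated by $c$.

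The main obstacle I anticipate is reconciling the anisotropic weight in $D^\alpha_{\eps,A}(u)$ (given by $B_1$, which involves $|V|_v$ together with the extra $P_v$-coefficient $C_2$) with the isotropic bound in Lemma~\ref{anisotrope}. This should be handled by decomposing $\nabla D^\alpha \La u$ along the directions $P_v$ and $P_v^\perp$ and treating the $P_v$-component with the stronger weight $C_2$; the extra polynomial factors of $(1+|v|)$ that appear in this decomposition are harmless since they are absorbed by $\varrho$'s exponential decay, mirroring the analogous step in \cite{velazquez_non-markovian_2018}.
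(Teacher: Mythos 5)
Your approach is essentially sound and reaches a valid bound, but you take a slightly different technical route than the paper. Where you move all $\mu$ derivatives off the kernel onto $\La(f)$ and then use the crude scalar bound $|(M_1+M_2)(z,w)|\lesssim |w|^{-1}(1+\alpha(z,w))^{-1}$ (via Lemma~\ref{anisotrope} with $\beta=0$), the paper explicitly exploits $|\mu|\geq 1$: it keeps one derivative on the kernel, applying \eqref{eq:Mestimate} with $|\beta|=1$ to obtain the sharper decay $|w|^{-2}(1+\alpha)^{-1}$, and only puts $|\mu|-1\leq n-7$ derivatives on $\La(f)$ via Lemma~\ref{lem:realFT}. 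The resulting pointwise bound in the paper is $|D^\mu\Lambda[f]|\lesssim (1+|v|^2)^{-1}(1+\alpha(\eps z,v))^{-1}(1+|\omega-\theta|^2)^{-1}$, which matches the weight in \eqref{est:dissbound} directly, whereas your bound has only $(1+|v|)^{-1}$; as you correctly observe, the missing factor of $(1+|v|)$ is harmlessly absorbed by the exponential decay of $\varrho$, so your weaker pointwise bound still suffices. Two small points worth flagging: (i) the Cauchy--Schwarz weight you write, $\lambda^{1/2}/[(1+|v|)^{1/2}(1+\alpha)]$, is stronger than what $D^\alpha_{\eps,A}$ controls — the dissipation yields only $(1+|v|)^{-3/2}(1+\alpha)^{-1}\lambda^{1/2}$ in the worst ($P_v$-directional, $\alpha$ small) case, and \eqref{est:dissbound} is stated with $(1+|v|^2)^{-1}(1+\alpha)^{-1}$; however, the reservoir of decay from $\varrho^{1/2}\lesssim e^{-|v|/4}$ covers this gap, exactly as you say. (ii) Your final worry about reconciling the anisotropic weight $B_1$ with an isotropic bound turns out not to be an obstacle here, precisely because the isotropic surrogate \eqref{est:dissbound} is already available and the kernel estimates together with $\varrho$ supply enough polynomial decay to use it; no directional decomposition of $\nabla D^\alpha \La u$ is actually needed in this lemma. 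Your $L^1$-estimate $\|q_\eps\|_{L^1}\lesssim \eps\log(1/\eps)$ is correct (and is in fact tighter than the $\eps^{1/2}$ bound recorded in \eqref{loggain}).
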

\begin{proof}
	Let $A\geq 1$.
	Since $|\mu|\geq 1$, the estimate \eqref{eq:Mestimate} and Lemma~\ref{lem:realFT} allow to choose $\delta_0(A)$ such that for $\delta\in (0,\delta_0)$, and $R>0$ we have
	\begin{align*}
	|D^\mu \Lambda[f](\eps z,\omega-\theta) |\leq \frac{1}{(1+|v|^2)(1+\alpha(\eps z, v))(1+|\omega-\theta|^2)},
	\end{align*}
	provided $\eps>0$ is small enough. This implies that  we can bound
	\begin{align*}
	&\left|\int \langle \varrho(v)  \nabla D^\alpha \La(u) ,D^{\mu} \Lambda[f](\eps z, \omega-\theta)  D^\nu \La(u)(z,v) \rangle \ud{v} \ud{\omega} \ud{\theta} \right| \\
	\leq 	&\int \frac{\varrho(v)  |\nabla D^\alpha \La(u)|}{(1+|v|^2)(1+\alpha(\eps z, v))(1+|\omega-\theta|^2)} | D^\nu \La(u)(p,v)| \ud{v} \ud{\omega} \ud{\theta} 			\leq  c D^\alpha_{\eps,A} + C \|u\|_{V^n_{A,\lambda}},
	\end{align*}
	as was claimed.
\end{proof}

\subsection{Proof of Lemma~\ref{lem:lowerorder}}

\begin{proofof}[Proof of Lemma~\ref{lem:nonlinhigh}]
	We split the left-hand side of \eqref{freethmlow} as:
	\begin{align*}
		\sum_{\beta < \alpha} \binom{\alpha}{\beta} |Q^{\alpha,\beta}_{\eps,A}[f ](u)|	&= 
		\sum_{\beta < \alpha,|\beta| \leq  n-4} \binom{\alpha}{\beta} |Q^{\alpha,\beta}_{\eps,A}[f ](u)|+ \sum_{\beta < \alpha, |\beta|>n-4} \binom{\alpha}{\beta} |Q^{\alpha,\beta}_{\eps,A}[f ](u)|\\
		&=: I_1+I_2.
	\end{align*}
	We start by estimating $I_1$. 	To this end, rewrite the quadratic functionals 
	$Q^{\alpha,\beta}_{\eps,A}[f](u)$ using Plancherel's theorem:
	\begin{align}
	Q^{\alpha,\beta}_{\eps,A}[f](u)	= 	&(2\pi)^{-\frac12} (J_1^{\alpha,\beta} + J^{\alpha,\beta}_2),	
	\end{align}
	where $J^{\alpha,\beta}_1$ is given by
	\begin{equation}  \label{LapNon}
	\begin{aligned}
	J^{\alpha,\beta}_1 =	&\int_{\Reals} \int_\Reals \int \langle \nabla (D^\alpha \La(u) \lambda)(z), D^{\alpha-\beta}\Lambda[f](\eps z, \omega-\theta) \La(\nabla D^\beta (u-u_0))(p) 	\rangle \ud{v}\ud{\theta} \ud{\omega} \\
	+	\int_{\Reals} &\int_\Reals \int \langle \nabla (D^\alpha \La(u) \lambda)(z), \nabla\cdot  D^{\alpha-\beta}\Lambda[f](\eps z, \omega-\theta) \La(\nabla D^\beta (u-u_0))(p) 	\rangle \ud{v}\ud{\theta} \ud{\omega} ,
	\end{aligned}
	\end{equation}
	and $J_2$ by
	\begin{equation}
	\begin{aligned}
	J^{\alpha,\beta}_2 =	&\int_{\Reals}  \int \langle \nabla (D^\alpha  \La(u) \lambda)(z),  D^{\alpha-\beta} \int (M_1+M_2)(\eps z,v-v') \La(f)(z,v') \ud{v'}D^\beta u_0 \rangle 					\ud{v} \ud{\omega} \\
	+	\int_{\Reals}&  \int \langle \nabla (D^\alpha  \La(u) \lambda)(z), \nabla \cdot D^{\alpha-\beta} \int (M_1+M_2)(\eps z,v-v') \La(f)(z,v') \ud{v'}D^\beta u_0 \rangle 					\ud{v} \ud{\omega}.
	\end{aligned} 	
	\end{equation}
	We apply Lemma~\ref{lem:uregular} to infer
	\begin{align*}
		J_1^{\alpha,\beta} &\leq c D^\alpha_{\eps,A}(u) + C \|u\|^2_{V^n_{A,\lambda}}, \quad \text{for $|\beta|\leq n-6$} .
	\end{align*} 
	The bound for $J_2^{\alpha,\beta}$ follows similarly. By assumption $u_0\in H^n_\lambda$, and
	Corollary~\ref{lem:kernelL2} gives:
	\begin{align*}
		J_2^{\alpha,\beta}\leq c D^\alpha_{\eps,A}(u), \quad \text{for $|\beta|\leq n-4$}.
	\end{align*}
	Therefore, we have
	\begin{align}\label{eq:I1}
		I_1 \leq c D^\alpha_{\eps,A}(u) + C \|u\|^2_{V^n_{A,\lambda}}. 
	\end{align}
	It remains to prove the same estimate for $I_2$.
	We write $Q^{\alpha,\beta}_{\eps,A}[f](u)$ as:
	\begin{align*}
		 Q^{\alpha,\beta}_{\eps,A}[f](u)=	&\int_{\Reals} \int_\Reals \int \langle \nabla (D^\alpha \La(u) \lambda)(z), D^{\alpha-\beta}\Lambda[f](\eps z, \omega-\theta) \La(\nabla D^\beta u)(p) 	\rangle \ud{v}\ud{\theta} \ud{\omega} \\
		+	\int_{\Reals} &\int_\Reals \int \langle \nabla (D^\alpha \La(u) \lambda)(z), \nabla \cdot D^{\alpha-\beta}\Lambda[f](\eps z, \omega-\theta) \La(D^\beta u)(p) 	\rangle \ud{v}\ud{\theta} \ud{\omega}.
	\end{align*}
	Since  $|\beta|\geq n-5$, we have $|\alpha-\beta|\leq 5$. By assumption $n\geq 12$, hence we can apply Lemma~\ref{lem:higherterms} to get:
	\begin{align} \label{eq:I2}
		I_2 &\leq c D^\alpha_{\eps,A}(u) + C \|u\|^2_{V^n_{A,\lambda}}.
	\end{align}
	Combining \eqref{eq:I1} and \eqref{eq:I2} proves \eqref{freethmlow}.

\end{proofof}

\section{Appendix}
\subsection{Proof of Lemma~\ref{lem:nonlinhigh} }

\begin{proofof}[Proof of Lemma~\ref{lem:nonlinhigh}]
	To prove a bound for the functional $Q^{\alpha,\alpha}_{\eps,A}[f]$ in terms of the 
	dissipation $D^\alpha_{\eps,A}$ (cf. \eqref{dissipexplic}), we make
	use of its symmetry properties. The proof and the lemmas contained therein is similar to the proof
	of Lemma 4.9 in \cite{velazquez_non-markovian_2018}.
	
	We start by introducing some notation.
	For $\eps>0$, $v\in \Reals^3$, $z=a+i\omega, p=a+i\theta \in \Complex$, define the matrices $L_1$, $L_2$:
	\begin{align}
	L_1(\eps,z,p,v) &:=  	\frac12 ( M_1(\eps z, v) + M_1(\eps \ol{p}, v)) , \quad  \label{L12def}L_2(\eps,z,p,v) := 		\frac12 ( M_2(\eps z, v) + M_2(\eps \ol{p}, v)), 
	\end{align}
	and the  symmetrized kernel $\Lambda_s$ by (again writing $z=a+i\omega, p=a+i\theta$):
	\begin{equation}\label{Lambdas}
	\begin{aligned} 
	\Lambda_s[f ](\eps,z,p,v)&:= \Lambda_1[\nu ](\eps,z,p,v) + \Lambda_2[f ](\eps,z,p,v)    \\
	\Lambda_1[f ](\eps,z,p,v)&:=  \int_{\Reals^3 \setminus B_1(0)} L_1(\eps,z,p,v') \left(\int_0^\infty e^{-is(\omega-\theta)}f(s,v-v') \ud{s}\right) \ud{v'} \\
	\Lambda_2[f ](\eps,z,p,v)&:=  \int_{\Reals^3 \setminus B_1(0)} L_2(\eps,z,p,v') \left(\int_0^\infty e^{-is(\omega-\theta)}f(s,v-v') \ud{s}\right)  \ud{v'} .	
	\end{aligned}
	\end{equation}
	We split the kernel $L_2$ into two terms:
	\begin{align}
	N_2(\eps,z,p,v) 			&= L_2 (\eps,z,p,v)-N_1(\eps,z,p,v), \text{where} \label{N2}  \\
	N_1(\eps,z,p,v)								&= \frac{1}{|v|^2}\frac{\eps (a+i(\theta- \omega))}{(1+\frac{\eps z}{|			v|})^2(1+\frac{\eps \ol{p}}{|v|})^2} P_{v} 					\label{N1}.	
	\end{align} 	
	Further define $\Lambda_0(z,v)$ by  
	\begin{align}
		\Lambda_0[f](z,\tau,v)	&= \int_{B_1(0)} \int_{0}^\infty  \left(M_1+M_2\right)(z,v') e^{-i\tau s}f(s,v-v') \ud{s} \ud{v'} \label{Lambda0def} .
	\end{align}
We will make use of the following straightforward estimates (compare Lemma~4.6 in \cite{velazquez_non-markovian_2018}).
\begin{lemma} \label{matrixsymm}
	Let $a>0$ and $z=a+i\omega$, $p= a+ i\theta$, and  $0<\eps \leq \frac{1}{a}$. For $V,W \in \Complex^3$ and $L_1$ as introduced in \eqref{L12def}, we have
	\begin{equation}\label{est:L1bound}
	\begin{aligned}	
	|\langle V,	L_1(\eps,z,p,v)W \rangle|	&\leq C 	\frac{|P_{v}^\perp V| |P_{v}^\perp W|}{1+|v|} \frac{1+ \eps |	\theta-\omega|}{(1+\alpha(\eps z,v))(1+ \alpha(\eps p,v))}, \quad &\text{for $|v|\geq 1$}. 						 
	\end{aligned}
	\end{equation}
	Similarly we have an estimate for $N_2$:
	\begin{equation} \label{est:N2bound}
	\begin{aligned}
			| \langle V, N_2(\eps,z,p,v)  W \rangle |  	&\leq C 	\frac{|V||W|}{1+|v|^3} \frac{\eps^2 |p| |z| +\eps^2|p| |z| 				(1+\eps|\theta-\omega|)}						{(1+\alpha(\eps z,v))^2(1+ \alpha(\eps p,v))^2}  , \quad &\text{for $|v|\geq 1$}.  
	\end{aligned}
	\end{equation}
\end{lemma}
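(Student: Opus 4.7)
The plan is to reduce both estimates to direct algebraic computations on the explicit rational functions $M_1, M_2$ from \eqref{defM}, in the spirit of Lemma~4.6 in \cite{velazquez_non-markovian_2018}. The key feature is that each $M_i$ is rational in $z/|v|$ with denominator $(1+z/|v|)^k$, so the two contributions in $L_i = \tfrac12(M_i(\eps z,v) + M_i(\eps\bar p,v))$ can be put over a common denominator and estimated by elementary means.

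For \eqref{est:L1bound}, combining fractions yields
\begin{equation*}
L_1(\eps,z,p,v) = \frac{\pi^2}{8|v|} \cdot \frac{2 + \eps(z+\bar p)/|v|}{(1+\eps z/|v|)(1+\eps \bar p/|v|)} \, P_v^\perp.
\end{equation*}
Using $z+\bar p = 2a + i(\omega-\theta)$ together with $\eps a \leq 1$, the numerator has modulus at most $C(1 + \eps|\omega-\theta|)$ for $|v|\geq 1$. The denominator is bounded below via the elementary estimate $|1+\eps z/|v||^2 = (1+\eps a/|v|)^2 + (\eps\omega/|v|)^2 \geq c(1+\alpha(\eps z,v))^2$ for $|v|\geq 1$ (recalling $\alpha(\eps z,v)=\eps|\omega|/(1+|v|)$), and similarly for the $\bar p$ factor. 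Since $P_v^\perp$ is a self-adjoint projection, $\langle V, L_1 W\rangle$ is a scalar multiple of $\langle P_v^\perp V, P_v^\perp W\rangle$, and \eqref{est:L1bound} follows after replacing $1/|v|$ by $2/(1+|v|)$, valid on $|v|\geq 1$.

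For \eqref{est:N2bound}, setting $u = \eps z/|v|$ and $w = \eps\bar p/|v|$, the analogous manipulation yields
\begin{equation*}
L_2 = \frac{\pi^2}{8|v|} \cdot \frac{(u+w)(1+uw) + 4uw}{(1+u)^2(1+w)^2}\, P_v, \qquad N_1 = \frac{\eps(a + i(\theta-\omega))}{|v|^2(1+u)^2(1+w)^2} \, P_v.
\end{equation*}
The crucial point is that $N_1$ is constructed so that, up to an absolute constant absorbed into $C$, its numerator matches the linear-in-$(u+w)$ piece $\pi^2(u+w)/8 = \pi^2\eps(2a+i(\omega-\theta))/(8|v|)$ of $L_2$. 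The difference $L_2 - N_1$ therefore loses the linear-in-$(u+w)$ behavior and retains essentially the quadratic terms $4uw$ and $(u+w)uw$. The term $4uw = 4\eps^2 z\bar p/|v|^2$ contributes $\sim \eps^2|z||p|/|v|^3$ after the $1/|v|$ prefactor, matching the first summand in the target bound. The term $(u+w)uw$ is estimated using $|u+w| \leq (2+\eps|\omega-\theta|)/|v|$ (since $\eps a\leq 1$) combined with $|uw|\leq \eps^2|z||p|/|v|^2$, which yields $\sim \eps^2|z||p|(1+\eps|\omega-\theta|)/|v|^4$, within the stated bound for $|v|\geq 1$. The denominator lower bounds are as in the first estimate, and the $|V||W|$ factor arises from $|\langle V, P_v W\rangle|\leq |V||W|$.

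The main obstacle is the algebraic bookkeeping for $N_2$: one must verify that the leading linear-in-$(u+w)$ contributions of $L_2$ and $N_1$ indeed combine (modulo absolute constants) so as to improve the decay rate from $1/|v|^2$, which holds for $L_2$ and $N_1$ individually, to $1/|v|^3$, with the residual coefficients absorbed by the extra factor $\eps^2|z||p|(1+\eps|\omega-\theta|)$ in the stated bound. The first estimate is by contrast an almost immediate consequence of fraction combination together with the orthogonality of $P_v^\perp$.
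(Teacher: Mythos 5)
The paper gives no proof of Lemma~\ref{matrixsymm}; it imports the statement, referring to Lemma~4.6 of the cited reference. Your proof of \eqref{est:L1bound} is sound and matches the obvious route: combining the two fractions of $L_1$ over a common denominator, bounding the numerator $2+\eps(z+\bar p)/|v|$ by $C(1+\eps|\omega-\theta|)$ using $\eps a\le 1$ and $|v|\ge 1$, and controlling each denominator factor by $(1+\alpha(\eps z,v))$, $(1+\alpha(\eps p,v))$ respectively; the projection $P_v^\perp$ then produces the tangential factors $|P_v^\perp V||P_v^\perp W|$.

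However, your argument for \eqref{est:N2bound} contains a genuine gap. Its load-bearing step is the claim that the numerator of $N_1$, namely $\eps\bigl(a+i(\theta-\omega)\bigr)/|v|^2$, matches the linear-in-$(u+w)$ piece of $L_2$ ``up to an absolute constant absorbed into $C$.'' Write $u=\eps z/|v|$, $w=\eps\bar p/|v|$, so that $u+w=\eps\bigl(2a+i(\omega-\theta)\bigr)/|v|$. The linear piece of $L_2$ thus has numerator proportional to $2a+i(\omega-\theta)$, whereas $N_1$ as defined in \eqref{N1} has numerator proportional to $a+i(\theta-\omega)=a-i(\omega-\theta)$. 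These are not scalar multiples of one another: matching the real parts forces the ratio to equal $1/2$, while matching the imaginary parts forces it to equal $-1$. Consequently $L_2-N_1$ retains a term of size comparable to $\eps(a+|\omega-\theta|)/|v|^2$, which is \emph{not} within the stated bound \eqref{est:N2bound}. To see this concretely, take $\omega=\theta=0$: then the difference is of order $\eps a/|v|^2$ while \eqref{est:N2bound} would require $O(\eps^2 a^2/|v|^3)$, smaller by the factor $\eps a/|v|\le 1/|v|$.

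In short, the cancellation you invoke is the right structural idea for why a decomposition $L_2=N_1+N_2$ could gain an order in $1/|v|$, and you correctly flag it as ``the main obstacle,'' but the verification does not go through with $N_1$ literally as defined in \eqref{N1}. For your argument to close, $N_1$ would need to equal precisely the linear-in-$(u+w)$ piece, i.e.\ $N_1=\tfrac{\pi^2}{8|v|^2}\,\eps\bigl(2a+i(\omega-\theta)\bigr)\,(1+u)^{-2}(1+w)^{-2}P_v$ (note the sign of the imaginary part and the factor $\pi^2/8$). Whether the discrepancy reflects a transcription issue in \eqref{N1} or not, the matching has to be exact — not merely ``up to an absolute constant'' — for the quadratic remainder in your expansion to carry the whole burden, and you have not verified that it is.
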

For the terms involving $N_1$, we can extract extra decay from the fact that any $f\in  \Omega^{A,\delta}_{\tilde{R},\delta_1,R,\eps}$ has zero average. 
\begin{lemma} \label{N1lemma}
	Let $N_1$ be given by \eqref{N1}. Assume that $h\in L^2_{\tilde{\lambda}}$  satisfies $|h(v)|\leq R_1 e^{-\frac12|v|}$ almost everywhere and has mean zero:
	\begin{align} \label{eq:meanzero}
		\int h(v) \ud{v} = 0.
	\end{align}
	For $a>0$, $\eps \in (0,\frac1{a}]$, $z=a+i\omega ,p= a+i\theta \in \Complex$ we have:
	\begin{equation} \label{N1est}
	\begin{aligned}
	\left| \int_{\Reals^3 \setminus B_1} \langle V, N_1(\eps,z,p,v')W \rangle h(v-v')  \ud{v'}\right| 		
	\leq  	\frac{C R_1 |V||W|(1+\eps |\omega-\theta|)}{(1+|v|^3) (1+\alpha(\eps z,v))^2(1+ \alpha(\eps p,v))^2}. 
	\end{aligned}	
	\end{equation}
	Similarly, for $L_1$ and $N_2$ we have the bounds:
	\begin{align} \label{eq:L1bd}
			\left| \int_{\Reals^3\setminus B_1} \langle V, L_1(\eps,z,p,v')W \rangle h(v-v')  \ud{v'}\right| 		
		 &\leq 	\frac{C R_1 |P_{v}^\perp V| |P_{v}^\perp W| (1+ \eps |	\theta-\omega|)}{(1+|v|)(1+\alpha(\eps z,v))(1+ \alpha(\eps p,v))}, \\
		\left| \int_{\Reals^3\setminus B_1} \langle V, N_2(\eps,z,p,v')W \rangle h(v-v')  \ud{v'}\right| 		
		&\leq  	\frac{C R_1|V||W|}{1+|v|^3} \frac{\eps^2 |p| |z| +\eps^2|p| |z| 				(1+\eps|\theta-\omega|)}						{(1+\alpha(\eps z,v))^2(1+ \alpha(\eps p,v))^2}. \label{eq:N2bd}		
	\end{align}
\end{lemma}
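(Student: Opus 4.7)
The proof splits into two qualitatively different arguments: the bounds \eqref{eq:L1bd} and \eqref{eq:N2bd} for $L_1$ and $N_2$ follow from the pointwise estimates of Lemma~\ref{matrixsymm} by a direct convolution argument, while \eqref{N1est} crucially exploits the mean-zero hypothesis \eqref{eq:meanzero}, which is precisely why $N_1$ was isolated from $L_2$ in the decomposition \eqref{N2}--\eqref{N1}.

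For $L_1$ and $N_2$ I would first reduce to $|v|\geq 2$ (the case $|v|\leq 2$ being trivial since all the weights are bounded there), then split the $v'$-integral into the near region $\{|v-v'|\leq |v|/2\}$ and the far region $\{|v-v'|>|v|/2\}$. In the far region, $|h(v-v')|\leq CR_1 e^{-|v|/4}$ decays exponentially while the integrand grows at most polynomially in the relevant arguments, so this contribution is harmlessly absorbed. In the near region, $|v'|$ is comparable to $|v|$, so the weights $(1+\alpha(\eps z,v'))$ and $(1+\alpha(\eps p,v'))$ produced by Lemma~\ref{matrixsymm} are equivalent to their counterparts at $v$ up to universal constants, and $P_{v'}^\perp$ differs from $P_{v}^\perp$ by $O(|v-v'|/|v|)$; the extra $|v-v'|$ from this perturbation is absorbed via $\int_{\Reals^3} |w||h(w)|\,\ud w\leq CR_1$. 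Pulling the resulting $v$-dependent prefactor out of the integral and bounding the remainder by $\|h\|_{L^1}\leq CR_1$ yields \eqref{eq:L1bd} and \eqref{eq:N2bd}.

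For \eqref{N1est}, a direct application of the pointwise bound on $N_1$ would only give $1/(1+|v|)^2$ decay, one power short of the target $1/(1+|v|)^3$. The mean-zero hypothesis bridges this gap: using $\int_{\Reals^3} h(w)\,\ud w=0$, we write
\begin{align*}
\int_{\Reals^3\setminus B_1}\langle V,N_1(\eps,z,p,v')W\rangle h(v-v')\,\ud v'
 &= \int_{\Reals^3\setminus B_1}\langle V,[N_1(\eps,z,p,v')-N_1(\eps,z,p,v)]W\rangle h(v-v')\,\ud v' \\
&\quad -\langle V,N_1(\eps,z,p,v)W\rangle\int_{B_1} h(v-v')\,\ud v'.
\end{align*}
For $|v|\geq 2$ the boundary term $\int_{B_1} h(v-v')\,\ud v'$ is bounded by $CR_1 e^{-|v|/2}$ and is exponentially better than required. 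For the main term, I split near/far in $|v-v'|$ again; the far contribution is once more exponentially small, while in the near region a mean-value argument gives $|N_1(\eps,z,p,v')-N_1(\eps,z,p,v)|\leq |v-v'|\sup_{\tilde v\in[v,v']}|\nabla_{\tilde v} N_1(\eps,z,p,\tilde v)|$. An explicit differentiation of the formula \eqref{N1}, together with $|\tilde v|\sim |v|$ in the near region, produces $|\nabla N_1|\leq C\eps|a+i(\theta-\omega)|/(|v|^3(1+\alpha(\eps z,v))^2(1+\alpha(\eps p,v))^2)$, supplying exactly the missing $1/|v|$ and the correct $\alpha$-weights. Since $\int_{\Reals^3}|w||h(w)|\,\ud w\leq CR_1$ and $\eps|a+i(\theta-\omega)|\leq C(1+\eps|\omega-\theta|)$ (using the standing assumption $\eps a\leq 1$), the claim follows.

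The main technical obstacle is the differentiation step for $N_1$: one must verify that derivatives of the denominator factors $(1+\eps z/|v|)^{-2}$ and $(1+\eps\ol p/|v|)^{-2}$ do not degrade the $\alpha$-weights. A short computation using $|\partial_v(1+\eps z/|v|)^{-1}|\leq C\eps|z|/(|v|^2(1+\alpha(\eps z,v))^2)$ and the analogous identity for $\ol p$ confirms the required bound on $|\nabla N_1|$; the rest of the proof is bookkeeping following the pattern above.
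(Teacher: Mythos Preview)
Your proposal follows essentially the same approach as the paper: for $L_1$ and $N_2$ the paper simply says the bounds ``follow directly from \eqref{est:L1bound} and \eqref{est:N2bound}'', and your near/far splitting is a reasonable way to fill in that reduction from weights at $v'$ to weights at $v$; for $N_1$ both you and the paper subtract $N_1(\eps,z,p,v)$ using the mean-zero hypothesis and then apply the mean-value theorem to the difference. If anything you are more careful than the paper, since you explicitly track the boundary term $-\langle V,N_1(\eps,z,p,v)W\rangle\int_{B_1}h(v-v')\,\ud v'$ arising from the fact that the integral is over $\Reals^3\setminus B_1$ rather than all of $\Reals^3$---the paper's identity \eqref{N1convest} silently drops this term, which is indeed exponentially small for $|v|\geq 2$ exactly as you argue.
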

\begin{proof}
	The estimates \eqref{eq:L1bd} and \eqref{eq:N2bd} follow directly from
	\eqref{est:L1bound} and \eqref{est:N2bound} respectively.

	In order to prove \eqref{N1est}, we use the mean zero property \eqref{eq:meanzero} to obtain:
	\begin{align} \label{N1convest}
	\int_{\Reals^3 \setminus B_1} N_1(\eps,z,p,v')  h(v-v')  \ud{v'} 	
	= 	\int_{\Reals^3 \setminus B_1} \left(N_1(\eps,z,p,v')-N_1(\eps,z,p,v)\right)  h(v-v')  \ud{v'}. 	
	\end{align}
	For $|v|\leq 1$ the estimate is straightforward. For $|v|,|v'|\geq 1$, we
	estimate the difference above using the mean value theorem:
	\begin{align*}
		|N_1(\eps,z,p,v')-N_1(\eps,z,p,v)|\leq C \max_{v_1,v_2\in \{v,v'\}} \frac{|v-v'|(1+\eps|\theta-\omega|)}{(1+|v_1|^3)(1+\alpha(\eps z,v_2))(1+\alpha(\eps p,v_2))}.
	\end{align*}
	We plug this estimate and the assumption $|h(v)|\leq R_1 e^{-\frac12|v|}$
	into \eqref{N1convest} to conclude the proof.
\end{proof}
\begin{lemma}
	The following integral bound holds:
	\begin{equation} \label{loggain}
	\begin{aligned}
	\int_{\Reals} \frac{R\eps |\tau|}{(1+\eps|\tau|)(1+|\tau|)^2}\ud{\tau} &\leq C R \eps^\frac12.
	\end{aligned}
	\end{equation}
\end{lemma}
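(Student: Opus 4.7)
The plan is to exploit the uniform bound $\frac{\eps|\tau|}{1+\eps|\tau|} \leq \min\{\eps|\tau|, 1\}$, which is the essential feature causing the $\eps^{1/2}$-gain: for small $|\tau|$ the factor is essentially $\eps|\tau|$, while for large $|\tau|$ it is $O(1)$, and the outer factor $(1+|\tau|)^{-2}$ ensures convergence in both regimes. The cleanest threshold is $|\tau| = 1/\sqrt{\eps}$, which optimally balances the two estimates.

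First I would reduce to the half-line by symmetry of the integrand, obtaining
\begin{equation*}
\int_\Reals \frac{R\eps|\tau|}{(1+\eps|\tau|)(1+|\tau|)^2}\,\ud{\tau} = 2\int_0^\infty \frac{R\eps\tau}{(1+\eps\tau)(1+\tau)^2}\,\ud{\tau}.
\end{equation*}
Then I would split the domain into $[0, 1/\sqrt{\eps}]$ and $[1/\sqrt{\eps}, \infty)$. On the first region, bound $\frac{\eps\tau}{1+\eps\tau} \leq \eps\tau \leq \sqrt{\eps}$, so that the contribution is at most $R\sqrt{\eps}\int_0^\infty (1+\tau)^{-2}\,\ud{\tau} \leq CR\sqrt{\eps}$. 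On the second region, use $\frac{\eps\tau}{1+\eps\tau}\leq 1$ and compute $\int_{1/\sqrt{\eps}}^\infty (1+\tau)^{-2}\,\ud{\tau} \leq (1+1/\sqrt{\eps})^{-1} \leq \sqrt{\eps}$, contributing again $CR\sqrt{\eps}$. Summing the two estimates gives the claimed bound.

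There is no real obstacle here: the inequality is a dimensional accounting identifying $\sqrt{\eps}$ as the correct interpolation scale between the regimes $\eps|\tau|\ll 1$ and $\eps|\tau|\gg 1$. Any alternative splitting at $|\tau|=\eps^{-\alpha}$ with $\alpha\in(0,1)$ would also work but yield a worse power; the choice $\alpha=1/2$ is optimal, which is why $\eps^{1/2}$ (and not a larger power) appears on the right-hand side.
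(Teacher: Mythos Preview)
Your argument is correct: the splitting at $|\tau|=\eps^{-1/2}$ together with the trivial bounds $\frac{\eps\tau}{1+\eps\tau}\leq \min\{\eps\tau,1\}$ and $\int_{1/\sqrt{\eps}}^\infty (1+\tau)^{-2}\,\ud{\tau}\leq \sqrt{\eps}$ immediately gives the stated bound. The paper states this lemma without proof, treating it as an elementary estimate, so there is nothing to compare against; your write-up supplies exactly the kind of routine verification the paper omits.

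One small remark on your closing comment: the exponent $\tfrac12$ is optimal only for \emph{this particular splitting method}, not for the integral itself. A slightly finer decomposition (say $[0,1]$, $[1,1/\eps]$, $[1/\eps,\infty)$) shows the integral is in fact $O(\eps\log(1/\eps))$, since on the middle interval the integrand behaves like $\eps/\tau$. So the sentence ``which is why $\eps^{1/2}$ (and not a larger power) appears on the right-hand side'' overstates the case; the paper simply records the bound it needs, not the sharpest one available. This does not affect the validity of your proof.
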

\begin{lemma}\label{lem:Lambda0}
	Recall $\Lambda_0$ introduced in \eqref{Lambda0def}. For $f\in  \Omega^{A,\delta}_{\tilde{R},\delta_1,R,\eps}$ we have:
	\begin{align*}
		|\Lambda_0[f](z,\tau,v)| \leq \frac{C(A)}{1+|\tau|^2}\left(\delta+ \frac{R\eps |\tau|}{1+|\eps \tau|}\right) \frac{e^{-\frac12 |v|}}{1+|z|}.
	\end{align*}
\end{lemma}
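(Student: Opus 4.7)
The plan is to reduce the claim to two separate ingredients: a pointwise bound on the time-frequency transform of $f$ coming from Lemma~\ref{lem:realFT}, and an explicit estimate on the $v'$-integral of $(M_1+M_2)(z,\cdot)$ over the unit ball. Since $n\geq 12$ we may apply Lemma~\ref{lem:realFT} with $\nu=0$ to obtain, for $f\in \Omega^{A,\delta}_{\tilde{R},\delta_1,R,\eps}$,
\begin{equation*}
\Bigl|\int_0^\infty e^{-is\tau}f(s,v-v')\ud{s}\Bigr|\leq \frac{C(A)\,e^{-\frac12|v-v'|}}{1+|\tau|^2}\Bigl(\delta+\frac{R\eps|\tau|}{1+\eps|\tau|}\Bigr).
\end{equation*}
Because $v'\in B_1(0)$, the reverse triangle inequality gives $e^{-\frac12|v-v'|}\leq e^{1/2}\,e^{-\frac12|v|}$, so this factor can be pulled out of the $v'$-integral.

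The next step is to control the matrix kernel. Using the explicit formulas \eqref{defM}, together with the elementary inequality $|1+z/|v'||\geq \tfrac{1}{\sqrt{2}}(1+|z|/|v'|)$ valid for $\Re(z)\geq 0$ (which follows from $1+|z|^2/|v'|^2\geq \tfrac12(1+|z|/|v'|)^2$), I get
\begin{equation*}
|M_1(z,v')|\leq \frac{C}{|v'|+|z|},\qquad |M_2(z,v')|\leq \frac{C|z|}{(|v'|+|z|)^2}\leq \frac{C}{|v'|+|z|}.
\end{equation*}
Hence $|(M_1+M_2)(z,v')|\leq C/(|v'|+|z|)$ on all of $\Reals^3\setminus\{0\}$.

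Combining these two bounds inside the definition \eqref{Lambda0def} of $\Lambda_0$,
\begin{equation*}
|\Lambda_0[f](z,\tau,v)|\leq \frac{C(A)\,e^{-\frac12|v|}}{1+|\tau|^2}\Bigl(\delta+\frac{R\eps|\tau|}{1+\eps|\tau|}\Bigr)\int_{B_1(0)}\frac{\ud{v'}}{|v'|+|z|}.
\end{equation*}
A radial computation yields
\begin{equation*}
\int_{B_1(0)}\frac{\ud{v'}}{|v'|+|z|}=4\pi\int_0^1 \frac{r^2}{r+|z|}\ud{r}\leq \frac{C}{1+|z|},
\end{equation*}
where the last bound is trivial for $|z|\leq 1$, and for $|z|\geq 1$ follows from $r^2/(r+|z|)\leq r^2/|z|$. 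Putting these three estimates together yields the claimed inequality.

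The only nontrivial aspect is the gain of the factor $1/(1+|z|)$, which comes from integrating $1/(|v'|+|z|)$ over the unit ball; this is a straightforward radial computation. The rest of the argument is purely mechanical, relying on the already-established Lemma~\ref{lem:realFT} and the explicit structure of $M_1,M_2$ in \eqref{defM}.
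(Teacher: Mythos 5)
Your proposal is correct and is essentially the detailed version of the paper's one-line proof, which also reduces to Lemma~\ref{lem:realFT} plus the explicit formulas \eqref{defM} for $M_1,M_2$. The one substantive point you flesh out — that the gain of the factor $1/(1+|z|)$ comes from the bound $|(M_1+M_2)(z,v')|\leq C/(|v'|+|z|)$ together with the radial integral over $B_1(0)$ — is exactly what "the explicit form of $M_1$ and $M_2$" is meant to supply, and your bookkeeping of the $\Re(z)\geq 0$ inequality, the exponential shift $e^{-\frac12|v-v'|}\leq e^{1/2}e^{-\frac12|v|}$, and the $A$-dependence of the constant are all correct.
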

\begin{proof}
	Follows from Lemma~\ref{lem:realFT} and the explicit form of $M_1$ and $M_2$ (cf. \eqref{defM}).
\end{proof}

Now we are in the position to finish the proof of Lemma~\ref{lem:nonlinhigh}.
We use the representation of $Q^{\alpha,\alpha}_{\eps,A}[f](u)$ in \eqref{eq:Qfrep} and split it into three parts:
\begin{equation} \label{eq:Qaasplit}
	\begin{aligned}
	(2\pi)^\frac12 Q^{\alpha,\alpha}_{\eps,A}[f](u)	
	=	&\int_{\Reals} \int_\Reals \int \langle \lambda   \La( \nabla D^\alpha u) (z), \Lambda[f](\eps z, 				\omega-\theta) \La(\nabla D^\alpha u)(p)\rangle \ud{v}\ud{\theta} \ud{\omega} \\
	+	&	\int_{\Reals} \int_\Reals \int \langle  \La(D^\alpha u) (z) \nabla (\lambda), \Lambda[f](\eps z, 			\omega-\theta) \La(\nabla D^\alpha u)(p)  \rangle \ud{v}\ud{\theta} 	\ud{\omega} \\
	-	&\int_{\Reals} \int_\Reals \int \langle  \nabla(\La( D^\alpha u) \lambda)(z), \nabla \cdot  \Lambda[f](\eps 			z, \omega-\theta) \La(D^\alpha u)(p) \rangle \ud{v}\ud{\theta} 	\ud{\omega}  \\
	=	& J_1 + I_3 +J_2 .
	\end{aligned}
\end{equation}	
	The proof is very similar to the proof of Lemma~4.9 in \cite{velazquez_non-markovian_2018}.
	We estimate the terms $J_1$, $I_3$, $J_2$ separately, starting with $J_1$ To this end, 
	we write $V= \nabla D^\alpha \La(u)$ and recall $\Lambda_0$ (cf. \eqref{Lambda0def}) and  $\Lambda_1$, $\Lambda_2$ introduced in \eqref{Lambdas}.  We symmetrize in $z,p$, making use of the
	fact that the integrals are real valued:
	\begin{align*}	
		J_1 &= \int \langle \lambda(v) V  ,\Lambda[f](\eps z, \omega-\theta)  V\rangle  \ud{v} \ud{\omega} \ud{\theta} \\
		& = \int \langle \lambda(v) V  ,\Lambda_0[f](\eps z, \omega-\theta,v)  V \rangle  \ud{v} \ud{\omega} \ud{\theta} + \int \langle \lambda(v) V  ,\Lambda_s[f](\eps z, \omega-\theta)  V\rangle  \ud{v} \ud{\omega} \ud{\theta}  .
	\end{align*}
	The first term can be bounded by the dissipation functional $D^\alpha_{\eps,A}$ (cf. \eqref{dissipexplic}) using Lemma~\ref{lem:Lambda0}. Choosing $\delta(A)>0$ and $\eps(\delta,R,A,c)>0$ small enough, \eqref{loggain} yields:
	\begin{align}
		 \left|\int \langle \lambda(v) V  ,\Lambda_0[f](\eps z, \omega-\theta,v)  V \rangle  \ud{v} \ud{\omega} \ud{\theta}\right|\leq \frac{c}{6} D^\alpha_{\eps,A} .
	\end{align}
	We write the second term in terms of  $\Lambda_1$, $\Lambda_2$ introduced in \eqref{Lambdas}:
	\begin{align*}
		&\int \langle \lambda V  ,\Lambda_s[f](\eps z, \omega-\theta)  V\rangle  \ud{v} \ud{\omega} \ud{\theta}  	\\
		= &\int \langle \lambda V , (\Lambda_1+\Lambda_2)[f](\eps z, \omega-\theta)  V\rangle  \ud{v} \ud{\omega} = I_1+ I_2.
	\end{align*}
	We use \eqref{eq:L1bd} and  Lemma~\ref{lem:realFT}  to estimate:	
	\begin{align*}
	|I_1| \leq 	&C \int_{\Reals} \int_\Reals \int \int \lambda 	\frac{|P_{(v-v')}^\perp V(z,v)||P_{(v-v')}^\perp V(p,v)|}{|v-v'|} \frac{(1+ \eps |\theta-\omega|)|\La(f)(i(\theta-\omega),v')|}{(1+\alpha(\eps z,v-v'))(1+ \alpha(\eps p,v-v'))} \notag \\
	\leq 	&C(A) 	\int_{\Reals} \int_\Reals \int  	\frac{\lambda |V(z,v)|_v |V(p,v)|_v}{(1+|v|)(1+\alpha(\eps z,v))(1+ \alpha(\eps p,v))}  \frac{R \eps |\theta-\omega|} {(1+ \eps|\theta-\omega|) (1+|\theta-\omega|)^2}  \\
	+ 		&C(A)  \int_{\Reals} \int_\Reals \int  	\frac{\lambda |V(z)|_v |V(p)|_v}{(1+|v|)(1+\alpha(\eps p,v))(1+ \alpha(\eps z,v))}  Y_{R,\eps,\delta}(\theta-\omega).
	\end{align*}
	Choosing $\delta(A)>0$ and $\eps(\delta,R,A,c)>0$ small enough, \eqref{loggain} yields:
	\begin{align} \label{est:I1}
		|I_1|\leq \frac{c}{6} D^\alpha_{\eps,A}.
	\end{align}
	The estimate for $I_2$ follows similarly: After splitting into
	\begin{align*}
			|I_2| 	\leq			&\left|\int_{\Reals} \int_\Reals \int \langle \lambda V(z,v) (z,v) N_1(\eps,z,p,v) V(p,v)
		\rangle  		\La(f)(i(\theta-\omega),v') 	\ud{v'}	\ud{v}\ud{\theta} 									\ud{\omega}	\right| 		 \\
		+		&\left|\int_{\Reals} \int_\Reals \int \langle \lambda V (z,v) (z,v), N_2(\eps,z,p,v) V(p,v) 		\rangle \La(f)(i(\theta-\omega),v') 	\ud{v'} \ud{v}\ud{\theta} 									\ud{\omega}	\right| 		= I_{2,1} + I_{2,2}, 
	\end{align*}
	we can estimate $I_{2,1}$ using Lemma~\ref{N1lemma} and \eqref{loggain}, and $I_{2,2}$ using
	\eqref{eq:N2bd} and \eqref{loggain}. Therefore, we obtain $|I_2|\leq \frac{c}{6} D^\alpha_{\eps,A}$, and in combination with \eqref{est:I1} this yields:
	\begin{align} \label{eq:J1}
		|J_1| \leq \frac{c}{2} D^\alpha_{\eps,A}.
	\end{align}
	Next we estimate $I_3$. After an integrating by parts (we use the shorthand $W(z,v)=\La( D^\alpha u)(z,v)$) the term reads:
	\begin{align*} 
	I_3 =	&	-\int_{\Reals} \int_\Reals \int \langle W(z,v) \nabla^2 (\lambda), \Lambda(\eps z,\omega-\theta,v) W(p,v)  \rangle \ud{v}\ud{\theta} 	\ud{\omega}\\
	&-\int_{\Reals} \int_\Reals \int \langle  W(z,v)  \nabla (\lambda), \nabla \cdot\Lambda(\eps z,\omega-\theta,v) W(p,v)  \rangle \ud{v}\ud{\theta} 							\ud{\omega} \\				
	&	-\int_{\Reals} \int_\Reals \int \langle \nabla W(z,v) \otimes \nabla (\lambda), \Lambda(\eps z,\omega-\theta,v) W(p,v)  \rangle \ud{v}\ud{\theta} 	\ud{\omega} . 
	\end{align*}
	The first two lines are bounded by $\frac14 \|u\|^2_{V^n_{A,\lambda}}$ 
	by Lemma~\ref{lem:higherterms}. For the estimate of the last line we use $\lambda(v)=P_v \lambda(v)$. Then we recall
	the definition of $\Lambda$ (cf. \eqref{Lambdadef}) and apply \eqref{matrixest2} and Lemma~\ref{lem:realFT} to get:
	\begin{align}
		\left| \nabla \lambda \Lambda[f](\eps z,\tau,v)\right| \leq C(A) \frac{(1+\alpha(z,v)) B_2(z,v)(V,P_v (\nabla \lambda))}{(1+|\tau|^2)} (\delta+\frac{R\eps |\tau|}{(1+\eps|\tau|)}).
	\end{align}
	With Young's inequality and choosing $\eps,\delta>0$ small enough, we estimate $I_3$ by:
	\begin{equation} \label{eq:I3}
	\begin{aligned}
		|I_3|	&\leq \frac12 \|u\|^2_{V^n_{A,\lambda}} + \left|\int_{\Reals} \int_\Reals \int \langle \nabla W(z,v) \otimes \nabla (\lambda), \Lambda(\eps z,\omega-\theta,v) W(p,v)  \rangle \ud{v}\ud{\theta} 	\ud{\omega} \right| \\
				&\leq \frac{c}{4}D^\alpha_{\eps,A} + \frac12 \|u\|^2_{V^n_{A,\lambda}}. 
	\end{aligned}
	\end{equation}
	To bound $J_2$ we  apply Lemma~\ref{lem:higherterms} to obtain:
	\begin{align}\label{eq:J2}
		|J_2| &\leq \frac{c}{4}D^\alpha_{\eps,A} + \frac12 \|u\|^2_{V^n_{A,\lambda}}. 
	\end{align}
	Finally, we return to the decomposition \eqref{eq:Qaasplit}. From the estimates
	\eqref{eq:J2}, \eqref{eq:I3} and \eqref{eq:J1} we get the estimate:
	\begin{align*}
		(2\pi)^\frac12 Q^{\alpha,\alpha}_{\eps,A}[f](u) \leq cD^\alpha_{\eps,A} +  \|u\|^2_{V^n_{A,\lambda}},
	\end{align*}
	as claimed in the statement of the Lemma.
\end{proofof}

\subsection{Proof of Lemma~\ref{thm:invariance}}
\begin{proofof}[Proof of Lemma~\ref{thm:invariance}]
The proof presented here follows along the same lines as the proof of Theorem~4.10 in \cite{velazquez_non-markovian_2018}.

\noindent \textbf{Step 1: Picking $A,\delta>0$}	

We start by picking $A,\delta,\delta_0>0$ as in Theorem~\ref{thm:apriori}. 
Then for all $R,\tilde{R}>0$, $\delta_1\in(0,\delta_0)$, for  $\eps\in (0,\eps_0)$ small enough and $\gamma,\delta_2 \in(0,\frac12)$ arbitrary the mapping $\psi_{\delta_1}$ satisfies
 \begin{align*} 
 \|\Psi_{\delta_1} (f)\|_{V^{n}_{A,\lambda}} &\leq 1, \quad 	\|\partial_t \Psi_{\delta_1} (f)\|_{V^{n-2}_{A,\lambda}} \leq 1.
 \end{align*}
 By definition of $\kappa_{\delta_1}$, the mapping $\psi_{\delta_1}$ maps into $\Gamma^{n}_{A,\delta_1}$. We will pick the remaining coefficients in the
 order $\tilde{R},\delta_1, R$, and finally $\delta_2,\eps$ small enough to
 show that $\Omega^{A,\delta}_{\tilde{R},\delta_1,R,\eps}$ is left invariant by $\psi_{\delta_1}$.
The invariance is proved using equation \eqref{eq:fixedmollified} in Laplace variables (dropping the smoothing parameter $\gamma>0$ for brevity):
\begin{equation} \label{eq:bootstrap}
\begin{aligned}	
 \La(\partial_t u_\eps ) =z &\La ( u_\eps-u_0 ) = 	  \nabla \cdot \left( \int (M_1+M_2)(\eps z,v-v') u_0(v')  \nabla \La(u_\eps \kappa_1)(z,v) \ud{v'} \right) \\
-&  \nabla \cdot \left( \int  (M_1+M_2)(\eps z,v-v')\nabla u_0(v')   \La( u_\eps \kappa_1)(z,v) \ud{v'} \right) \\
+&  \nabla \cdot \left( \int (M_1+M_2)(\eps z,v-v') u_0(v')   \La\left(f(\cdot,v')  \nabla u_\eps(\cdot,v) \kappa_1\right)(z) \ud{v'}\right)  \\
-&  \nabla \cdot \left( \int  (M_1+M_2)(\eps z,v-v')\La\left(\nabla f(\cdot,v') u_\eps(\cdot,v)\kappa_1\right)(z) \ud{v'}  \right).
\end{aligned}
\end{equation}
Note that we can localize $u_\eps \kappa_1$ in time using the Volterra structure of the equation. 
We will make use of the equation above to bootstrap the apriori estimate on $\La(u_\eps)$ to pointwise estimates and thus show invariance of the set $\Omega^{A,\delta}_{\tilde{R},\delta_1,R,\eps}$ under $\psi_{\delta_1}$. 
	For our choice of $A,\delta>0$ we have:
	\begin{align*}
		\|u_\eps\|_{V^n_{A,\lambda}}+\|\partial_t u_\eps\|_{V^{n-2}_{A,\lambda}} \leq C.
	\end{align*}
	Using the compact support of $f \in \Omega^{A,\delta}_{\tilde{R},\delta_1,R,\eps}$ and
	$u_\eps \kappa_1$ we conclude, possibly changing $C(A,\delta)$:
	\begin{align*}
	\|f\|_{V^n_{0,\lambda}} +\|\partial_t f\|_{V^{n-2}_{0,\lambda}} \leq C(A,\delta),\quad  \|u_\eps \kappa_1\|_{V^n_{0,\lambda}} +\|\partial_t (u_\eps \kappa_1)\|_{V^{n-2}_{0,\lambda}} \leq C(A,\delta).
	\end{align*} 
	In particular we can bound:
	\begin{align*}
	\|\|\La(f)(z,\cdot)\|_{H^{n-2}}\|_{L^1_{\Re(z)=0}}\leq C(A,\delta), \quad  \|\|\La((u_\eps-u_0)\kappa_1)(z,\cdot)\|_{H^{n-2}}\|_{L^1_{\Re(z)=0}}\leq C(A,\delta).
	\end{align*}
	Using the Laplace representation \eqref{eq:bootstrap} we conclude:
	\begin{align} \label{eq:L1Linfty}
	|\|\La((u_\eps-u_0)\kappa_1)(z,\cdot)\|_{H^{n-2}}\|_{L^1_{\Re(z)=0}}+\|\|\La((u_\eps-u_0)\kappa_1)(z,\cdot)\|_{H^{n-2}}\|_{L^\infty_{\Re(z)=0}}\leq C(A,\delta).
	\end{align}

\noindent \textbf{Step 2: Picking $\tilde{R}>0$}	

	With the notation introduced in \eqref{eq:convolution} we have
	\begin{align*}
		\La\left(f(\cdot,v')  u_\eps(\cdot,v) \kappa_1\right)(z) = \left(\La(f)(\cdot,v') *_a \La (u_\eps\kappa_1)(\cdot,v)\right)(z).
	\end{align*}
	Now plugging \eqref{eq:L1Linfty} into \eqref{eq:bootstrap} and applying Young's inequality yields:
	\begin{align*}
		 \sup_{t\in[0,1]}\|\partial_t  (u_\eps-u_0) \|_{H^{n-4}_{\tilde{\lambda}}}\leq C(A,\delta).
	\end{align*}
	Now since $u_\eps(0,\cdot)-u_0(\cdot) =0$, we can pick $\tilde{R}$ such that:
	\begin{align} \label{eq:Rtilde}
		 \sup_{t\in[0,1]}\|\partial_t  \psi_{\delta_1}(f)(t,\cdot)\|_{H^{n-4}_{\tilde{\lambda}}}\leq \tilde{R}.
	\end{align}

\noindent \textbf{Step 3: Picking $\delta_1,R>0$}	

We now define the functions $p_\eps$, $q_\eps$ given by:
\begin{equation} \label{eq:decomposition}
\begin{aligned}
\partial_t p_\eps = &\nabla \cdot \left( \int_0^t \int \frac{e^{-s/\eps}P_{v'}^\perp}{\eps} (u_0+f)(t-s,v-v') \nabla u_\eps(t-s,v) \ud{v'} \ud{s}		 \right) \\
-&\nabla \cdot \left( \int_0^t \int\frac{e^{-s/\eps}P_{v'}^\perp}{\eps} \nabla(u_0+f)(t-s,v-v')u_\eps(t-s,v)  \ud{v'}\ud{s}		 \right),\quad p_\eps(0)=u_0, \\
z\La(q_\eps) =&\nabla \cdot \left(  \int M_2(\eps z,v')  \La(u_0+f)(\cdot,v-v')\nabla u_\eps(\cdot,v))(z) \ud{v'} 		 \right) \\
-&\nabla \cdot \left(  \int M_2(\eps z,v') \nabla \La(u_0+f)(\cdot,v-v')u_\eps(\cdot,v))(z)  \ud{v'}	 \right).
\end{aligned}  
\end{equation}
Making use of the explicit Laplace transform:
\begin{align*}
\frac{\pi^2}{4 \eps}\La(e^{-t/\eps}) P_v^\perp = M_1(\eps z,v), 
\end{align*}
we infer the identity 
\begin{align}
	u_\eps = p_\eps + q_\eps.
\end{align} 
Further, we observe that plugging the estimate \eqref{eq:L1Linfty} into \eqref{eq:bootstrap}
yields: 
\begin{align} \label{eq:R1}
|\nabla^m\La((u_\eps-u_0)\kappa_{\delta_1})|&\leq \frac{C(A,\delta_1)e^{-\frac12|v|}}{|1+\eps z|(1+|z|^2)}, \quad m\leq n-6 \\
|\nabla^m\La(q_\eps \kappa_{\delta_1})|&\leq \frac{C(A,\delta_1)\eps|z|e^{-\frac12|v|} }{|1+\eps z|(1+|z|^2)}, \quad m\leq n-6  \label{eq:q}	.
\end{align}
In particular we can estimate:
\begin{align}
	|\partial_t \nabla^ m q_\eps| &\leq C e^{-\frac12 |v|} .
\end{align}
To estimate $p_\eps$, we decompose it further. To this end, let $b$ be the function given by:
	\begin{align*}
		b(t,r):= \frac{e^{-tr}}{r^2}  + \frac{t}{r} - \frac{1}{r^2},
	\end{align*}
	which satisfies the equation
	\begin{equation}
		\partial_t b(t,r) = \frac{1-e^{-rt}}{r}, \quad \partial_{tt} b(t,r) = e^{-rt}, \quad b(0,v) = 0 .
	\end{equation}
	Now for any $u_0 \in H^n_{\lambda}$, we define a boundary layer function $B(t,v;u_0)= \nabla \cdot B_F(t,v;u_0)$ by:
	\begin{equation} \label{defBL}
	\begin{aligned}
	B_F(t,v;u_0) :=\int  \frac{\pi^2}{4}\frac{b(t,\frac{|v'|}{\eps})P_{v'}^\perp}{\eps} 
	\left( u_0(v-v') \nabla u_0(v)  -  \nabla u_0(v-v')  \ u_0(v) \right) \ud{v'}.
	\end{aligned}
	\end{equation}
	By construction, $B=\nabla \cdot B_F$ then satisfies:
	\begin{align*}
		\partial_{tt} B(t,v) &=   \nabla \cdot \left(\int \frac{\pi^2}{4} \frac{e^{-\frac{t|v'|}{\eps}}P_{v'}^\perp}{\eps}  \left( u_0(v-v')  \nabla 									u_0(v)
		-    \nabla u_0(v-v')  u_0(v)\right) \ud{v'} \right), \\
		B(0,v)	&=0  \quad  \partial_t B(0,v)=0.
	\end{align*}
	If $u_0=m(v)$ is a Maxwellian distribution, the boundary layer $B$ vanishes.
	To see this, we note that $\nabla m(v) = - \frac{v}{\sigma^2} m(v)$, hence
	\begin{align*}
		P_{v'}^\perp \left(  m(v-v')\nabla m(v) - \nabla m(v-v')   m(v)\right) 
		=-	P_{v'}^\perp  \frac{v'}{\sigma^2}  m(v-v')m(v) = 0,  
	\end{align*}
	and 
	\begin{equation} \label{eq:maxboundary}
	B(t,v;m) = 0.
	\end{equation}	
	Computing the second time derivative of $p_\eps$, we find:
	\begin{align*}
	\partial_{tt} p_\eps 	= 	& \nabla \cdot \left(\int_0^t \int\frac{\pi^2}{4}  \frac{e^{-\frac{s|v'|}{\eps}}P_{v'}^\perp}{\eps } \partial_t((u_0+f)(t-s,v-v')  \nabla 									u_\eps(t-s,v)) \ud{v'} \ud{s} \right) \\
	-  	& \nabla \cdot \left(\int_0^t \int \frac{\pi^2}{4} \frac{e^{-\frac{s|v'|}{\eps}}P_{v'}^\perp}{\eps } \partial_t(\nabla (u_0+f)(t-s,v-v'))  u_\eps(t-s,v) \ud{v'} \ud{s} \right) \\
	+	& \nabla \cdot \left(\int \frac{\pi^2}{4} \frac{e^{-\frac{t|v'|}{\eps}}P_{v'}^\perp}{\eps}  \left( u_0(v-v') \nabla 						u_0(v) - \nabla u_0(v-v')    u_0(v)\right) \ud{v'} \right) \\
	=  &R_1 + R_2 + \partial_{tt} B .
	\end{align*}
	Since $\sup_{t\in[0,1]}    \|\partial_t  f(t,\cdot)\|_{H^{n-4}_{\tilde{\lambda}}}\leq \tilde{R}$ by assumption, we obtain for $m\leq n-6$:
	\begin{align*}
	|\partial_{tt} \nabla^m(p_\eps - B)(t,v)|=|\nabla^m(R_1(t,v) +R_2(t,v))| \leq C \tilde{R} e^{-\frac12 |v|},\quad \text{for $t\in[0,1]$},
	\end{align*}
	Combined with the lemma above this shows:
	\begin{align*}
	|\partial_{tt} (\nabla^m(p_\eps - B) \kappa_{\delta_1})| \leq C \tilde{R} e^{-\frac12 |v|} (1 + \frac{t}{\delta_1} + \frac{t^2}{\delta_1^2}) \kappa_{\delta_1},\quad \text{for $t\in[0,1]$}. 
	\end{align*}
	After integrating by parts twice this allows to bound the Laplace transform by:
	\begin{align*}
	|\La(\nabla^m(p_\eps - B) \kappa_{\delta_1})(z,v)| 	\leq \frac{C \tilde{R} e^{-\frac12 |v|}}{|z|^2} \int_0^\infty C e^{-\frac12 |v|} (1 + \frac{t}{\delta_1} + \frac{t^2}{\delta_1^2}) \kappa_{\delta_1} \ud{t}	\leq \frac{C \tilde{R} e^{-\frac12 |v|}}{|z|^2} \delta_1.
	\end{align*}
	This on the other hand implies that on the line $\Re(z)=a$ we have:
	\begin{align*}
	|\La(\nabla^m (p_\eps - B) \kappa_{\delta_1})(z,v)| &\leq \frac{\delta_1 C \tilde{R} e^{-\frac12 |v|}}{1+|z|^2}, \quad \text{for }m\leq n-6.
	\end{align*}
	Now we first pick $\delta_1>0$ such that: 
	\begin{align} \label{eq:p1}
		|\La(\nabla^m (p_\eps - B) \kappa_{\delta_1})(z,v)| &\leq \frac12 \frac{\delta e^{-\frac12 |v|}}{1+|z|^2}, \quad \text{for }m\leq n-6,
	\end{align}
	and secondly $R>0$ depending on $\delta_1$ as the largest constant appearing on the right-hand sides of \eqref{eq:R1}-\eqref{eq:q}.  

	\noindent \textbf{Step 4: Picking $\delta_2>0$}	
	
	Finally, we consider the explicitly given function $B$. Since we consider initial data of the form $u_0=m+\delta_2 v_0$, we can make use of the fact that the Maxwellian does not contribute to the boundary layer (cf. \eqref{eq:maxboundary}) to obtain:
	\begin{align*} 
	|\La(\nabla^m B \kappa_{\delta_1})(z,v)| \leq C(\delta_1) \frac{\delta_2 e^{-\frac12 |v|}}{1+|z|^2}, \quad \text{for }m\leq n-6. 
	\end{align*}
	Now we choose $\delta_2>0$ small enough such that
	\begin{align} \label{eq:p2}
		|\La(\nabla^m B \kappa_{\delta_1})(z,v)|\leq \frac12  \frac{\delta e^{-\frac12 |v|}}{1+|z|^2}, \quad \text{for }m\leq n-6. 
	\end{align}
	Then we decompose $\psi_{\delta_1}(f)$ into the functions:
	\begin{align*}
		u_\eps 		&= u_{\eps,1} + u_{\eps,2} \\
		u_{\eps,1} 	&= p_\eps \kappa_{\delta_1} ,\quad u_{\eps,2} = q_\eps \kappa_{\delta_1} .
	\end{align*}
	Then for the choice of $A,\delta,\tilde{R},\delta_1,R,\delta_2>0$ the estimates
	\eqref{eq:p1}-\eqref{eq:p2} show 
	\begin{align*}
		\|u_{\eps,1}\|_{E_A} \leq \delta.
	\end{align*}
	Furthermore, we picked $R$ as the largest constant in \eqref{eq:R1}-\eqref{eq:q}, hence
	\begin{align}
		\|\psi_{\delta_1}(f)\|_{G_A}	&\leq R.\\
		\|u_{\eps,2}\|_{F_{\eps,A}} 		&\leq R.
	\end{align}	
	Since we chose $\tilde{R}>0$ to satisfy \eqref{eq:Rtilde} we conclude $\psi_{\delta_1}(f)\in \Omega^{A,\delta}_{\tilde{R},\delta_1,R,\eps}$.
\end{proofof}


\textbf{Acknowledgment.}
The author acknowledges support through the CRC 1060
\textit{The mathematics of emergent effects}
at the University of Bonn that is funded through the German Science
Foundation (DFG).

\bibliography{BBGKY_Landau_2019}

\begin{thebibliography}{10}

\bibitem{alexandre_landau_2004}
R.~Alexandre and C.~Villani.
\newblock On the {Landau} approximation in plasma physics.
\newblock {\em Ann. Inst. H. Poincaré Anal. Non Linéaire}, 21(1):61--95,
  2004.

\bibitem{bobylev_particle_2013}
A.~Bobylev, M.~Pulvirenti, and C.~Saffirio.
\newblock From particle systems to the {Landau} equation: a consistency result.
\newblock {\em Comm. Math. Phys.}, 319(3):683--702, 2013.

\bibitem{desvillettes_spatially_2000}
L.~Desvillettes and C.~Villani.
\newblock On the spatially homogeneous {Landau} equation for hard potentials.
  {I}. {Existence}, uniqueness and smoothness.
\newblock {\em Comm. Partial Differential Equations}, 25(1-2):179--259, 2000.

\bibitem{desvillettes_spatially_2000-1}
L.~Desvillettes and C.~Villani.
\newblock On the spatially homogeneous {Landau} equation for hard potentials.
  {II}. {${H}$}-theorem and applications.
\newblock {\em Comm. Partial Differential Equations}, 25(1-2):261--298, 2000.

\bibitem{guo_landau_2002}
Y.~Guo.
\newblock The {Landau} equation in a periodic box.
\newblock {\em Comm. Math. Phys.}, 231(3):391--434, 2002.

\bibitem{landau_kinetische_1936}
L.~Landau.
\newblock Die kinetische {Gleichung} für den {Fall} {Coulombscher}
  {Wechselwirkung}.
\newblock {\em Phys. Zs. Sow. Union}, 10(154), 1936.

\bibitem{lifshitz_course_1981}
E.~Lifshitz and L.~Pitaevskii.
\newblock {\em Course of {Theoretical} {Physics}}.
\newblock Pergamon Press, Oxford, 1981.

\bibitem{silvestre_upper_2017}
L.~Silvestre.
\newblock Upper bounds for parabolic equations and the {Landau} equation.
\newblock {\em J. Differential Equations}, 262(3):3034--3055, 2017.

\bibitem{spohn_kinetic_1980}
H.~Spohn.
\newblock Kinetic equations from {Hamiltonian} dynamics: {Markovian} limits.
\newblock {\em Rev. Modern Phys.}, 52(3):569--615, 1980.

\bibitem{velazquez_non-markovian_2018}
J.~Velázquez and R.~Winter.
\newblock From a {Non}-{Markovian} {System} to the {Landau} {Equation}.
\newblock {\em Commun. Math. Phys.}, pages 1--49, January 2018.

\bibitem{villani_new_1998}
C.~Villani.
\newblock On a new class of weak solutions to the spatially homogeneous
  {Boltzmann} and {Landau} equations.
\newblock {\em Arch. Rational Mech. Anal.}, 143(3):273--307, 1998.

\bibitem{villani_spatially_1998}
C.~Villani.
\newblock On the spatially homogeneous {Landau} equation for {Maxwellian}
  molecules.
\newblock {\em Math. Models Methods Appl. Sci.}, 8(6):957--983, 1998.

\end{thebibliography}
\bibliographystyle{plain} 

%
%



\end{document}